\newtheorem{theorem}{Theorem}
\newtheorem{lemma}{Lemma}[section]
\newtheorem{corollary}[lemma]{Corollary}
\newtheorem{definition}[lemma]{Definition}
\newcommand{\redcomment}[1]{\iftrue\color{red} #1 \color{black}\fi}
\newcommand{\bluecomment}[1]{\iftrue\color{blue} #1 \color{black}\fi}
\newcommand{\R}{\mathbb{R}}
\newcommand{\E}{\mathbf{E}}
\newcommand{\eps}{\epsilon}
\newcommand{\h}{\hvec}
\newcommand{\ds}{\displaystyle}
\newcommand{\deriv}{\partial}
\newcommand{\derivset}{\{0,1,2\}}
\newcommand{\derivki}[2]{\frac{\deriv^{#1}}{\deriv h_{#2}^{#1}}}
\newcommand{\fulllattice}{{\Lambda^\infty}}
\newcommand{\regionL}{{\Lambda^L}}
\newcommand{\Lbox}{\Lambda^L}
\newcommand{\hamil}{H}
\newcommand{\Hcost}[1]{\mathcal{H}(#1)}
\newcommand{\Hcostbar}[1]{\mathcal{H}(#1)}%
\newcommand{\contourcost}[1]{\|#1\|_{\beta,\hvec}}
\newcommand{\conbad}{\mathcal{S}}
\newcommand{\conbadbar}{\overline{\mathcal{S}}}
\newcommand{\congood}{\mathcal{O}_{\beta}}
\newcommand{\congoodbar}{\overline{\mathcal{O}_{\beta}}}
\newcommand{\reggood}[2]{R^{#1}_{#2}} %
\newcommand{\nvecp}{\mathbf{n}'}
\newcommand{\nvecbeta}{\mathbf{n}^{\beta}}
\newcommand{\npj}[1]{n'_{#1}} %
\newcommand{\nbetaj}[1]{n_{#1}^{\beta}} %
\newcommand{\nfixedvec}{{\mathbf{n}}}
\newcommand{\nj}[1]{n_{#1}} %
\newcommand{\hvec}{{\mathbf{h}}}
\newcommand{\hvecopt}{{\mathbf{h}^*_{\beta}}}
\newcommand{\hoptj}[1]{h^*_{\beta, #1}} %
\newcommand{\mvec}{\mathbf{m}}
\newcommand{\rhovec}{{\boldsymbol{\rho}}}
\newcommand{\rhovecp}{{\boldsymbol{\rho}'}}
\newcommand{\rhovecbeta}{\boldsymbol{\rho}^{\langle\beta\rangle}}
\newcommand{\rhop}[1]{\rho'_{#1}} %
\newcommand{\rhobeta}[1]{\rho_{#1}^{\langle\beta\rangle}} %
\newcommand{\rhomin}{\rho_{\mathrm{min}}}
\newcommand{\rhopmin}{\rho'_{\mathrm{min}}}
\newcommand{\rhobetamin}{\rho^{\langle\beta\rangle}_{\mathrm{min}}}
\newcommand{\nowrapsymbol}{\times}
\newcommand{\ZFbase}{Z}
\newcommand{\ZFbd}[1]{\ZFbase_{#1}} %
\newcommand{\ZFtr}[1]{\widehat{\ZFbase}_{#1}} %
\newcommand{\ZFf}[1]{\widetilde{\ZFbase}_{*,#1}} %
\newcommand{\ZPbase}{Z} %
\newcommand{\ZPbd}[2]{\ZPbase^{#1}_{#2}} %
\newcommand{\ZPtr}[2]{\widehat{\ZPbase}^{#1}_{#2}} %
\newcommand{\ZPbdf}[3]{\widetilde{\ZPbase}^{#1}_{#2,#3}} %
\newcommand{\ZPf}[2]{\widetilde{\ZPbase}^{#1}_{#2}} %
\newcommand{\ZPcf}[3]{\widetilde{\ZPbase}^{#1}_{#2,#3}} %
\newcommand{\ZPc}[2]{{\ZPbase}^{#1}_{#2}} %
\newcommand{\ZPSbase}{Z} %
\newcommand{\ZPSbd}[2]{\ZPSbase^{\nowrapsymbol,#1}_{#2}} %
\newcommand{\ZPStr}[2]{\widehat{\ZPSbase}^{\nowrapsymbol,#1}_{#2}} %
\newcommand{\ZPSbdf}[3]{\widetilde{\ZPSbase}^{\nowrapsymbol,#1}_{#2,#3}} %
\newcommand{\CFGF}{\Omega}
\newcommand{\CFGFint}[2]{\CFGF_{#1}(#2)} %
\newcommand{\CFGPint}[3]{\CFGF^{#1}_{#2}(#3)} %
\newcommand{\CFGPintf}[4]{\CFGF^{#1}_{#2,#3}(#4)} %
\newcommand{\CFGPc}[2]{\CFGF^{#1}_{#2}} %
\newcommand{\CFGPcf}[3]{\widetilde{\CFGF}^{#1}_{#2,#3}} %
\newcommand{\CFGPf}[2]{\widetilde{\CFGF}^{#1}_{#2}} %
\newcommand{\CFGPSint}[3]{\CFGF^{\nowrapsymbol,#1}_{#2}(#3)} %
\newcommand{\pressuretr}[1]{\widehat{\psi}_{#1}} %
\newcommand{\partialext}{\partial_{\mathrm{ext}}}
\newcommand{\partialint}{\partial_{\mathrm{int}}}
\newcommand{\magrvi}[1]{M_{#1,j}}
\newcommand{\magrv}{M_{\cdot,j}}
\newcommand{\contour}{\gamma}
\newcommand{\contourbar}{\overline{\contour}}
\newcommand{\contourbarp}{\overline{\contour'}}
\newcommand{\contourset}{\Gamma}
\newcommand{\contoursetbar}{\overline{\contourset}}
\newcommand{\contoursetext}{\Gamma_{\mathrm{ext}}}
\newcommand{\cluster}{X}
\newcommand{\clusterbar}{\overline{\cluster}}
\newcommand{\confinal}{\Gamma}
\newcommand{\confinalbar}{\overline{\confinal}}
\newcommand{\contourcfgs}{\mathcal{P}}
\newcommand{\contourcfgsperbdv}[3]{\mathcal{P}^{\nowrapsymbol, #1}_{#2}(#3)} %
\newcommand{\allcontours}{\mathcal{C}}
\newcommand{\allcontoursbd}[1]{\mathcal{C}_{#1}} %
\newcommand{\allcontoursperbdv}[3]{\mathcal{C}^{\nowrapsymbol, #1}_{#2}(#3)} %
\newcommand{\clusterset}{\mathcal{X}}
\newcommand{\clustersetbd}[1]{\mathcal{X}_{#1}} %
\newcommand{\clustersetbdv}[2]{\mathcal{X}_{#1}(#2)} %
\newcommand{\clustersetperbdv}[3]{\mathcal{X}^{\nowrapsymbol, #1}_{#2}(#3)} %
\newcommand{\wutr}[1]{w_{#1}} %
\newcommand{\wtr}[1]{\widehat{w}_{#1}} %
\newcommand{\wtrb}[1]{\overline{w}_{#1}} %
\newcommand{\clw}{\Psi} %
\newcommand{\clwbd}[1]{\Psi_{#1}} %
\newcommand{\clwtr}[1]{\widehat{\Psi}_{#1}} %
\newcommand{\clwtrb}[1]{\overline{\Psi}_{#1}} %
\newcommand{\dclwtr}[1]{\derivki{k}{i} \widehat{\Psi}_{#1}} %
\newcommand{\intr}[1]{\mathrm{Int}_{#1}}
\newcommand{\extr}{\mathrm{Ext}}
\newcommand{\taub}{\tau_\beta}
\newcommand{\ursell}{\varphi}
\newcommand{\minlength}{\ell_0}
\newcommand{\contourconst}{\Delta}
\newcommand{\costmatrixmax}{A_{\mathrm{max}}}
\newcommand{\costmatrixmin}{A_{\mathrm{min}}}
\newcommand{\hetcost}{\costmatrixmin}
\newcommand{\numcolors}{q}
\newcommand{\colorset}{[\numcolors]}
\newcommand{\infconfig}{\sigma}
\newcommand{\infassg}{\omega}
\newcommand{\contourassg}[1]{\infconfig_{#1}}
\newcommand{\config}{\sigma}
\newcommand{\compl}{c}
\newcommand{\meanball}{\mathcal{M}_t}
\newcommand{\betamin}{\beta_1} %
\newcommand{\betaminA}{\beta_2}
\newcommand{\betaminB}{\beta_3}
\newcommand{\betaminF}{\beta_0} %
\newcommand{\alphaA}{{\alpha'}}
\newcommand{\alphaB}{{\alpha''}}
\newcommand{\Lmin}{L_1}
\newcommand{\LminF}{L_0}
\newcommand{\numparticles}{L^2}
\newcommand{\optconfig}{\sigma^*}
\newcommand{\optconfign}[1]{\sigma^{L}_{#1}} %
\newcommand{\dtbnh}{\pi_{\beta, \hvec}}
\newcommand{\dtbnf}{\widetilde{\pi}_{\beta, \rhovec}} %
\newcommand{\hoptbound}{3Qe^{-\taub \minlength/4}}
\newcommand{\smallchangesbound}{7\beta\costmatrixmin \sqrt{\numcolors}}
\newcommand{\poly}{\mathrm{poly}}
\newcommand{\bdconstupp}{b^\top}
\newcommand{\bdconstlow}{b^\bot}
\newcommand{\Hconstupp}{b_H^\top}
\newcommand{\Hconstlow}{b_H^\bot}
\newcommand{\hoptopenset}{\mathcal{W}_{\beta}}
\DeclarePairedDelimiter\ceil{\lceil}{\rceil}
\DeclarePairedDelimiter\floor{\lfloor}{\rfloor}
\newcommand{\numberthis}[1]{\addtocounter{equation}{1}\tag{\theequation}\label{#1}}
\author{
Jacob Calvert \orcidlink{0000-0001-9173-0946}\\ 
\href{mailto:calvert@gatech.edu}{\texttt{calvert@gatech.edu}}
\\
{Institute for Data Engineering and Science, Georgia Institute of Technology, USA}
\and
Shunhao Oh\footnote{Corresponding author.} ~\orcidlink{0009-0002-1328-0040} \\ 
\href{mailto:ohoh@gatech.edu}{\texttt{ohoh@gatech.edu}}\\
{School of Computer Science, Georgia Institute of Technology, USA}
\and
Dana Randall \orcidlink{0000-0002-1152-2627}\\ 
\href{mailto:randall@cc.gatech.edu}{\texttt{randall@cc.gatech.edu}}
\\
{School of Computer Science, Georgia Institute of Technology, USA}
}
\title{The Geometry of Fixed-Magnetization Spin Systems\\ at Low Temperature}
\date{}
\begin{document}
\maketitle
\thispagestyle{empty}

\begin{abstract}
Spin systems are fundamental models of statistical physics that provide insight into collective behavior across scientific domains. Their interest to computer science stems in part from the deep connection between the phase transitions they exhibit and the computational complexity of sampling from the probability distributions they describe. Our focus is on the geometry of spin configurations, motivated by applications to programmable matter and computational biology. Rigorous results in this vein are scarce because the natural setting of these applications is the low-temperature, fixed-magnetization regime. The latter means that the numbers of each type of spin are held constant, a global constraint which makes analysis notoriously difficult. Recent progress in this regime is largely limited to spin systems under which magnetization concentrates, which enables the analysis to be reduced to that of the simpler, variable-magnetization case. More complicated models, like those that arise in applications, do not share this property. 

We study the geometry of spin configurations on the triangular lattice under the Generalized Potts Model (GPM). The GPM generalizes many fundamental models of statistical physics, including the Ising, Potts, clock, and Blume--Capel models. It moreover specializes to models used to program active matter to solve tasks like compression and separation, and it is closely related to the Cellular Potts Model, which is widely used in computational models of biological processes. Our main result shows that, under the fixed-magnetization GPM at low temperature, spins of different types are typically partitioned into regions of mostly one type, separated by boundaries that have nearly minimal perimeter. The proof uses techniques from Pirogov--Sinai theory to extend a classic Peierls argument for the fixed-magnetization Ising model, and introduces a new approach for comparing the partition functions of fixed- and variable-magnetization models. The new technique identifies a special class of spin configurations that contribute comparably to the two partition functions, which then serves as a bridge between the fixed- and variable-magnetization settings.
\end{abstract}

\newpage
\setcounter{page}{1}
\section{Introduction}

Spin systems from statistical physics are powerful tools for understanding the thermodynamics of interacting particle systems. %
The canonical example is the Ising model on a graph $G = (V,E)$, which concerns a probability distribution on configurations $\Omega = \{-1,1\}^V$ of {\em spins} on vertices. The probability of a configuration $\sigma \in \Omega$ is determined by its {\it Hamiltonian}, or energy, defined as
\[
H(\sigma)=-\sum_{(u,v) \in E} \sigma_u \sigma_v.
\]
In particular, $H(\sigma)$ is smaller when vertices with the same spin cluster together more. The {\em Gibbs distribution} assigns the configuration $\sigma$ a probability
\[
\pi(\sigma)=\frac{e^{-\beta H(\sigma)}}{Z},
\]
in terms of inverse temperature $\beta$ and normalization constant $Z=\sum_{\tau \in \Omega} e^{-\beta H(\tau)}$. The {\em fixed-magnetization} Ising model has the same Hamiltonian, but the corresponding Gibbs distribution is restricted to configurations with a given number of spins of each type.

The qualitatively different behavior of the Gibbs distribution at high and low temperature provides a classical explanation of physical phenomena, like the spontaneous magnetization of ferromagnets. More recently, however, the Ising model---and many related spin systems---have become important in computer science, due in part to the deep connection between the phase transitions they exhibit and algorithms for sampling from the distributions they describe \cite{Moore2011}. These connections highlight the importance of understanding the geometry of configurations that are typical under the Gibbs distribution, especially in the fixed-magnetization case.

For example, a variety of spin systems are important to the field of {\it programmable active matter}, which seeks to design simple computational elements, or ``particles,'' that self-organize to collectively solve problems \cite{Toffoli1991}. %
Instead of carefully choreographing the behavior of each particle, an emerging approach to programmable matter encodes desirable collective behavior in the Gibbs distribution of a spin system \cite{Cannon2016,Cannon2019,Li2021}. The corresponding Glauber dynamics, which is normally used to sample from the Gibbs distribution, then constitutes a stochastic distributed algorithm, or ``program,'' for the behavior in question. Critically, since the number of particles is fixed, the number of each type of spin is too. As a proof-of-concept, \cite{Li2021} physically embodied the Glauber dynamics of a fixed-magnetization Ising model in simple robots. The resulting swarm could tunably aggregate and collectively transport objects or disperse, its functionality reflecting the geometry of typical configurations under the Gibbs distribution. %

Spin systems are also widely used in computational models of biological processes, like cell migration, tissue morphogenesis, and tumor formation \cite{Shaebani2020,Alert2020}. Foremost among these models is the {\it Cellular Potts Model} (CPM) \cite{Graner1992,Glazier1993}, in which the spins model various types of cells and the surrounding medium, and the Hamiltonian models the forces that cells exert on each other and their environment \cite{Rens2019}. To enforce constraints on cell volume, the Hamiltonian further includes a term that penalizes deviations of the number of spins of each type from desired counts. %
The dynamics, which usually proceeds according to the Metropolis algorithm for sampling the corresponding Gibbs distribution, is not necessarily intended to accurately model the biological system's evolution \cite{Nemati2024}. Instead, it is the geometry of typical configurations that provides insight into ``innumerable'' biological phenomena \cite{Scianna2012}. %

These applications highlight the value of understanding the geometry of typical configurations under the Gibbs distribution of spin systems with fixed magnetization. To this end, we analyze a {\em Generalized Potts Model} (GPM), which addresses both classes of applications and generalizes many fundamental models of equilibrium statistical physics. %
\subsection{Generalized Spin Systems}
We start by defining the GPM. For convenience, we will refer to types of spins as colors. Let $G$ be a finite graph, let $q$ be a number of colors, and let $A \in \R^{q\times q}$ be a symmetric matrix that has zeroes along its diagonal and positive entries otherwise. We call $A$ a {\em cost matrix} because entry $A(i,j)$ is the energy cost of an edge with colors $i$ and $j$ at its endpoints. The Hamiltonian of the GPM is
\begin{equation}\label{eq:hamiltonian}
H_{G} (\sigma) := \sum_{(u,v) \in E(G)} A(\sigma_u,\sigma_v), \quad \sigma \in \Omega_G := [q]^{V(G)}.
\end{equation}
The corresponding family of %
Gibbs distributions on $\Omega_G$, indexed by inverse temperature $\beta \geq 0$, is 
\begin{equation}\label{eq:pi}
\pi_{G,\beta} (\sigma) := \frac{e^{-\beta H_G (\sigma)}}{Z_{G}(\beta)},
\end{equation}
where $Z_{G}(\beta) := \sum_{\sigma \in \Omega_G} e^{-\beta H_G (\sigma)}$ is the normalization constant known as the partition function.

The GPM specializes to many well-known models through different choices of the cost matrix. For example, the $q$-state ferromagnetic Potts model corresponds to the choice $A(i,j) = \mathbbm{1}_{i\neq j}$, which further specializes to the Ising model when $q=2$. Other special cases of the GPM include the Blume--Capel model, which corresponds to $q = 3$ and $A(i,j) = (i-j)^2$, and the $q$-state clock model, which corresponds to the cost matrix with entries $A(i,j) = 1 - \cos (2 \pi (i-j)/q)$.

These spin systems exhibit phase transitions as temperature varies, which can impact the efficiency of algorithms for sampling from their Gibbs distributions, or for approximating their partition functions. For example, the Glauber dynamics can provide an efficient means of approximately sampling from some of these Gibbs distributions, but only above a critical temperature, where the dynamics mix rapidly \cite{Sinclair1989,Borgs1999b}. However, this does not preclude the existence of efficient sampling algorithms in the low-temperature regime altogether, as a flurry of recent work demonstrates \cite{Helmuth2019,Helmuth2020,Jenssen2020,Borgs2020,Chen2021}. To analyze their sampling algorithms, these works rely on advanced techniques from the physics literature, known as Pirogov--Sinai theory \cite{Pirogov1975,Pirogov1976,Friedli2017}.

Despite advances in sampling from Gibbs distributions in the low-temperature regime, nearly all such results concern {\it variable-magnetization} models like \cref{eq:pi}, which allow color counts to vary.\footnote{Variable-magnetization models in our setting are instances of the {\em grand canonical ensemble} of statistical physics, while the fixed-magnetization models are instances of the {\em canonical ensemble}.} %
In contrast, fixed-magnetization models are notoriously difficult to analyze, especially in the low-temperature regime, because their partition functions generally lack a representation known as a {\em contour polymer model}. %
Recent works on efficient approximate sampling algorithms for the fixed-magnetization hard-core model \cite{Jain2022,Davies2023}, as well as works on computational and dynamical thresholds in the fixed-magnetization Ising model \cite{Carlson2022,Kuchukova2024}, rely on the concentration of the magnetization under the corresponding variable-magnetization models. Essentially, this allows the analysis of the fixed-magnetization model to be reduced to that of the variable-magnetization one. More general spin systems, like the GPM, do not satisfy this concentration property. %

The focus of our analysis is the fixed-magnetization version of the GPM (F-GPM), the Gibbs distribution of which is restricted to configurations with certain numbers of each color. To be precise, let~$\rhovec$ be a vector of $q$ positive numbers that sum to $1$. In other words, $\rhovec$ belongs to $\Delta_{q-1}$, the interior of the $(q-1)$-dimensional probability simplex. We define the set of configurations with color densities $\rhovec$ by
\[
\Omega_{G,\rhovec} := \left\{\sigma \in \Omega_G: \forall k \in [q-1],\,\, |\{v \in V(G): \sigma_v = k\}| = \lfloor \rho_k |V(G)| \rfloor \right\}.
\]
The F-GPM with magnetization $\rhovec$ is the probability distribution on $\Omega_{G,\rhovec}$, defined by
\begin{equation}\label{eq:pi fixed mag}
\pi_{G,\beta,\rhovec} (\sigma) := \pi_{G,\beta} \left( \sigma \mid \sigma \in \Omega_{G,\rhovec} \right) = \frac{e^{-\beta H_G(\sigma)}}{Z_G (\beta,\rhovec)}, \quad \text{where} \quad Z_G (\beta,\rhovec) := \sum_{\sigma \in \Omega_{G,\rhovec}} e^{-\beta H_G (\sigma)}.
\end{equation}
\cref{fig:examples1} provides examples of configurations sampled from $\pi_{G,\beta,\rhovec}$ with different cost matrices.

Unlike the preceding works on fixed-magnetization spin systems, we are primarily interested in the geometry of typical configurations under the fixed-magnetization Gibbs distribution at low temperature. To give a sense of the challenge this poses, we note that, even for the comparatively simple Ising model, the precisely characterizing the shape of the contours separating the $\pm 1$ spins under these conditions required a $200$-page tour de force by Dobrushin, Koteck{\'y}, and Shlosman \cite{Dobrushin1992}. The heft of their proof precludes its extension to more complicated spin systems, but the sophistication of its conclusion is fortunately unnecessary for applications to active programmable matter and computational models of biological processes. For example, a series of works use more fundamental analysis of contours, such as Peierls arguments and the cluster expansion, to analyze the fixed-magnetization Ising and $3$-state Potts models on~$\mathbb{Z}^2$ with unoccupied sites under certain connectivity requirements, motivated by their use in programmable matter \cite{Cannon2016,Cannon2019,Li2021}.

We now address the dynamics of the GPM and mixing. Applications to programmable matter typically use local updates, such as the Glauber or Kawasaki dynamics. The former is used for variable-magnetization models, while the latter is used for fixed-magnetization ones. Specifically, the {\it Glauber dynamics} changes the color of a single vertex at a time, so it does not preserve the density vector $\rhovec$. This is partly why the Hamiltonian of the CPM allows deviations in the numbers of vertices of each color, instead penalizing configurations for deviating too much from desired counts. Note that this choice of Hamiltonian prevents the dynamics from being implemented by a distributed algorithm. In contrast, the {\it Kawasaki dynamics} chooses an edge and exchanges the colors of the endpoints with appropriate probabilities, which preserves $\rhovec$. (This is the algorithm used to generate \cref{fig:examples1,fig:examples2}.) While there is reason to believe the Kawasaki dynamics converges rapidly on lattices, even at low temperature, rigorous polynomial-time bounds are elusive. They are at least as difficult as many notoriously challenging open problems, such as sampling configurations of the Ising model with homogeneous (say all $+1$) boundary conditions \cite{Martinelli1994} and, in the connected setting such as compression, lattice animals \cite{Rensburg1997}.  For this reason, we focus solely on understanding the geometric properties of typical configurations and leave the mixing time of these local algorithms as an open question.

\begin{figure}[t]
\centering
\includegraphics[width=.9\textwidth]{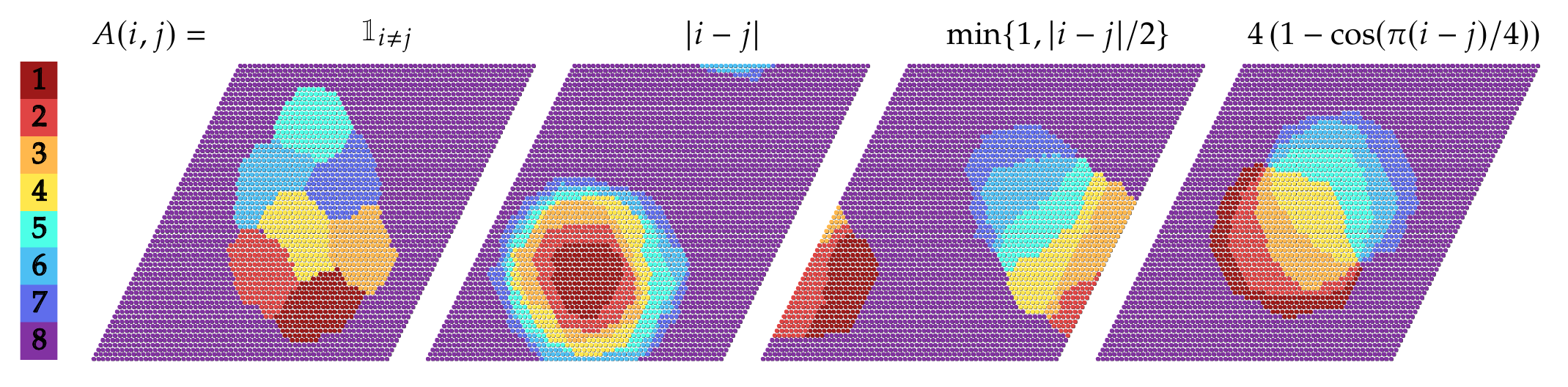}
\caption{Configurations of various $8$-color instances of the F-GPM on a $63 \times 63$ domain of the triangular lattice with periodic boundary conditions, $\beta = 1$, and density vector $\rhovec \propto (1,1,\dots,1,14)$. The first example is the Potts model, while the last is the clock model.}
\label{fig:examples1}
\end{figure}

\subsection{Results}\label{subsec:main result}

Our main result states that
at sufficiently low temperatures, a random configuration $\sigma$ under the Gibbs distribution $\pi_{G,\beta,\rhovec}$ of the F-GPM is sorted into regions that have low-cost (or short) boundaries and are nearly monochromatic, with high probability. We denote this event by $\mathsf{Sorted}(\alpha,\delta)$ in terms of parameters $\alpha$ and $\delta$ that control the extent to which the boundaries are short and the regions are monochromatic,  given $G$ and $\rhovec$. 

\newcommand{\partR}[1]{R_{#1}}
\newcommand{\partU}[1]{U_{#1}}
\newcommand{\partRall}{\partR{1},\partR{2},\ldots,\partR{q}}
\newcommand{\partUall}{\partU{1},\partU{2},\ldots,\partU{q}}
\newcommand{\partRshort}{\cal{R}}
\newcommand{\partUshort}{\cal{U}}
\newcommand{\partcost}[1]{\overline{H}_{#1}}
\begin{definition}[The $\mathsf{Sorted}$ event]\label{def:sorted}
For any partition ${\partUshort} = (U_i)_{i \in [q]}$ of the vertices $V(G)$, denote by $\partcost{G}(\partUshort)$ the Hamiltonian of a configuration where, for all $i \in \colorset$, the vertices in $\partU{i}$ have color $i$.
Then, for every density vector $\rhovec \in \Delta_{q-1}$, and parameters $\alpha > 1$ and $\delta \in (0,1)$, the event $\mathsf{Sorted} (\alpha,\delta)$ consists of configurations $\sigma \in \Omega_{G,\rhovec}$ that contain a partition ${\partRshort} = (R_i)_{i \in [q]}$ of $V(G)$ with the following properties.
\begin{enumerate}
 \item %
 For each $i \in \colorset$, there are at most $\delta |\partR{i}|$ vertices of colors other than $i$ within $\partR{i}$.
 \item %
 The partition $\partRshort$ has low energy, in the sense that
 \[
 \partcost{G}\left(\partRshort\right) \leq \alpha \cdot
 \min_{\substack{{\partUshort} \text{ partition of $V(G)$}\\\forall i \in [\numcolors-1],\, |\partU{i}| = \floor{\rho_i |V(G)|}}} \partcost{G}\left({\partUshort}\right).
 \]
\end{enumerate}
\end{definition}
The definition of $\mathsf{Sorted}$ generalizes the clustering property of \cite{Miracle2011} and separation in \cite{Cannon2019} to arbitrary numbers of colors and costs that are weighted contours or perimeters. 

We state the main theorem for an $L \times L$ region $\Lambda^L$ of the infinite triangular lattice $\Lambda^\infty$, where $L \geq 1$ is an integer side length and $\Lambda^L$ is equipped with periodic boundary conditions. We do so to facilitate the comparison of our results with those of previous works. Note that our approach can be easily adapted to $d$-dimensional square lattices, and likely to other classes of graphs as well. For convenience, we denote by $\mathcal{A}_q$ the set of cost matrices, i.e., symmetric matrices $A \in \R^{q\times q}$ such that $\mathrm{diag}(A) = \mathbf{0}$ and $A(i,j) > 0$ for all $i \neq j$. We further denote the smallest and largest off-diagonal entries of $A$ by $\costmatrixmin$ and $\costmatrixmax$ respectively.

\begin{theorem}[Main]\label{thm:main}
Let $q \geq 1$ be a number of colors, let $A \in \mathcal{A}_q$ be a cost matrix, and let $\rhovec \in \Delta_{q-1}$ be a density vector. For every $\alpha > 1$ and $\delta \in (0,1)$, 
 there exist values $\betaminF = \betaminF(\alpha,\delta,\rhovec,\numcolors,\costmatrixmin,\costmatrixmax)$ and $\LminF = \LminF(\beta,\alpha,\delta,\rhovec,\numcolors,\costmatrixmin,\costmatrixmax)$ such that, for any inverse temperature $\beta \geq \betaminF$ and domain side length $L \geq \LminF$, we have
 \begin{equation}\label{eq:main est}
 \pi_{\Lambda^L,\beta,\rhovec} \left(\mathsf{Sorted}(\alpha,\delta)\right) \geq 1 - e^{- \Omega \left( (\alpha-1) \beta L \right)}.
 \end{equation}
\end{theorem}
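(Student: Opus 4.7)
The plan is to extend the classical Peierls argument to the F-GPM through a contour representation, while using a new bridge between the fixed- and variable-magnetization partition functions to absorb the global color-count constraint. First, to each configuration $\sigma \in \Omega_{\Lambda^L,\rhovec}$ I would associate a family of contours $\contour_1,\ldots,\contour_k$, namely the connected components of the thickened ``bad'' set where neighborhoods fail to be monochromatic. Each contour separates the domain into an exterior and a collection of interiors, each of which carries a dominant color, so I can set $\partRshort = (R_1,\ldots,R_q)$ with $R_i$ the union of all regions of dominant color $i$. By construction, $\partcost{\Lambda^L}(\partRshort) \asymp \costmatrixmin \cdot (\text{total contour length})$ up to lower-order corrections, and the $\partcost{\Lambda^L}$-minimizing partition at magnetization $\rhovec$ is attained by a configuration whose contour length is $\Theta(L)$.

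Because $\partRshort$ is defined intrinsically from the contour structure, a configuration whose total contour length is close to this minimum automatically satisfies both conditions of $\mathsf{Sorted}(\alpha,\delta)$: condition~(1) follows because misplaced vertices inside $R_i$ can only arise from nearby contours, and condition~(2) is just the $\costmatrixmin$ lower bound. Consequently, $\mathsf{Sorted}(\alpha,\delta)^{\mathrm c}$ forces the total contour length in $\sigma$ to exceed the minimum perimeter by at least $\Omega((\alpha-1) L)$. It thus suffices to bound, for every excess length $\ell \geq c(\alpha-1)L$,
\[
 \pi_{\Lambda^L,\beta,\rhovec}\bigl(\text{total contour length} \geq \text{min.\ perimeter} + \ell\bigr) \leq e^{-\Omega(\beta \ell)},
\]
and sum the resulting geometric tail.

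To obtain such a Peierls bound at fixed magnetization I would compare $Z_{\Lambda^L}(\beta,\rhovec)$ to the variable-magnetization partition function $Z_{\Lambda^L}(\beta,\hvec)$ at an auxiliary external field $\hvec = \hvecopt$, chosen so that the variable-magnetization ensemble has $\rhovec$ as its typical magnetization. Standard Pirogov--Sinai techniques --- contour polymer representations and a convergent cluster expansion --- already give Peierls-type bounds for $Z_{\Lambda^L}(\beta,\hvecopt)$. The bridge step would identify a class of configurations, such as those whose contour system is localized to a small sub-region and whose bulk is monochromatic, that is enumerated comparably by both ensembles once one corrects by the multinomial factor for the target densities. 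Since the magnetization of any configuration can be adjusted to $\rhovec$ by $O(\sqrt{\numcolors})$ local swaps at total cost at most $\smallchangesbound$, this yields $Z_{\Lambda^L}(\beta,\rhovec) / Z_{\Lambda^L}(\beta,\hvecopt) \geq \mathrm{poly}(L)^{-1}$, and the Peierls bound transfers to the fixed-magnetization setting with the claimed rate.

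The hardest step will be the bridge construction. The field $\hvecopt$ must solve a fixed-point equation expressing that the variable-magnetization model has typical magnetization $\rhovec$, which requires differentiating the cluster-expansion pressure with respect to $\hvec$ and establishing quantitative smoothness (reflected in the preamble bound $\hoptbound$). Proving that this $\hvecopt$ yields comparable partition functions uniformly in $L$, and that the combinatorial correction factor is at most polynomial in $L$, is the technically delicate heart of the proof; the remaining pieces are adaptations of standard Pirogov--Sinai machinery.
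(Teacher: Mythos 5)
Your overall scaffold (contour decomposition, a Peierls bound driven by a cluster expansion, bridging, and a comparison to a variable-magnetization model at an auxiliary field $\hvecopt$) is the right shape, but the bridge step as you describe it is precisely the step the paper argues \emph{cannot} be carried out for the GPM, and it fails for concrete reasons.

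First, the role of $\hvecopt$ is misidentified. You propose to choose $\hvecopt$ so that the variable-magnetization ensemble $\pi^{\hvecopt}_\beta$ has $\rhovec$ as its typical magnetization, and then to transfer the Peierls bound via $Z_{\Lambda^L}(\beta,\rhovec)/Z^{\hvecopt}_{\Lambda^L}(\beta) \geq \mathrm{poly}(L)^{-1}$. That comparison is exactly what works for the fixed-magnetization Ising and hard-core models (where magnetization concentrates), and the paper explicitly notes it does \emph{not} hold for the general GPM. At low temperature with multiple ground states, $\pi^{\hvecopt}_\beta$ concentrates on near-monochromatic configurations of a single color, not on configurations with color counts $\approx \rhovec \numparticles$, so $Z_{\Lambda^L}(\beta,\rhovec)$ can be exponentially smaller than $Z^{\hvecopt}_{\Lambda^L}(\beta)$ and no $\mathrm{poly}(L)$ correction is available. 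Relatedly, the claim that any configuration's magnetization can be adjusted to $\rhovec$ with $O(\sqrt{q})$ swaps is false: going from, say, an all-$1$ configuration to counts $\floor{\rho_i\numparticles}$ requires $\Theta(L^2)$ changes. The paper's $\hvecopt$ is instead chosen in \cref{lem:balance} to \emph{balance the truncated pressures} of all $q$ colors; this is forced by the Pirogov--Sinai machinery (to make the cluster expansions for all colors' truncated partition functions converge) and is not free to be tuned to match $\rhovec$.

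Second, and this is the ingredient your proposal is missing, the actual bridge is not a global comparison $Z(\rhovec) \leftrightarrow Z^{\hvecopt}$, but a \emph{conditional} one restricted to a carefully chosen contour event $S(\Gamma_\ast)$. The paper constructs $\Gamma_\ast$ as the contour set of a minimal-cost subdivision $\optconfign{\rhovecbeta}$, where $\rhovecbeta = \Theta^{-1}\rhovec$ is a perturbation of $\rhovec$ by the inverse of the matrix $\Theta = (\theta_{i,j})$ of expected in-region densities computed from the truncated partition functions. Conditioned on $S(\Gamma_\ast)$, the magnetization of each sub-region \emph{does} concentrate (Lemmas~\ref{lem:magvarbound}--\ref{lem:regionfixedmag}), at values $\theta_{\cdot,j}|\reggood{L}{j}|$, and the linear correction $\Theta^{-1}$ is exactly what ensures the total magnetization lands on $\rhovec$. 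Only after this conditioning can one pass from $Z^{\hvecopt}(S(\Gamma_\ast))$ down to $Z^{\hvecopt}(S(\Gamma_\ast)\cap\Omega_\rhovec)$ at $\mathrm{poly}(L)^{-1}$ cost (\cref{lem:goodineq}), and the remaining steps \#1 and \#4 in \cref{eq:strategy1} are the trivial inclusions. Without the construction of $\Gamma_\ast$ and the adjusted densities $\rhovecbeta$, the ``back to fixed magnetization'' half of your chain has no valid justification, and this is the technically novel heart of the paper that your sketch leaves unaddressed.
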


\cref{thm:main} provides the first results on the typical geometry of configurations under the fixed-magnetization Blume--Capel and $q$-state clock models, and it unifies some past results for other fixed-magnetization models, including the ferromagnetic Potts and Ising models. Additionally, it specializes to analogues of the main theoretical results of \cite{Cannon2016}, \cite{Cannon2019}, and \cite{Li2021}. These works used variants of the Ising and Potts models to design stochastic algorithms for compression, separation, and aggregation, respectively.

\begin{figure}[t]
\centering
\includegraphics[width=.9\textwidth]{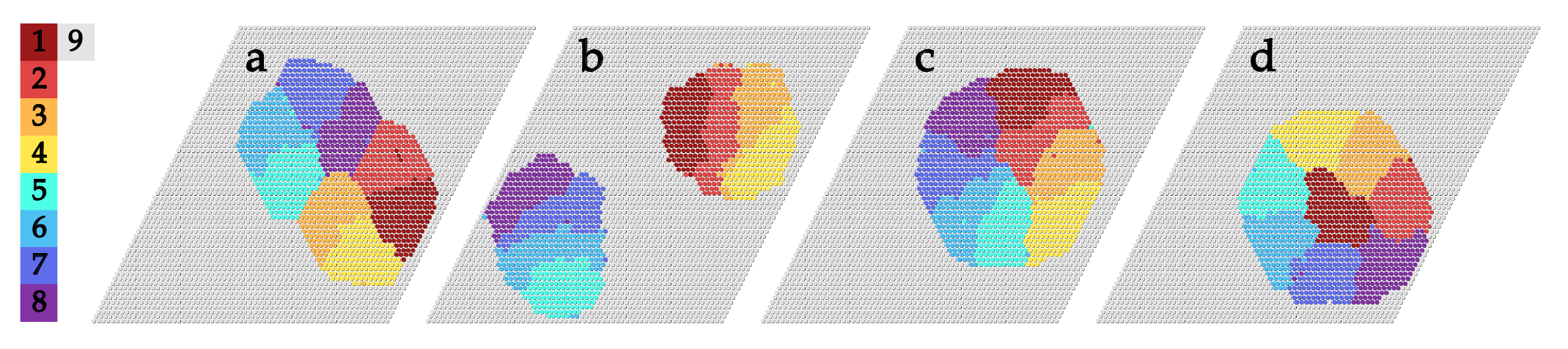}
\caption{Configurations of various $9$-color instances of the F-GPM on a $63 \times 63$ domain of the triangular lattice with periodic boundary conditions, $\beta = 1$, and density vector $\rhovec \propto (1,1,\dots,1,16)$. The cost matrices are listed in \cref{apx:interactionmatrices}. (a) Pairs of the form $(i,i+1)$ for $i \in \{1,3,5,7\}$ greatly attract, while the groups of $1$--$4$ and $5$--$8$ moderately attract. (b) Colors $1$--$4$ and $5$--$8$ are sorted, but these groups do not attract. (c) Like (b), except the groups attract. (d) Colors $2$--$8$ are incentivized to sort into a loop, while color $1$ aims to avoid color~$9$.}
\label{fig:examples2}
\end{figure}

Theorem~\ref{thm:main} allows us to {\it program} programmable matter by defining a suitable cost matrix $A$ and density vector $\rhovec$. Algorithmically, the Kawasaki dynamics allows pairs of neighboring particles to swap places, with transition probabilities given by the Metropolis--Hastings algorithm \cite{Metropolis1953}, so that the system provably converges to $\pi_{\Lambda^L,\beta,\rhovec}$. By Theorem~\ref{thm:main}, the particles self-organize into compact, nearly monochromatic regions, and the arrangements of these regions emerge from the off-diagonal entries of $A$. Figs.~\ref{fig:examples1} and~\ref{fig:examples2} give various examples for systems with $8$ or $9$ colors.

Lastly, a consequence of \cref{thm:main} of particular relevance to the CPM is that, when $\mathsf{Sorted}$ occurs, the parts $R_i$ of the implicit vertex partition $\mathcal{R} = (R_1, R_2, \dots, R_q)$ have sizes of roughly $\rho_i |V(G)|$. These sizes are analogous to the sizes of the ``cells'' in the CPM, which are instead enforced by an additional term in the Hamiltonian. We state this observation as a corollary.

\begin{corollary}
Suppose that $\sigma$ is a configuration in $\mathsf{Sorted} (\alpha,\delta)$ with the partition $\mathcal{R} = (R_1,R_2,\dots,R_q)$. Then, for each $i \in \colorset$, the size of $R_i$ satisfies
\begin{align*}
\frac{\rho_i - \delta}{1-\delta}|V(G)| - O(1) \leq |\partR{i}| \leq \frac{\rho_i}{1-\delta}|V(G)| + O(1),
\end{align*}
where the $O(1)$ term is due to rounding. Consequently, under the conditions of \cref{thm:main}, the probability of sampling a configuration that has this property satisfies the same bound as \cref{eq:main est}.
\end{corollary}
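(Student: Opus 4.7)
The plan is to derive both bounds on $|R_i|$ from simple double counting of the color-$i$ vertices, once inside $R_i$ and once outside. Property~1 of Definition~\ref{def:sorted} directly controls both counts, and the fixed-magnetization constraint pins down the total number of color-$i$ vertices to $\lfloor \rho_i |V(G)|\rfloor$, so everything reduces to elementary inequalities. No new probabilistic input is needed beyond Theorem~\ref{thm:main}.

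For the upper bound on $|R_i|$, I first note that by Property~1, at least $(1-\delta)|R_i|$ of the vertices in $R_i$ carry color $i$. Since the configuration lies in $\Omega_{G,\rhovec}$, the total number of color-$i$ vertices in $V(G)$ is $\lfloor \rho_i |V(G)|\rfloor \le \rho_i|V(G)|$, which gives $(1-\delta)|R_i| \le \rho_i|V(G)|$, i.e., $|R_i| \le \rho_i|V(G)|/(1-\delta)$.

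For the lower bound, I count color-$i$ vertices that lie \emph{outside} $R_i$. Every such vertex belongs to some $R_j$ with $j \neq i$, and is an off-color vertex of $R_j$, so Property~1 bounds their total number by $\sum_{j\neq i}\delta|R_j| = \delta(|V(G)|-|R_i|)$. Consequently, the count of color-$i$ vertices inside $R_i$ is at least $\lfloor\rho_i|V(G)|\rfloor - \delta(|V(G)|-|R_i|)$, while it is also at most $|R_i|$. Rearranging yields
\[
(1-\delta)|R_i| \;\ge\; \lfloor \rho_i |V(G)|\rfloor - \delta|V(G)| \;\ge\; (\rho_i - \delta)|V(G)| - 1,
\]
which gives the claimed lower bound after dividing by $1-\delta$ (absorbing the $1/(1-\delta)$ into the $O(1)$ term, since $\delta$ is fixed).

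There is no real obstacle here; the only mild subtlety is the floor from the definition of $\Omega_{G,\rhovec}$, which is what produces the $O(1)$ rounding error in both bounds. The probability statement is then immediate: the size bounds hold deterministically on the event $\mathsf{Sorted}(\alpha,\delta)$, so the probability of sampling a configuration with these properties is at least $\pi_{\Lambda^L,\beta,\rhovec}(\mathsf{Sorted}(\alpha,\delta))$, and Theorem~\ref{thm:main} yields the stated $1 - e^{-\Omega((\alpha-1)\beta L)}$ lower bound under its hypotheses.
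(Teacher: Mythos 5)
Your proposal is correct and takes essentially the same approach as the paper: double-counting color-$i$ vertices inside and outside $R_i$ via Property~1 of Definition~\ref{def:sorted}, combined with the fixed-magnetization constraint, and absorbing rounding into the $O(1)$ term. The only cosmetic difference is that the paper phrases both bounds through $n_i(\sigma)$ and uses $|n_i(\sigma)-\rho_i|V(G)||\le q$ (which handles the $i=q$ case where the count is not exactly a floor), while you work directly with $\lfloor\rho_i|V(G)|\rfloor$, but this discrepancy is harmless given the $O(1)$ slack.
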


\begin{proof}
Denote by $n_i(\config)$ the number of vertices of color $i$ in configuration $\config$.
The first statement of \cref{def:sorted} tells us that at least $(1-\delta)|\partR{i}|$ vertices in $\partR{i}$ are of color $i$, and at most $\delta(|V(G)| - |\partR{i}|)$ vertices outside of $\partR{i}$ are of color $i$. Accordingly, $n_i (\sigma)$ satisfies
\begin{align*}
n_i(\config) &\geq (1-\delta)|\partR{i}|, \quad \text{and} \\
n_i(\config) &\leq |\partR{i}| + \delta(|V(G)| - |\partR{i}|) = (1-\delta)|\partR{i}| + \delta|V(G)|.
\end{align*}
Rearranging these terms and making use of the fact that $\left|n_i(\config) - \rho_i|V(G)|\right| \leq q$ (i.e., accounting for rounding error) completes the proof.
\end{proof}

\subsection{New proof techniques}\label{subsec:techniques}
The proof of Theorem~\ref{thm:main} centers on bounding above the contribution to the fixed-magnetization partition function from configurations that contain long contours (contours are defined precisely in Section~\ref{Sec:contours}). Ideally, we would use Pirogov--Sinai theory to represent this partition function as a contour polymer model and obtain the bound by controlling the corresponding cluster expansion~\cite{Friedli2017}. However, by default, cluster expansion techniques using the contour polymer model do not apply to the fixed-magnetization setting. Other works have avoided this problem by considering a variable-magnetization model that has highly concentrated magnetization that can be related to the corresponding fixed-magnetization within polynomial factors~\cite{Jain2022,Davies2023}. The general model that we study does not have this property. To overcome this barrier, we instead prove a sequence of bounds that go from the fixed-magnetization case to the variable-magnetization one, and back.

The virtue of this idea is that one of the intermediate bounds exclusively concerns the variable-magnetization case, and we address it by mostly following the preceding outline. However, due to the generality of our model, even proving the variable-magnetization bound is difficult. In the terminology of statistical physics, our model can have unstable ground states, the treatment of which requires a sophisticated version of Pirogov--Sinai theory involving truncations of the weights of contours \cite{Borgs2012,Borgs2020}. Furthermore, for the corresponding cluster expansions to converge, we must modify the Hamiltonian of our model with a magnetic field that balances the pressures associated with these truncated partition functions so that all states are simultaneously stable.

While we can address the variable-magnetization bounds through these nontrivial extensions of Pirogov--Sinai theory and cluster expansions, ``getting back'' to the fixed-magnetization case requires entirely new techniques. We complete this step by constructing a special contour set, which is short and contained by high-weight configurations with the correct magnetization. Interestingly, the special contour set arises as the contour set of the highest-weight configuration among configurations with a {\em different} density vector, arising from the inevitable perturbations of the density vectors by the most likely configurations of each sub-region. This density is related to the original one by a linear transformation that involves the truncated partition functions.

We view this technique as an extension of a Peierls argument in~\cite{Cannon2019}. A standard Peierls argument defines a mapping from configurations with long contours, or boundaries between regions of different colors, to those with shorter ones, to show that long contours are exponentially unlikely. The standard argument fails in our more general setting due to a significant change in the volume component of a configuration's total weight, which can easily outweigh the benefit of short contours. Instead, we map from classes of configurations with long contours to configurations containing our special contour set. By carefully managing the average change in the volume component across these configurations, we show that the cost of the contours ultimately plays the biggest role in which configurations are favored.

Our overall proof strategy, and the new techniques used to relate the variable- and fixed-magnetiza-tion partition functions, are likely to be valuable for future work on fixed-magnetization models and programmable matter. We discuss the strategy and techniques in greater detail in \cref{sec:proof ideas}.

\section{Overview of the proof structure}
\label{sec:proof ideas}
We provide a detailed overview of the proof, with an emphasis on the new techniques that we introduce. This section doubles as a guide to the rest of the paper, with forward references to the statements of key results and their proofs. Like Theorem~\ref{thm:main}, we specialize to the case of $G = \Lambda^L$. To keep the notation simple, we omit from quantities their dependence on the domain $\Lambda^L$ and its length $L$ whenever possible.

{\bf Disclaimer.} {\em This section introduces simplified versions of other important objects for the sake of exposition. We note where we omit technical aspects that do not contribute to conceptual understanding of the proof ideas, and include forward references to relevant parts of the proof. Lastly, we use different notation than in the actual proof, to emphasize only the most conceptually important aspects.}

Core to our analysis is the treatment of configurations as collections of contours. Informally, contours represent edges between regions of differently colored spins, which can be drawn as lines going between vertices of the lattice to indicate where the color changes are in the configuration. While this simple understanding of contours is all that is required to understand the structure of the proof, we elect to use the slightly more general construction of contours from Pirogov-Sinai theory, which is defined formally in \cref{Sec:contours}.

The proof entails analyzing the contributions of certain subsets of configurations to the variable- and fixed-magnetization partition functions. We define these subsets in terms of whether they contain a given contour set. Due to a minor technical point, when discussing the class of configurations that ``contain'' a contour set $\Gamma$, we in fact restrict this class to configurations that exclusively contain contractible contours, i.e., contours that do not ``wrap around'' $\regionL$ (Section~\ref{subsec:wrap}). For a contour set $\Gamma$, we define the set of such configurations that contain it as
\[
S(\Gamma) := \left\{\sigma \in \Omega: \Gamma(\sigma) \supseteq \Gamma \,\,\text{and}\,\, \text{every $\gamma \in \Gamma(\sigma)$ is contractible}\right\}.
\]
A convenient way to discuss contributions to the two partition functions is through the set function
\[
Z_\beta (S) := \sum_{\sigma \in S} e^{-\beta H(S)}, \quad S \subseteq \Omega.
\]
For example, $Z_\beta (S(\Gamma))$ is the contribution of $S(\Gamma)$ to the variable-magnetization partition function $Z_\beta (\Omega)$, while $Z_\beta (S(\Gamma) \cap \Omega_\rhovec)$ is the corresponding contribution to the partition function $Z_\beta (\Omega_\rhovec)$ with fixed magnetization $\rhovec$.

{\bf Key probability bound.} The key step in the proof of Theorem~\ref{thm:main} shows that, for a contour set~$\Gamma$, the probability under $\pi_{\beta,\rhovec}$ that a random configuration belongs to $S(\Gamma)$ is at most a number $\eps = \eps(\beta, |\overline\Gamma|) > 0$ that is exponentially small in $\beta$ and the contour's length $|\overline\Gamma|$. We can write this probability in terms of $Z_\beta$ as
\[
\pi_{\beta,\rhovec}(S(\Gamma)) = \frac{Z_\beta (S(\Gamma) \cap \Omega_\rhovec)}{Z_\beta (\Omega_\rhovec)},
\]
hence the bound $\pi_{\beta,\rhovec} (S(\Gamma)) \leq \eps$ is equivalent to
\begin{equation}\label{eq:idea1}
Z_\beta (S (\Gamma) \cap \Omega_\rhovec) \leq \eps \cdot Z_\beta (\Omega_\rhovec).
\end{equation}

We establish this bound in Lemma~\ref{lem:finalcontourbound}. While it does not immediately imply Theorem~\ref{thm:main}, we connect the two in the last step of the proof using a technique called {\em bridging} \cite{Miracle2011}. We discuss bridging at the end of this proof overview; for now, we focus on the probability bound.

Since \cref{eq:idea1} features the length of $\Gamma$ (through $\eps$), we would ideally establish it using Pirogov--Sinai theory, which entails representing a partition function as a sum over contours weighted by their lengths \cite{Pirogov1975,Pirogov1976,Friedli2017}. Indeed, this approach would suffice to establish the bound with $\Omega$ in the place of $\Omega_\rhovec$. However, it is inapplicable to fixed-magnetization partition functions, which, by default, have no contour polymer model. This is the first technical barrier to a proof of Theorem~\ref{thm:main}.

{\bf From fixed to variable magnetization, and back.} To overcome the first barrier, we split \cref{eq:idea1} into a sequence of four bounds that take us from the fixed-magnetization model to the variable-magnetization one, and back. Specifically, for a certain set of configurations $T \subset \Omega$, we essentially prove that
\begin{equation}\label{eq:strategy1}
Z_\beta (S(\Gamma) \cap \Omega_\rhovec) \stackrel{\#1}{\leq} Z_\beta (S(\Gamma)) \stackrel{\#2}{\leq} \eps (1-o(\beta)) \cdot Z_\beta (T) \stackrel{\#3}{\leq} \eps \cdot Z_\beta (T \cap \Omega_\rhovec) \stackrel{\#4}{\leq} \eps \cdot Z_\beta (\Omega_\rhovec).
\end{equation}
The following discussion focuses on bounds \#2 and \#3, as bounds \#1 and \#4 follow from the simple fact that if $S_1 \subseteq S_2 \subseteq \Omega$, then $Z_\beta (S_1) \leq Z_\beta (S_2)$. 

The virtue of our strategy is that bound~\#2 exclusively concerns variable-magnetization partition functions. Therefore, in principle, we can establish it by representing the partition functions as contour polymer models using Pirogov--Sinai theory, and then by controlling the resulting cluster expansions. However, doing so for our model is far more complicated than in previous works, because colors may not play symmetric roles in our model, in a sense. In statistical physics terminology, our model may have {\em unstable ground states}.

The treatment of unstable ground states requires a more sophisticated version of Pirogov--Sinai theory that works with {\em truncated partition functions} instead. (Indeed, past applications of Pirogov--Sinai theory have relied in a critical way on all ground states being stable \cite{Helmuth2019}, except for recent work on the Potts model \cite{Borgs2020}.) In our more general setting, the use of truncated partition functions introduces a further difficulty. For the truncated partition functions to have convergent cluster expansions, we must first modify the Hamiltonian with a magnetic field, to balance the {\em pressures} associated with these truncated partition functions. We address this in a moment, after briefly discussing bound~\#3.

Getting back to the fixed-magnetization partition function via bound~\#3 is difficult because it involves bounding a sum over configurations in $T$ by a sum over a subset of those configurations. In particular, this bound cannot hold if $Z_\beta (T)$ is dominated by configurations in $T$ that do not have magnetization $\rhovec$. It may seem as though the introduction of a magnetic field could help us, because it could increase the relative weight of configurations in $T$ with magnetization $\rhovec$. However, we are not free to choose the strength of the field, because bound~\#2 requires us to balance pressures first. This leaves the choice of the set $T$ as our only means of facilitating bound~\#3. But, for bound~\#2 to hold, $T$ must also contribute exponentially more to the partition function than $S(\Gamma)$ does. We must therefore choose the set $T$ carefully, to satisfy the demands of bounds~\#2 and \#3.

{\bf Introducing a magnetic field.} Our approach to establishing bound~\#2 requires us to modify the Hamiltonian with a magnetic field of a certain strength $\h \in \R^q$. (The strength is chosen in Lemma~\ref{lem:balance}, to balance the truncated pressures of the colors.) We define the modified Gibbs distribution and partition function by
\begin{equation}\label{eq:mag h and z}
\pi_\beta^\h(\sigma) \propto \exp \left(-\beta H (\sigma) + \h \cdot \mathbf{n} (\sigma) \right), \quad \sigma \in \Omega, \quad \quad \quad Z_\beta^{\hvec} (S) := \sum_{\sigma \in S} e^{-\beta H(\sigma) \, + \, \h\,\cdot\,\mathbf{n}(\sigma)}, \quad S \subseteq \Omega,
\end{equation}
where $\nfixedvec (\sigma) = (n_k (\sigma))_{k\in [q]}$ and $n_k (\sigma)$ is the number of vertices of color $k$ in $\sigma$.

An important observation is that, because the magnetic field applies only to the number of vertices of each color, it cannot affect the relative contributions of configurations to the fixed-magnetization partition function. To establish the key bound \cref{eq:idea1}, it therefore suffices to prove
\[
Z_\beta^{\h} (S (\Gamma) \cap \Omega_\rhovec) \leq \eps \cdot Z_\beta^{\h} (\Omega_\rhovec).
\]
We revise our existing strategy in \cref{eq:strategy1} accordingly, replacing partition functions of the form $Z_\beta (\cdot)$ with $Z_\beta^\h (\cdot)$.

{\bf Choosing the set $T$.} For bound~\#2 to hold, $T$ needs to contain configurations with contours that are significantly shorter than $\Gamma$, so that the corresponding configurations have exponentially greater weight in $|\overline\Gamma|$. For bound~\#3 to hold, $T$ needs to contain configurations of magnetization $\rhovec$ with relatively high weight among configurations in $\Omega_\rhovec$. We satisfy these demands by choosing $T = S(\Gamma_\ast)$, where $\Gamma_\ast$ is a special contour that is shorter than $\Gamma$ and is contained by configurations with high weight in $\Omega_\rhovec$.

We construct $\Gamma_\ast$ in \cref{subsec:construct good contours}, in terms of the field strength $\hvec$ that balances the pressures in the variable-magnetization model, as follows. Consider the matrix $\Theta \in \R^{q \times q}$ such that $\Theta(i,j)$ is the expected density $\Theta(i,j)$ of color $i$ in a ``region of mostly color $j$,'' under the distribution $\pi_{\beta}^{\hvec}$. (To define $\Theta (i,j)$ precisely requires truncated partition functions.) We transform the density vector~$\rhovec$, which we treat as a column vector, according to $\rhovec_\ast = \Theta^{-1} \rhovec$. We then define $\sigma_\ast$ to be the configuration in $\Omega_{\rhovec_\ast}$ with the least cost, i.e., such that $H(\sigma_\ast) \leq \min_{\config \in \Omega_{\rhovec_\ast}} H(\config)$. Lastly, we define $\Gamma_\ast$ to be the contour set of $\sigma_\ast$.

{\bf Revised sequence of bounds.} We revise \cref{eq:strategy1} based on the preceding discussion. For future reference, we include more precise forms of the factors involving $\beta$. To prove the key probability bound, we establish a sequence of inequalities like:
\begin{align}\label{eq:strategy2}
Z_\beta^\h (S(\Gamma) \cap \Omega_\rhovec) &\leq Z_\beta^\h (S(\Gamma)) \nonumber\\ 
&\leq \exp(-\beta |\overline\Gamma| + \beta |\overline\Gamma_\ast|) \cdot Z_\beta^\h (S(\Gamma_\ast)) \nonumber\\ 
&\leq \underbrace{\exp(-\beta |\gamma| + \beta |\Gamma_\ast|) \cdot \exp(o(\beta)|\Gamma_\ast|)}_{\eps(\beta,|\overline\Gamma|)} \,\cdot\, Z_\beta^\h (S(\Gamma_\ast) \cap \Omega_\rhovec) \leq \eps \cdot Z_\beta^\h (\Omega_\rhovec).
\end{align}
The final sequence of inequalities appears in the proof of Lemma~\ref{lem:finalcontourbound}. There, the second bound involves the cost $\mathcal{H}(\Gamma)$ of the contour set, instead of its length. However, the length and cost are within constant factors of one another.

{\bf Bridging.} In the last step of the proof of Theorem~\ref{thm:main}, we connect the probability bound from \cref{eq:strategy2} to the theorem's conclusions using the technique of bridging~\cite{Miracle2011,Cannon2019,Kedia2022}. For any configuration $\sigma \in \Omega_\rhovec$, bridging identifies a subset $\confinal$ of the configuration's contour set $\Gamma(\sigma)$ that meets the second property of $\mathsf{Sorted}$, namely, that each component of $V(\regionL) \setminus \contoursetbar$ is monochromatic, aside from at most a fraction $\delta \in (0,1)$ of vertices. What makes bridging useful is that we can bound above the number of contour subsets with a given length or cost that it can produce. Then, so long as $\eps$ decays rapidly enough in $|\overline\Gamma|$, we can conclude that configurations for which bridging produces a long or costly contour set are rare.

We introduce bridging and state its properties in Lemma~\ref{lem:bridgesystems}. We bound the number of contour subsets of a given length or cost that bridging can produce in Lemma~\ref{lem:numbridgesystems}. The proof of Theorem~\ref{thm:main} that follows these statements combines them with the key probability bound from Lemma~\ref{lem:finalcontourbound}.

\section{Preliminaries for the contour polymer model}
\label{Sec:contours}
We now provide some of the formalism that is required for the complete proofs just outlined in \cref{sec:proof ideas}. 
A {\em contour} $\gamma$ is a pair $(\overline\gamma, \omega_{\overline\gamma})$ consisting of a finite support $\overline\gamma$, which is a connected set of vertices in $V(G)$, and an assignment $\omega_{\overline\gamma}: \overline\gamma \to [q]$ of colors to the vertices of $\overline\gamma$. We define the {\em contour set} $\Gamma (\sigma)$ of a configuration $\sigma \in \Omega_G$ in the following way. First, we define the set of vertices that participate in bichromatic edges as
\[
\overline\Gamma (\sigma) := \left\{u \in V(G): \exists v \sim u: \sigma_u \neq \sigma_v \right\},
\]
where the notation $v \sim u$ indicates that $(u,v)$ is an edge of $G$. Second, we consider the connected components $\overline\gamma_1$ through $\overline\gamma_k$ of the subgraph $G(\overline\Gamma (\sigma))$ induced by restricting $G$ to the vertices of $\overline\Gamma (\sigma)$, with assignments that agree with $\sigma$:
\[
\omega_{\overline\gamma_i} (v) := \sigma_v, \quad v \in \overline\gamma_i.
\]
We then define the contour set of $\sigma$ to be the union of contours $\gamma_i = (\overline\gamma_i,\omega_{\overline\gamma_i})$:
\[
\Gamma (\sigma) := \left\{\gamma_i: i \in [k]\right\}.
\]
Lastly, we define an analogue of the Hamiltonian that quantifies the ``cost'' of a contour set. For a contour set $\Gamma$, we define
\[
\mathcal{H}_G (\Gamma) := \sum_{\gamma \in \Gamma} \, \sum_{(u,v) \in E \left( G (\overline \gamma) \right)} A \left( \omega_{\overline\gamma} (u), \omega_{\overline\gamma} (v) \right).
\]
For convenience of notation, if $\gamma$ is an individual contour, we use the notation $\mathcal{H}_G (\gamma)$ to denote $\mathcal{H}_G(\{\gamma\})$. Note that $\mathcal{H}_G (\Gamma (\sigma)) \leq H_G (\sigma)$ by definition. For the rest of the proof, we will also drop $G$ from the notation as the underlying graph is clear.

As we would like to be able to compare contours between different sizes $L$ of the lattice $\regionL$, we use the infinite triangular lattice $\fulllattice$ as a baseline. Denote by $\allcontours$ the set of all possible contours on~$\fulllattice$. Note that we only consider a pair $(\contourbar, \contourassg{\contourbar})$ a contour if it can be constructed (in the same way as before) from assignments $\infassg$ of colors to $V(\fulllattice)$ such that there exists a color $j$ where only a finite number of sites are assigned colors other than~$j$. Note that this means each contour must have a finite support $\contourbar$.

\subparagraph{Labels and Compatibility.}
The assignment $\contourassg{\contourbar}$ of a contour $\contour$ must have the property that for each component $R$ of $V(\fulllattice) \setminus \contourbar$, every vertex of $\contourbar$ neighboring a vertex in $R$ must be assigned the same color. This gives us a label in $\colorset$ for each of these components. 
As $\contourbar$ is finite, there is one infinite component $\extr(\contour)$ of $V(\fulllattice) \setminus \contourbar$, which we call the exterior. The label of the exterior is what we call the \emph{type} of the contour $\contour$.
Furthermore, for each $j \in \colorset$, we define $\intr{j}(\contour)$ to be the union of the non-exterior components that are of type $j$ (Figure~\ref{fig:contourdiag}).

We say that two contours $\contour_1$, $\contour_2$ are \emph{compatible} if and only if $d_{\fulllattice}(\contourbar_1, \contourbar_2) \geq 2$, in terms of the graph distance $d_{\fulllattice}$ on $\fulllattice$. This notion of compatibility is relevant in our polymer model which is used to show that the partition functions can be expressed as convergent cluster expansions.

On the other hand, we say that a set of contours $\contourset$ defined over a graph $G$ is \emph{consistent} if the contours in the set are pairwise-compatible and for each component $U$ of $V(G)\setminus \contoursetbar$ (where $\contoursetbar = \bigcup_{\contour \in \contourset}\contourbar$), all vertices of $\partialext U$ have the same color.

\begin{figure}[!t]
\centering
\begin{tikzpicture}[x=0.6cm,y=0.6cm]
\input{diagrams/contourdiag}
\end{tikzpicture}
\caption{A contour $\contour = (\contourbar, \contourassg{\contourbar}) \in \allcontours_0$, defined over the infinite lattice $\fulllattice$. The contour divides the lattice into $\contourbar$, the exterior $\extr(\contour)$ (of type $1$) and the interiors $\intr{i}(\contour), i \in \colorset$. The assignment $\contourassg{\contourbar}$ applies colors to each vertex in $\contourbar$, which are shown in red. These colors dictate the label of each component of the subgraph of $\fulllattice$ induced by $V(\fulllattice)\setminus \contourbar$.}
\label{fig:contourdiag}
\end{figure}

\subparagraph{Sufficiently low temperatures.}
Most of the Lemmas that we will show over the course of this proof will rely on the inverse temperature $\beta$ and region side length $L$ being sufficiently large. To ensure that there is a single lower bound for $\beta$ and $L$ for which all lemmas hold, we pre-define constants $\betamin = \betamin(\rho,q,\costmatrixmin,\costmatrixmax)$ and $\Lmin(\beta) = \Lmin(\beta,\rho,q,\costmatrixmin,\costmatrixmax)$, and show that our statements hold for all $\beta \geq \betamin$ and $L \geq \Lmin(\beta)$. To determine the values of $\betamin$ and $\Lmin(\beta)$ and the bounds they need to satisfy, we refer the reader to Appendix~\ref{apx:allconstants}. Note however that most of these bounds have been intentionally made loose for simplicity. A proper discussion of the minimum value of $\beta$ and $L$ for such bounds to hold is outside the scope of this paper. In addition, some of our lemmas depend on additional parameters $\alphaA$ or $\alphaB$. We define $\betaminA(\alphaA)$ and $\betaminB(\alphaB)$ similarly for these purposes in Appendix~\ref{apx:allconstants}.

\subparagraph*{Partition Functions.}
For a finite region $V \subseteq V(\fulllattice)$, we denote by $\CFGFint{j}{V}$ the set of configurations on $V$ (as assignments of the particles of $V$ to colors in $\colorset$) that have contour configurations that are surrounded by color $j$ and have no contour that touches the internal boundary of $V$. This definition comes from Section~7.3 of~\cite{Friedli2017} and is used for the recursive construction of partition functions. To be precise,
\[
\CFGFint{j}{V} = \left\{ \infconfig \in \CFGF: \infconfig(v) = j \text{ for all } v \text{ where } d_{\fulllattice}(v,V^\compl) \leq 2 \right\}.
\]
The partition function for $\CFGFint{j}{V}$ is then given as:
\[
\ZFbd{j}(\beta, \hvec, V) = \sum_{\config \in \CFGFint{j}{V}} \exp\left(-\beta H (\sigma) + \h \cdot \mathbf{n} (\sigma)\right).
\]
We use similar definitions if $V \subseteq V(\regionL)$ instead (where $\regionL$ is the $L \times L$ sub-lattice of the triangular lattice with periodic boundary conditions), replacing $\CFGFint{j}{V}$ with $\CFGPint{L}{j}{V}$ and $\ZFbd{j}(\beta, \hvec, V)$ with $\ZPbd{L}{j}(\beta, \hvec, V)$.

\subparagraph*{Contour Weights.}
For $j \in \colorset$, denote by $\allcontoursbd{j}$ the set of contours of $\fulllattice$ of type $j$ (exterior boundary has label $j$). The weight $\wutr{j}(\contour)$ of a contour $\contour = (\contourbar, \contourassg{\contourbar}) \in \allcontoursbd{j}$ is given by the following recursive definition:
\begin{align*}
\wutr{j}(\contour) = e^{\contourcost{\contour}} \prod_{j' \in \colorset} \frac{\ZFbd{j'}(\beta, \hvec, \intr{j'}(\contour))}{\ZFbd{j}(\beta, \hvec, \intr{j'}(\contour))},
\text{ where }
\contourcost{\contour} = -\beta \Hcost{\contour} + \sum_{i \in \colorset}(h_i-h_j)n_i(\contourassg{\contourbar}).
\end{align*}
Note that the weight $\wutr{j}(\contour)$ depends on both the inverse temperature $\beta$ and the magnetic field $\hvec$.
As can be seen from Section 7.3.1 of~\cite{Friedli2017}, this definition of $\wutr{j}$ allows us to express the partition function $\ZFbd{j}(\beta, \hvec, V)$ in terms of a polymer model that uses only contours of type $j$, even if the interiors of the contours have different labels.

Denoting by $\contourcfgs_j(V)$ the set of mutually compatible contours in $\allcontoursbd{j}$ that are fully contained within $V \setminus \partialint(V)$ (where $\partialint(V)$ is the interior boundary of the vertex set $V$), this gives us the following alternative expression for the partition function:
\begin{align*}
\ZFbd{j}(\beta, \hvec, V) = e^{h_j |V|} \sum_{\contourset \in \contourcfgs_j(V)} \prod_{\contour \in \contourset} \wutr{j}(\contour).
\end{align*}
Being able to express our partition function with a polymer model allows us to write it as a cluster expansion, which will be used heavily for our later proofs. Note that this partition function is only used for illustration of the purpose of these contours. While these contours are defined on the infinite lattice, the main partition functions we will be analyzing will be in the periodic setting (\cref{lem:nonwrappingpartitionfunction}, which we will prove).

\subparagraph*{Truncated weights.}
Truncated weights are an important tool for our proofs, as they allow us to analyze the partition functions relating to individual ground states. We use the definition of truncated weights in~\cite{Borgs1990}; $\wtr{j}$ is a translation-invariant function that maps contours to nonnegative real values, just like $\wutr{j}$. However, the truncated weight differs by the addition of an additional factor, which truncates contours by giving them a weight of $0$ if their presence causes too significant impact on the partition function of their interior.
The truncated weight $\wutr{j}(\contour)$ of a contour $\contour$ and the truncated partition function $\ZFtr{j}(V)$ of a region $V \subseteq \fulllattice$ are defined together inductively on the size of the region $V$, as their definitions rely on each other.
The truncated weight of a contour $\contour$ is defined as:
\begin{align*}
\wtr{j}(\contour) := e^{\contourcost{\contour}} \chi\left(
\log \ZFtr{j}(V_\contour) -\log \ZFtr{j'}(V_\contour) + \beta \frac{\costmatrixmin}{16}|\contourbar|
\right) \prod_{j' \in \colorset} \frac{\ZFbd{j'}(\beta, \hvec, \intr{j'}(\contour))}{\ZFbd{j}(\beta, \hvec, \intr{j'}(\contour))},
\end{align*}
where $V_\contour = \contourbar \cup \bigcup_{i \in \colorset}\intr{i}(\contour)$ and $\chi$ is a smooth threshold function such that $\chi(x) = 0$ for $x \leq -1$ and $\chi(x) = 1$ for $x \geq 1$.
The truncated partition function of a region $V$ is defined in the same way as the untrucated partition function, but using the truncated weight in place of the contour weight:
\begin{align*}
\ZFtr{j}(\beta, \hvec, V) = e^{h_j |V|} \sum_{\contourset \in \contourcfgs_j(V)} \prod_{\contour \in \contourset} \wtr{j}(\contour).
\end{align*}
Because the precise definition of truncated weights is not relevant to our proofs, we defer the reader to either~\cite{Borgs1990} or~\cite{Friedli2017} for a detailed discussion of truncated weights.

We next define truncated pressures, which represent per-vertex contributions to the partition function of a region of ``mostly $j$'', for any $j \in \colorset$. In a configuration sampled from a region of the lattice, the colors with the highest pressures tend to dominate with high probability, which makes it difficult to analyze the contributions of the remaining colors to the partition function.
Thus, to be able to properly analyze a fixed-magnetization partition function with different colors dominating in different regions of the lattice, we make use of a specific choice of magnetic field $\hvecopt$ that equalizes the pressures of all colors. This is done in \cref{lem:balance}.

\begin{definition}[Truncated Pressure]
For each $j \in \colorset$ and magnetic field strength $\hvec \in \R^q$, we define:
\begin{align*}
\pressuretr{j}(\hvec) := \lim_{L \to \infty}\frac{1}{|V(\Lbox)|}\log \ZFtr{j}(\beta, \hvec, V(\Lbox)).
\end{align*}
\end{definition}

\subparagraph*{Stability of Weights.}
The terms of this partition function (as well as others we will later define and use) can be formally rearranged into a sum over clusters (a cluster expansion), which will be important for many of our proofs.
However, for the cluster expansion to converge, we would want our weights to be \emph{$\tau$-stable} for some $\tau > 0$. We present only the idea for the arguments, and refer the reader to~\cite{Borgs1990} and~\cite{Friedli2017} for a more thorough analysis.
\begin{definition}[$\tau$-stable]
A translation-invariant function $w$ mapping contours to nonnegative weights is $\tau$-stable if for all contours $\contour$ it satisfies
\[
w(\contour) \leq e^{-\tau|\contourbar|}.
\]
\end{definition}

Another important concept the balancing of pressures using a specific choice of magnetic field $\hvec$. Following the analysis seen in~\cite{Borgs1990,Friedli2017}, we can always find a choice of magnetic field $\hvecopt$ not too far from $\mathbf{0} \in \R^{\numcolors}$ such that all of pressures are equal. A pressure represents the contribution to the partition function per unit volume of a region, and so ground states corresponding to the largest pressures tend to dominate.
\cref{lem:balance} establishes this, and defines a small open set $\hoptopenset'$ around $\hvecopt$ within which the stability bounds hold.
\begin{lemma}[Balancing Pressures and Stability]
\label{lem:balance}
If $\beta \geq \betamin$ and $L \geq \Lmin(\beta)$, then there exists a magnetic field strength $\hvecopt \in \R^{q}$ such that $\|\hvecopt\|_{\infty} \leq \hoptbound$ and
\begin{align*}
\pressuretr{1}(\hvecopt) = \pressuretr{2}(\hvecopt) = \cdots = \pressuretr{\numcolors}(\hvecopt).
\end{align*}
Additionally, there exists an open set $\hoptopenset' \subseteq \R^q$ containing $\hvecopt$ such that for all $\hvec \in \hoptopenset$, colors $i,j \in \colorset$, and $k \in \derivset$, the weights satisfy
\[
\wutr{j}(\contour) \leq e^{-\beta\hetcost |\contourbar|/8}
\quad\text{and}\quad
\derivki{k}{i} \wtr{j}(\contour) \leq e^{-\beta\hetcost |\contourbar|/8}.
\]
For the rest of the paper, we will denote $\taub = \beta \hetcost / 8$, so that these weights are $\taub$-stable.
\end{lemma}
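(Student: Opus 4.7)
The plan is to carry out the standard Pirogov--Sinai stability induction while simultaneously solving for a magnetic field that balances the truncated pressures, as the two parts of the lemma are intertwined: $\pressuretr{j}(\hvec)$ is only well-defined as a convergent limit once the truncated weights $\wtr{j}$ are $\taub$-stable, while the natural proof of $\taub$-stability uses a cluster expansion whose own convergence requires $\taub$-stability at smaller contour sizes.

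First, I would prove $\taub$-stability of $\wtr{j}$ together with its $h_i$-derivatives by induction on $|\contourbar|$, uniformly in $\hvec$ over an $\ell^\infty$-box of radius $\hoptbound$ around the origin. Using $\Hcost{\contour} \geq \costmatrixmin|\contourbar|$ and $\bigl|\sum_i (h_i - h_j) n_i(\contourassg{\contourbar})\bigr| \leq 2\|\hvec\|_\infty |\contourbar|$, we have $\exp(\contourcost{\contour}) \leq \exp(-(\beta\costmatrixmin - 2\hoptbound)|\contourbar|)$. This already gives more than $\taub$-stability, provided the partition-function ratios $\ZFbd{j'}/\ZFbd{j}$ appearing in $\wtr{j}(\contour)$ are at most $e^{(\taub/2)|\contourbar|}$. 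To control these ratios I would apply the inductive hypothesis: for every interior region strictly smaller than the current contour's support plus interior, the cluster expansion of $\log\ZFtr{j}$ converges and yields $\log\ZFtr{j}(\beta,\hvec,V) = (\pressuretr{j}(\hvec)+h_j)|V| + O(|\partial V|)$. Combined with the smooth $\chi$ cutoff---which sends $\wtr{j}(\contour)$ to $0$ exactly when the ratio above would violate the ansatz---this closes the induction. Derivative bounds follow by the same scheme, with the extra factors of $\chi'$ and the differentiated logarithms of partition functions absorbed into the slack between $\costmatrixmin/4$ (the natural stability exponent) and $\costmatrixmin/8$ (the one the lemma claims).

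Second, to locate $\hvecopt$, I would analyze the map $\Phi : \hvec \mapsto (\pressuretr{j}(\hvec) - h_j)_{j \in \colorset}$. The cluster-expansion bounds from step one show that each coordinate of $\Phi$ is analytic in $\hvec$ and, together with its Jacobian, is bounded by $O(e^{-\taub\minlength/2})$. The balancing equations $\pressuretr{1} = \cdots = \pressuretr{\numcolors}$ can therefore be rewritten as $h_j - h_\numcolors = \Phi_j(\hvec) - \Phi_\numcolors(\hvec)$ for $j < \numcolors$ (fixing $h_\numcolors = 0$ by translation-invariance of the constraints), a system to which the Banach fixed-point theorem applies on the $\ell^\infty$-ball of radius $\hoptbound = 3\numcolors e^{-\taub\minlength/4}$. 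This yields $\hvecopt$, and taking any sufficiently small open neighborhood of $\hvecopt$ sitting inside that ball inherits the stability bounds of step one at every point.

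The main obstacle is the circularity of the induction: the definition of $\wtr{j}$ refers to partition-function ratios whose control in turn depends on stability at smaller scales. Resolving it requires (i) treating $\hvec$ as a free parameter throughout step one so that step two has a common domain on which to operate, (ii) ensuring that the $\chi$ cutoff is active precisely on those contours that would otherwise destroy stability, and (iii) allocating enough slack in the exponent so that a single value $\taub = \beta\costmatrixmin/8$ absorbs all polynomial-in-$|\contourbar|$ losses introduced by differentiation and the $O(|\partial V|)$ boundary corrections from the cluster expansion. These considerations are what force the specific constants $\hoptbound = 3\numcolors e^{-\taub\minlength/4}$ and the exponent $1/8$ appearing in the statement.
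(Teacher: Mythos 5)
Your proposal is morally correct but takes a substantially more self-contained route than the paper. The paper delegates the stability statement and the existence of the balancing field $\hvecopt$ almost entirely to Lemmas~2.1 and~2.3 of Borgs and Imbrie (the reference \texttt{[Borgs1990]}), and supplies only the short ``extra'' argument needed for the explicit bound $\|\hvecopt\|_\infty \leq \hoptbound$: it writes $\pressuretr{j}(\hvec) = h_j + \sum_{\cluster \ni 0}\frac{1}{|\clusterbar|}\clwtr{j}(\cluster)$ with the cluster correction bounded by $e^{-\taub\minlength/4}$, so a color whose field exceeds all others by more than $3e^{-\taub\minlength/4}$ cannot have balanced pressure, which forces the balance point into the desired $\ell^\infty$-ball. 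You instead sketch the inductive Pirogov--Sinai argument that \emph{proves} the stability of $\wtr{j}$ and its $h$-derivatives (induction on $|\contourbar|$, with the $\chi$-cutoff closing the recursion), followed by a Banach fixed-point argument locating $\hvecopt$. That is what \texttt{[Borgs1990]} does under the hood, so the two approaches agree in substance; the paper just outsources the bulk of the work. Two small inaccuracies in your sketch worth flagging: (i) the lower bound on the contour cost on the triangular lattice is $\Hcost{\contour} \geq \frac{\costmatrixmin}{2}|\contourbar|$ (\cref{lem:weightedcostratio}), not $\costmatrixmin|\contourbar|$, which shrinks your stability exponent by a factor of two (there is still slack, so the conclusion survives); and (ii) the truncated pressures are \emph{not} analytic in $\hvec$ because the cutoff $\chi$ is only smooth, so you should invoke a $C^1$ contraction-mapping argument (bounded Jacobian, which your cluster-expansion estimates do give) rather than analyticity.
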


\begin{proof}
The stability of the weights and the derivatives of the truncated weights within a small region $\hoptopenset'$ around $\mathbf{0}$ come from Lemmas 2.1 and 2.3 of~\cite{Borgs1990}.
By the same paper, $\hvecopt$ exists and is an interior point of this region. We make an additional argument below, following the analysis in~\cite{Friedli2017} to explain the upper bound for $\|\hvecopt\|_\infty$.

As the truncated weights $\wtr{j}(\contour)$ are $\taub$-stable, the cluster expansion 
$\sum_{\cluster \in \clustersetbd{j}} \clwtr{j}(\cluster)$ converges by \cref{lem:convergencecriterion} for $\beta \geq \betamin$ (Appendix~\ref{apx:allconstants}), gives us the following expression, where we denote $V = V(\regionL)$ for brevity.
\begin{align*}
\log \ZFtr{j}(\beta, \hvec, V(\Lbox))
&= h_j|V| + \sum_{\substack{\cluster \in \clustersetbdv{j}{V(\regionL)}}} \clwtr{j}(\cluster) \\
&= h_j|V| +\sum_{x \in V} \sum_{\substack{\cluster \in \clustersetbd{j}, \clusterbar \ni x}} \frac{1}{|\clusterbar|} \clwtr{j}(\cluster)
- \sum_{x \in V} \sum_{\substack{\cluster \in \clustersetbdv{j}{V(\regionL)}, \clusterbar \ni x \\ \clusterbar \cap \partialint V \neq \emptyset}} \frac{1}{|\clusterbar|} \clwtr{j}(\cluster) \\
&= h_j|V| +|V| \sum_{\substack{\cluster \in \clustersetbd{j}, \clusterbar \ni 0}} \frac{1}{|\clusterbar|} \clwtr{j}(\cluster)
- \sum_{\substack{x \in \partialint V}} \sum_{\substack{\cluster \in \clustersetbdv{j}{V(\regionL)}, \clusterbar \ni x \\ \clusterbar \cap (V \setminus \partialint V) \neq \emptyset}} \frac{1}{|\clusterbar \cap \partialint V|} \clwtr{j}(\cluster),
\end{align*}
This gives us the following approximation for the log partition function:
\begin{align*}
\left|\frac{1}{|V(\regionL)|}\log \ZFtr{j}(\beta, \hvec, V(\Lbox)) - h_j - \sum_{\contour \in \clustersetbd{j}, \clusterbar \ni 0} \frac{1}{|\clusterbar|}\clwtr{j}(\contour) \right|
\leq |\partialext V(\regionL)| e^{-\taub \minlength/4}.
\end{align*}
Taking the limit as $L \to \infty$, we thus get the following expression for the pressure:
\begin{align*}
\pressuretr{j}(\hvec) = h_j + \sum_{\cluster \in \clustersetbd{j}, \clusterbar \ni 0}\frac{1}{|\clusterbar|} \clwtr{j}(\cluster),
\text{ where }
\left|\sum_{\cluster \in \clustersetbd{j}, \clusterbar \ni 0}\frac{1}{|\clusterbar|} \clwtr{j}(\cluster) \right| \leq e^{-\taub \minlength/4}.
\end{align*}
What this allows us to conclude is that for any $j \in \colorset$, if $\ds h_j \geq 3e^{-\taub \minlength/4} + \max_{i \neq j} h_i$, we must have $\ds \pressuretr{j}(\hvec) > \max_{i \neq j}\pressuretr{i}(\hvec)$.
Thus, as $\hvecopt$ is where the pressures are equalized, for each $j \in \colorset$, there must be another color $j' \neq j$ where $|\hoptj{j'} - \hoptj{j}| \leq 3e^{\taub \minlength/4}$. Consequently, as magnetic field strengths only matter relative to each other, we may set $h_1 = 0$ and observe that this necessarily implies that $\|\hvecopt\|_\infty \leq \hoptbound$.
\end{proof}
\cref{lem:truncatedequal} then establishes the primary reason for why we are interested in the derivatives of the truncated partition functions; it allows us to analyze the same derivatives of the untruncated partition functions at $\hvecopt$ with the truncated partition functions under the same boundary conditions, as they are equal within another open set $\hoptopenset''$ containing $\hvecopt$.
\begin{lemma}[Equivalence of truncated and untruncated partition functions]
\label{lem:truncatedequal}
There exists an open set $\hoptopenset''$ containing $\hvecopt$ such that for all $\hvec \in \hoptopenset''$, colors $i, j \in \colorset$ and $k \in \derivset$, we have
\begin{align*}
\derivki{k}{i} \wtr{j}(\contour) 
&= \derivki{k}{i}  \wutr{j}(\contour).
\end{align*}
for all contours $\contour \in \allcontoursbd{j}$.
Consequently, truncated and untruncated partition functions over any region $V \in V(\fulllattice)$, which are defined as sums of contour weights, will be equal:
\begin{align*}
\derivki{k}{i} \log \ZPtr{L}{j}(\beta, \hvecopt, V) 
&= \derivki{k}{i}  \log \ZPbd{L}{j}(\beta, \hvecopt, V).
\end{align*}
\end{lemma}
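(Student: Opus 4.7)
The plan is to reduce the lemma to showing that the truncation cutoff $\chi(\cdot)$ in the definition of $\wtr{j}$ equals $1$ for every contour throughout some open neighborhood $\hoptopenset''$ of $\hvecopt$. Comparing the definitions, $\wtr{j}(\contour)$ and $\wutr{j}(\contour)$ agree except for the factor $\chi(a_\contour(\hvec))$, where $a_\contour(\hvec) = \log \ZFtr{j}(\hvec, V_\contour) - \log \ZFtr{j'}(\hvec, V_\contour) + \beta\costmatrixmin|\contourbar|/16$. On any open set $\hoptopenset''$ where $a_\contour(\hvec) \geq 1$ holds for every contour $\contour$, we have $\chi \equiv 1$ and so $\wtr{j} = \wutr{j}$ as smooth functions of $\hvec$; this immediately yields equality of all $\derivki{k}{i}$ derivatives for $k \in \derivset$.

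First I would verify the pointwise bound $a_\contour(\hvecopt) \geq 1$ at the balance point. By \cref{lem:balance}, the truncated weights are $\taub$-stable, so the cluster expansion for the truncated partition function converges and yields $\log \ZFtr{j}(\hvecopt, V) = \pressuretr{j}(\hvecopt)|V| + O(|\partial V|)$ with an absolute $O(\cdot)$ constant. Because \cref{lem:balance} equalizes the truncated pressures at $\hvecopt$, the leading $|V_\contour|$-terms cancel across $j$ and $j'$, leaving $\log \ZFtr{j}(\hvecopt, V_\contour) - \log \ZFtr{j'}(\hvecopt, V_\contour) = O(|\partial V_\contour|) = O(|\contourbar|)$. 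Combined with the $\beta\costmatrixmin|\contourbar|/16$ term and $|\contourbar| \geq \minlength$, this gives $a_\contour(\hvecopt) \geq (\beta\costmatrixmin/16 - C)|\contourbar| \geq 1$ once $\beta \geq \betamin$.

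To extend to a neighborhood, I would use the $\taub$-stability of the derivatives $\derivki{k}{i}\wtr{j}$ from \cref{lem:balance} to differentiate the cluster expansion for $\log \ZFtr{j}(\hvec, V_\contour)$ term-by-term and produce a Lipschitz-type estimate for $\hvec \mapsto a_\contour(\hvec)$. Shrinking $\hoptopenset''$ so that this Lipschitz bound times $\|\hvec - \hvecopt\|_\infty$ remains strictly less than $(\beta\costmatrixmin/16 - C)|\contourbar| - 1$ preserves $a_\contour(\hvec) \geq 1$ throughout $\hoptopenset''$. The partition-function statement then follows because, on $\hoptopenset''$, the polymer sums defining $\ZFtr{j}(\hvec,V)$ and $\ZFbd{j}(\hvec,V)$ agree term-by-term, so the two log partition functions coincide as smooth functions on $\hoptopenset''$, whence their $\derivki{k}{i}$ derivatives match pointwise.

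The main obstacle I expect is the uniform-in-$\contour$ control needed for the neighborhood extension. Naive bounds on $\partial_\hvec \log \ZFtr{j}(\hvec, V_\contour)$ scale with the volume $|V_\contour|$, which can be as large as $|\contourbar|^2$ by isoperimetry, forcing $\hoptopenset''$ to shrink with $|\contourbar|$. Avoiding a contour-dependent neighborhood requires exploiting that the volume-order contributions to $\partial_\hvec \log \ZFtr{j}$ again cancel across $j$ and $j'$ near $\hvecopt$, as a consequence of the balance of pressure gradients established via the cluster expansion; only surface-order perturbations of size $O(|\contourbar|)$ then survive, which can be absorbed into the $\beta\costmatrixmin|\contourbar|/16$ safety margin and give a contour-independent $\hoptopenset''$.
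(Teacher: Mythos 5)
Your plan takes a genuinely different route from the paper: the paper's proof of this lemma is essentially a one-line citation to Lemma~2.1(iii) of Borgs--Koteck\'y~\cite{Borgs1990}, which directly provides the open set in which $\wtr{j} = \wutr{j}$ once the truncated pressures coincide. You instead try to rederive that fact by analyzing the cutoff $\chi$ via cluster-expansion estimates. The first half of your argument (verifying $a_\gamma(\hvecopt) \geq 1$ at the balance point from pressure equality plus surface-order corrections) is sound in spirit, and you correctly isolate the real obstacle: $a_\gamma(\hvec)$ contains a volume-order term $|V_\gamma|\bigl(\pressuretr{j}(\hvec) - \pressuretr{j'}(\hvec)\bigr)$, so naive perturbation forces the neighborhood to shrink with $|\overline\gamma|$.

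The proposed fix is where the argument breaks. You claim the volume-order contributions to $\partial_\hvec\log\ZFtr{j}$ ``cancel across $j$ and $j'$ near $\hvecopt$, as a consequence of the balance of pressure gradients.'' But the pressure \emph{gradients} are not balanced, and cannot be: by \cref{defn:expecteddensity} and \cref{lem:densityexpression},
\begin{align*}
\frac{\partial}{\partial h_i}\pressuretr{j}(\hvec) \;=\; \theta_{i,j}(\beta,\hvec), \qquad \bigl|\theta_{i,j} - \mathbbm{1}_{i=j}\bigr| \leq e^{-\taub\minlength/4},
\end{align*}
so $\nabla_\hvec \pressuretr{j}$ is close to the $j$-th coordinate vector $e_j$. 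For $j' \neq j$ these gradients are nearly orthogonal, not equal, and their difference is $\Theta(1)$ rather than $o(1)$. Hence moving $\hvec$ a distance $\varepsilon$ off $\hvecopt$ changes $\log\ZFtr{j}(V_\gamma) - \log\ZFtr{j'}(V_\gamma)$ by $\Theta(\varepsilon|V_\gamma|)$, which overwhelms the $\beta\costmatrixmin|\overline\gamma|/16$ safety margin for large contours since $|V_\gamma|$ can scale like $|\overline\gamma|^2$. \cref{lem:balance} balances the pressures \emph{at} $\hvecopt$; it does not balance their derivatives, and in fact the whole construction of the adjusted densities $\rhovecbeta = \Theta^{-1}\rhovec$ in \cref{subsec:construct good contours} relies on the matrix $\Theta = (\theta_{i,j})$ being non-singular and close to the identity, which directly contradicts your needed gradient cancellation. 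Without that fix, you only obtain a contour-dependent neighborhood, which does not yield the contour-uniform open set $\hoptopenset''$ the lemma asserts. The intended argument (as in Borgs--Koteck\'y) controls $\log\ZFtr{j}(V_\gamma) - \log\ZFtr{j'}(V_\gamma)$ via the sign structure of the pressure differences---using that the relevant interior regions of a type-$j$ contour carry labels $j'$ that are themselves nearly stable---rather than by hoping for cancellation of gradients; simply citing that result, as the paper does, is the safer route here.
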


\begin{proof}
This comes from Lemma 2.1(iii) of~\cite{Borgs1990}. As $\pressuretr{j}(\hvecopt) - \min_{j'}\{\pressuretr{j'}(\hvecopt)\} = 0$ for each $j \in \colorset$, according to the lemma, there is an open set in $\R^q$ containing $\hvecopt$ such that for all $\hvec$ in this set, we have
\begin{align*}
\wtr{j}(\contour) = \wutr{j}(\contour).
\end{align*}
The equality of the derivatives within this set thus follows.
\end{proof}
Our bounds later on will focus specifically on $\hvecopt$, but the open set $\hoptopenset = \hoptopenset' \cap \hoptopenset''$ is necessary for the discussion of derivatives of the partition functions.
For the rest of this paper, we only consider magnetic fields $\hvec$ within $\hoptopenset$.

\subsection{Non-contractible contours in the periodic setting}\label{subsec:wrap}
One issue when working with periodic boundaries is that contours do not have a well-defined exterior. To ensure that the properties of truncated weights continue to hold in the periodic setting, we analyze our partition functions while omitting contours which have exteriors that do not coincide with their exterior if they were moved to the infinite lattice $\fulllattice$.
\begin{definition}
Given a finite periodic sub-lattice $\regionL$ and a non-empty subset $V$ of its vertices, a contour in $\regionL$ is \emph{non-contractible} in $V$ if it contains a closed path that cannot be contracted to a single point without increasing in length or crossing $V^\compl$. Otherwise, we call the contour \emph{contractible}.
\end{definition}

For any contour that is contractible in the region $V$, if it moved to the infinite lattice $\fulllattice$, it will have an exterior that coincides with the component of $V(\regionL) \setminus V$ that contains the complement $V^\compl$.
For the rest of this paper, for any weight function $w: \allcontoursbd{j} \to \R_{\geq 0}$ which we assume to be translation-invariant, we may apply the same weight function to any contour $\contour \in \CFGPSint{L}{j}{V}$, by ``unwrapping'' $\contour$ to be a contour in the infinite lattice $\fulllattice$, and taking its weight under $w$. This can be done because of the translation invariance of $w$.

When we analyze partition functions in the finite periodic lattice $\regionL$, we often look at $\CFGPSint{L}{j}{V}$, the set of configurations on the lattice that contain only contours that are contractible in $V$,
as opposed to the set $\CFGPint{L}{j}{V}$ of all configurations on $V$ in the lattice $\regionL$.
The partition function corresponding to $\CFGPSint{L}{j}{V}$ is $\ZPSbd{L}{j}(\beta, \hvec, V)$.
\begin{lemma}[Partition function of contractible contours]
\label{lem:nonwrappingpartitionfunction}
Denoting by $\contourcfgsperbdv{L}{j}{V}$ the set of all mutually compatible contours of type $j$ in $\regionL$ that are contractible in $V$, we have
\begin{align*}
\ZPSbd{L}{j}(\beta, \hvec, V) = e^{h_j |V|} \sum_{\contourset \in \contourcfgsperbdv{L}{j}{V}} \prod_{\contour \in \contourset} \wutr{j}(\contour).
\end{align*}
\end{lemma}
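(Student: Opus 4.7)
The plan is to mirror the classical polymer-expansion derivation of the identity $\ZFbd{j}(\beta,\hvec,V) = e^{h_j|V|}\sum_{\contourset\in\contourcfgs_j(V)}\prod_{\contour\in\contourset}\wutr{j}(\contour)$ on $\fulllattice$ (Section~7.3.1 of~\cite{Friedli2017}), adapting it to the periodic setting by exploiting the contractibility restriction. The key observation is that a contour $\contour$ which is contractible in $V$ behaves in every relevant respect like a finite contour in $\fulllattice$: the unique component of $V\setminus\contourbar$ meeting $\partialint V$ serves as its ``exterior in $V$,'' while the remaining components are its interiors $\intr{j'}(\contour)$, which together form a planar, non-wrapping subregion of $\regionL$. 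Consequently the nested structure of outermost, second-outermost, etc.\ contours in any $\sigma\in\CFGPSint{L}{j}{V}$ is well defined exactly as in the planar case, and translation invariance of~$\wutr{j}$ lets us evaluate weights as if the contours lived in $\fulllattice$.

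First I would group the sum $\ZPSbd{L}{j}(\beta,\hvec,V)=\sum_{\sigma\in\CFGPSint{L}{j}{V}}\exp\!\bigl(-\beta H(\sigma)+\hvec\cdot\nfixedvec(\sigma)\bigr)$ according to the set $\contourset_{\mathrm{ext}}$ of outermost contours in~$\sigma$. Splitting $H$ and $\nfixedvec$ over (i)~the exterior of $\contourset_{\mathrm{ext}}$ in $V$, which is monochromatic of color~$j$, (ii)~each boundary $\contourbar$, and (iii)~each interior $\intr{j'}(\contour)$, the inner sum over configurations in $\intr{j'}(\contour)$ produces a partition function of the same form on a strictly smaller, non-wrapping region, hence equal to $\ZFbd{j'}(\beta,\hvec,\intr{j'}(\contour))$. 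Writing each such interior factor as the product of $\ZFbd{j}(\beta,\hvec,\intr{j'}(\contour))$ with the ratio $\ZFbd{j'}(\beta,\hvec,\intr{j'}(\contour))/\ZFbd{j}(\beta,\hvec,\intr{j'}(\contour))$, absorbing the exponent $e^{h_j(|\contourbar|+\sum_{j'}|\intr{j'}(\contour)|)}$ into the exterior volume term to produce a global prefactor of $e^{h_j|V|}$, and using $\sum_i n_i(\contourassg{\contourbar})=|\contourbar|$ to convert the boundary contribution into $e^{\contourcost{\contour}}$, the per-contour factors regroup exactly into $\wutr{j}(\contour)$. After this rearrangement the expression reduces to $e^{h_j|V|}$ times a sum over $\contourset_{\mathrm{ext}}$, each contour contributing $\wutr{j}(\contour)$ multiplied, in each of its interiors, by a residual factor $\ZFbd{j}(\beta,\hvec,\intr{j'}(\contour))\,e^{-h_j|\intr{j'}(\contour)|}$.

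The final step is induction on $|V|$. Each residual factor $\ZFbd{j}(\beta,\hvec,\intr{j'}(\contour))$ is a partition function of the form already treated on the infinite lattice, so the cited identity from \cite{Friedli2017} expands it as $e^{h_j|\intr{j'}(\contour)|}\sum_{\contourset\in\contourcfgs_j(\intr{j'}(\contour))}\prod_{\contour\in\contourset}\wutr{j}(\contour)$, cancelling the $e^{-h_j|\intr{j'}(\contour)|}$ factor. Concatenating the outermost layer $\contourset_{\mathrm{ext}}$ with the expansions produced inside each interior yields a single sum over all mutually compatible contours of type~$j$ sitting inside $V\setminus\partialint V$. The main bookkeeping point---and the only place the contractibility restriction enters nontrivially---is to verify that this resulting index set is exactly $\contourcfgsperbdv{L}{j}{V}$: any family assembled this way consists of contractible contours (since any contour nested inside a contractible contour is itself contractible), and conversely any compatible family in $\contourcfgsperbdv{L}{j}{V}$ can be reconstructed layer by layer by the same procedure, using that the interior of every contractible contour is a planar subregion of $\regionL$. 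This equivalence completes the proof.
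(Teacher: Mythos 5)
Your proposal is correct and follows essentially the same route as the paper: group the sum by the set of outermost contours, factorize the Boltzmann weight over the exterior, contour supports, and interiors, multiply-and-divide by the type-$j$ interior partition function to assemble $\wutr{j}(\contour)$, and recurse into the interiors. The one superficial difference is that you invoke the Friedli--Velenik identity on the infinite lattice for the inner factors after identifying $\intr{j'}(\contour)$ with a non-wrapping region of $\fulllattice$, whereas the paper keeps everything in $\ZPSbd{L}{j}$ and closes the recursion by induction on $|V|$; both versions ultimately rely on the same identification (which is also needed to recognize the ratio $\ZPSbd{L}{j'}/\ZPSbd{L}{j}$ on the interiors as the infinite-lattice ratio $\ZFbd{j'}/\ZFbd{j}$ appearing in the definition of $\wutr{j}$), so the content is the same.
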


\begin{proof}
Each configuration $\config$ of $\CFGPSint{L}{j}{V}$ can be divided into contours $\contourset(\config)$. Denoting by $\contoursetext(\config)$ the set of contours in $\contourset(\config)$ that are not within any interior component of any other contour in $\contourset(\config)$, we have by summing over all possible values of $\contoursetext$ and by induction on the size of $V$:
\begin{align*}
\ZPSbd{L}{j}(\beta, \hvec, V)
&= e^{h_j|V|} \sum_{\contoursetext} \prod_{\contour \in \contoursetext} \left( e^{\contourcost{\contour}} \prod_{j' \in \colorset} \left(e^{ -h_j|\intr{j'}(\contour)|}\ZPSbd{L}{j'}(\beta,\hvec,\intr{j'}(\contour))\right)\right) \\
&= e^{h_j|V|} \sum_{\contoursetext} \prod_{\contour \in \contoursetext} \left( e^{\contourcost{\contour}} \prod_{j' \in \colorset} \left(\frac{\ZPSbd{L}{j'}(\beta,\hvec,\intr{j'}(\contour))}{\ZPSbd{L}{j}(\beta,\hvec,\intr{j'}(\contour))}e^{ -h_j|\intr{j'}(\contour)|}\ZPSbd{L}{j}(\beta,\hvec,\intr{j'}(\contour))\right)\right) \\
&= e^{h_j|V|} \sum_{\contoursetext(\config)} \prod_{\contour \in \contoursetext(\config)} \left( \wutr{j}(\contour) \prod_{j' \in \colorset} \left(e^{ -h_j|\intr{j'}(\contour)|}  \cdot  e^{h_j |\intr{j'}(\contour)|} \sum_{\contourset \in \contourcfgsperbdv{L}{j}{\intr{j'}(\contour)}} \prod_{\contour \in \contourset} \wutr{j}(\contour)  \right)\right) \\
&= e^{h_j |V|} \sum_{\contourset \in \contourcfgsperbdv{L}{j}{V}} \prod_{\contour \in \contourset} \wutr{j}(\contour). \qedhere
\end{align*}
\end{proof}

We focus on partition functions with contractible contours as we analyze configurations by splitting contours of configurations into those that ``wrap'' and those that do not. To state this clearly, we make the following definition:

\begin{definition}[Closure under contractibility]
\label{defn:closedunderwrapping}
Given a configuration $\config$ on $\regionL$ and a subset $\conbad$ of the contours of $\config$, we say that $\conbad$ is closed under contractibility in $\config$ if no component of $V(\regionL) \setminus \conbadbar$ (where $\conbadbar = \bigcup_{\contour \in \conbad} \contourbar$) contains a contour of~$\config$ that is non-contractible in it.
For a set of contours $\conbad$, we denote by $\CFGPc{L}{\conbad}$ the set of configurations $\config$ which contain the contours in $\conbad$ and where $\conbad$ is closed under contractibility in $\config$.
\end{definition}
We note that a set of contours that is closed under contractibility in some configuration must necessarily be consistent.
Thus, if $V_1(\conbad), V_2(\conbad), \ldots, V_K(\conbad)$ are the components of $V(\regionL) \setminus \conbadbar$, the contour set $\conbad$ being closed under contractibility means that each one of these components $V_k$ will have a well-defined label $j_k$, in the sense that every vertex of $\conbadbar$ adjacent to a vertex in $V_k$ will have color $j_k$.
This allows us to write the partition function corresponding to $\CFGPc{L}{\conbad}$ as:
\begin{align}
\label{eqn:wrapexpansion}
\ZPc{L}{\conbad}(\beta, \hvec) = e^{\sum_{\contour \in \conbad} \contourcost{\contour}} \prod_{k=1}^K \ZPSbd{L}{j_k}(\beta, \hvec, V_k).
\end{align}

\section{Proof of \cref{thm:main}}
In the proof of \cref{thm:main}, we compare the partition functions corresponding to some consistent contour set $\conbad$ on $\regionL$ to that of an ``optimal'' contour set $\congood$ that we will construct.
Let $\CFGPcf{L}{\conbad}{\nfixedvec}$ and $\CFGPcf{L}{\congood}{\nfixedvec}$ denote the fixed-magnetization versions of the sets of configurations $\CFGPc{L}{\conbad}$ and $\CFGPc{L}{\conbad}$ corresponding to $\conbad$ and $\congood$ respectively as given in \cref{defn:closedunderwrapping}, i.e., with the restriction that the number of vertices of color $i$ is $\nj{i}$ for each $i \in \colorset$.

Denote by $\ZPcf{L}{\conbad}{\nfixedvec}(\beta,\hvec)$ the partition function for $\CFGPcf{L}{\conbad}{\nfixedvec}$ at the fixed magnetization $\nfixedvec$, while $\ZPc{L}{\conbad}(\beta, \hvec)$ denotes the partition function for $\CFGPc{L}{\conbad}$ without the restriction of fixed magnetization.
The approach to the proof is to establish the four inequalities from \cref{eq:strategy1}. The first of the four is the following inequality for the ``bad contours'' $\conbad$, which easily applies for any choice of inverse temperature~$\beta$ as $\CFGPcf{L}{\conbad}{\nfixedvec}$ is a subset of $\CFGPc{L}{\conbad}$:
\begin{align}
\label{eqn:badineq}
\ZPcf{L}{\conbad}{\nfixedvec}(\beta,\hvecopt) \leq \ZPc{L}{\conbad}(\beta, \hvecopt).
\end{align}
The bulk of the proof will be covering the second and third inequalities. However, before this can be done, fundamentals about cluster expansions, the grand canonical ensemble and $\congood$ need to be established.
\subsection{Adjusted densities and constructing $\congood$}\label{subsec:construct good contours}
In the grand canonical ensemble (no restrictions on $n_0(\config),n_1(\config),\ldots,n_{\numcolors}(\config)$), the expected number of vertices with color $i$ in a region $V$ with boundary condition $j$ will be equal to the derivative of the log of the partition function $\ZFbd{j}(V) = \ZFbd{j}(\beta, \hvec, V)$ with respect to $h_i$:
\begin{align*}
\left\langle n_{i} \right\rangle_{\hvec,\beta} = \frac{1}{\ZFbd{j}(V)} \frac{\deriv \ZFbd{j}(V)}{\deriv h_i} = \frac{\deriv}{\deriv h_i} \log \ZFbd{j}(V).
\end{align*}
To construct $\congood$ for each $L$, we want to think about expected densities in regions of ``mostly $j$''.
However, the standard partition function for the grand canonical ensemble allows for large contours, and so even if we had set a boundary condition of $j$ in a region, the primary contributors to the partition function may not come from configurations where the color is ``mostly $j$''.

The precise definition of $\theta_{i,j}(\beta, \hvec)$ requires the partition function $Z(\beta, \hvec, \regionL)$ over a region $\regionL$ to be replaced by the truncated partition function $\ZFtr{j}(\beta, \hvec, \regionL)$ for a region of mostly $j$. Intuitively, the truncated partition function does not allow large flips in color, but for colors $j$ with the largest pressures, will roughly match the full partition function.
\begin{definition}[Expected Density]
\label{defn:expecteddensity}
Fix a magnetic field $\hvec \in \hoptopenset$. We denote by $\theta_{i,j}(\beta, \hvec)$ the expected density of color $i$ in a ``region of mostly $j$'', which we define as follows:
\[
\theta_{i,j}(\beta, \hvec) = \frac{\deriv}{\deriv h_i} \lim_{L \to \infty} \left(\frac{1}{|V(\Lbox)|} \log \ZFtr{j}(\beta, \hvec, V(\Lbox))\right).
\]
\end{definition}

We wish for the truncated pressures to be balanced, so for the rest of the discussion, we will be using $\hvecopt$ from \cref{lem:balance} as our magnetic field. For brevity, we will denote $\theta_{i,j} = \theta_{i,j}(\beta, \hvecopt)$ for the rest of this section. In addition, whenever a partition function is used without specifying the inverse temperature or magnetic field, the values are assumed to be $\beta$ and $\hvecopt$ respectively.

For each possible (large) value of $L$, we define the ``optimal contours set'' as $\congood = \congood(\beta, \hvecopt)$.
The contour set $\congood$ specifies the color of each interior, which partitions the region $V(\regionL) \setminus \overline{\congood}$ into $\numcolors$ (not necessarily connected) regions, denoted $\reggood{L}{1}$, $\reggood{L}{2}$, $\ldots$, $\reggood{L}{\numcolors}$, representing the unions of the components of $V(\regionL) \setminus \overline{\congood}$ that correspond to each color.
The following statements are written with $\numcolors = 3$ as an illustrative example to keep the matrices easy to read.
To construct $\congood$, we wish to have:
\begin{align}
\label{eqn:simeq}
\begin{split}
|\reggood{L}{1}|\theta_{1,1} + |\reggood{L}{2}|\theta_{1,2} + |\reggood{L}{3}|\theta_{1,3} &\approx \rho_1 \numparticles \\
|\reggood{L}{1}|\theta_{2,1} + |\reggood{L}{2}|\theta_{2,2} + |\reggood{L}{3}|\theta_{2,3} &\approx \rho_2 \numparticles \\
|\reggood{L}{1}|\theta_{3,1} + |\reggood{L}{2}|\theta_{3,2} + |\reggood{L}{3}|\theta_{3,3} &\approx \rho_3 \numparticles
\end{split}
\end{align}
Where the approximate equals sign is used due to rounding.
Thus by defining the following:
$$
\Theta = \begin{pmatrix}
\theta_{1,1} & \theta_{1,2} & \theta_{1,3} \\
\theta_{2,1} & \theta_{2,2} & \theta_{2,3} \\
\theta_{3,1} & \theta_{3,2} & \theta_{3,3}
\end{pmatrix}
\text{, }
\rhovec = 
\begin{pmatrix}
\rho_1 \\
\rho_2 \\
\rho_3
\end{pmatrix},
$$
We thus require that our target density $\rhovec$ is a within the convex hull of the column vectors of $\Theta$. 
As $\rhovec$ is a convex combination of the columns of $\Theta$, the following solution exists for $|\reggood{L}{1}|$, $|\reggood{L}{2}|$, $|\reggood{L}{3}|$ with entries in $\{0,1,2,\ldots,\numparticles\}$:
\[
\begingroup
\renewcommand*{\arraystretch}{1.25}
\begin{pmatrix}
|\reggood{L}{1}| \\
|\reggood{L}{2}| \\
|\reggood{L}{3}|
\end{pmatrix}
\endgroup
\approx \left(\Theta^{-1}\rhovec\right) \numparticles
\]
Thus, to construct the contour set $\congood$ itself (and thus the regions $\reggood{L}{1},\reggood{L}{2},\ldots,\reggood{L}{\numcolors}$), we first define the \emph{adjusted densities} $\rhovecbeta = \Theta^{-1}\rhovec$, and construct a minimal cost subdivision $\optconfign{\rhovecbeta}$ according to \cref{dfn:minimalcostsubdivision}. The validity of this construction is shown later in \cref{lem:rhovecvalidity}, as it requires additional properties of $\theta_{i,j}$ which are discussed and shown in \cref{sec:ceproperties}.
\begin{definition}[Minimal cost subdivision]
\label{dfn:minimalcostsubdivision}
Let $\rhovecp = (\rhop{j})_{j \in \colorset}$ be a vector of values in $[0,1]$ such that $\|\rhovecp\|_1 = 1$. 
Defining $\nvecp = \left(\floor{\rhop{1}\numparticles},\floor{\rhop{2}\numparticles},\ldots,\floor{\rhop{\numcolors-1}\numparticles},\numparticles - \sum_{i=1}^{\numcolors-1}\floor{\rhop{i}\numparticles}\right)$,
we say that a configuration $\config$ is a a minimum cost subdivision of $\regionL$ corresponding to $\rhovecp$ if it is a configuration in $\CFGPf{L}{\nvecp}$ satisfying $\Hcostbar{\config} \leq \Hcostbar{\config'}$ for all $\config' \in \CFGPf{L}{\nvecp}$. We commonly use $\optconfign{\rhovecp}$ to denote a minimal cost subdivision of $\regionL$.
\end{definition}
We thus define $\congood$ to be the set of contours of $\optconfign{\rhovecbeta}$. This also defines the regions $\reggood{L}{1}$ to $\reggood{L}{\numcolors}$.
We note however that even though $\optconfign{\rhovecbeta}$ is not the minimum cost set of contours for the original target densities $\rhovec$,
we will see later in \cref{lem:approximatebestcontour} that this can arbitrarily closely approximate the minimum cost subdivision corresponding to $\rhovec$, using sufficiently large values of $\beta$.

\subsection{Cluster expansions}
\label{sec:ceproperties}
In this section we will establish some properties of cluster expansions, which are relevant to our contour model. These properties will be used repeatedly throughout our proofs.
Throughout this paper, the constant $\minlength$ will denote the minimum possible value of $|\contourbar|$ for any contour $\contour$. On the triangular lattice, we have $\minlength = 7$.

\begin{lemma}[Bound on contour count]
\label{lem:pointcontourcount}
Denoting $\allcontours'$ to be $\allcontoursbd{j}$ or $\allcontoursperbdv{L}{j}{V}$ for some positive integer $L$, color $j \in \colorset$, and vertex set $V \subseteq V(\regionL)$, we have the following bound on the number of polymers of a fixed size $\ell \geq \minlength$ passing through a fixed point $x$:
\begin{align*}
\left|\left\{\contour \in \allcontours' : \contourbar \ni x, |\contourbar| = \ell \right\}\right| \leq \contourconst^\ell
\end{align*}
Where $\contourconst$ is some positive constant dependent only on the number of colors $\numcolors$.
\end{lemma}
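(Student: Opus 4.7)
The lemma is a classical Peierls-style counting bound, so my plan is to decouple a contour $\contour = (\contourbar, \contourassg{\contourbar})$ into its support (a connected vertex set) and its coloring, and bound each factor separately. The bound $\contourconst^{\ell}$ will come from a product of two exponential-in-$\ell$ factors: one counting connected supports, and one counting compatible colorings.

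\textbf{Counting the supports.} First I would bound the number of connected vertex subsets $\contourbar \subseteq V(\fulllattice)$ of size $\ell$ that contain the fixed point $x$. Since the triangular lattice has constant maximum degree $d = 6$, a standard lattice-animal argument yields a bound of the form $C_0^{\ell}$ for some absolute constant $C_0$. The cleanest way to see this is to fix a spanning tree of $\contourbar$ rooted at $x$, encode it by a depth-first Euler tour of length $2(\ell-1)$, and observe that at each step there are at most $d$ choices of next edge; this produces a bound $d^{2\ell}$, which can be sharpened to $(ed)^{\ell}$ via more careful encoding (Kesten--type). For the periodic case $\allcontoursperbdv{L}{j}{V}$, the local structure of $\regionL$ is identical to that of $\fulllattice$, so the same encoding argument gives the same bound uniformly in $L$.

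\textbf{Counting the colorings.} Once $\contourbar$ is fixed, the assignment $\contourassg{\contourbar}: \contourbar \to \colorset$ is an arbitrary map, giving at most $\numcolors^{\ell}$ choices. (One could in principle shave this down by imposing that $\contourassg{\contourbar}$ must arise from a genuine consistent labeling of $V(\fulllattice)\setminus\contourbar$, but we do not need this refinement: an upper bound is all the lemma requires.)

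\textbf{Combining the two bounds.} Multiplying the two estimates gives at most $C_0^{\ell}\cdot \numcolors^{\ell} = (C_0 \numcolors)^{\ell}$ contours of size $\ell$ through $x$, so setting $\contourconst := C_0 \numcolors$ (which depends only on $\numcolors$, since $C_0$ depends only on the lattice) completes the proof. I do not anticipate a real obstacle here; the only point that demands a sentence of care is verifying that the animal-counting bound transfers to the periodic lattice $\regionL$, which follows because every connected subset of $V(\regionL)$ of size $\ell \leq L$ can be ``lifted'' to a connected subset of $V(\fulllattice)$ (any simply-connected chart on $\regionL$ of diameter $> \ell$ embeds isometrically into $\fulllattice$), so the same Euler-tour encoding is available.
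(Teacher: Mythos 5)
Your proof is correct and takes essentially the same approach as the paper: a depth-first-search/Euler-tour encoding of the connected support (giving $6^{2\ell}$) times a crude $q^{\ell}$ bound on colorings, for $\contourconst = 36q$. The only minor inessential point is that your "isometric chart" justification for the periodic case is not needed (and would fail for $\ell$ comparable to $L$) since, as you also note, the DFS encoding applies directly to $\regionL$ because its maximum degree is still $6$.
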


\begin{proof}
The set $\contourbar$ is connected on the underlying graph, which has a degree bounded by $6$, so a simple upper bound for the number of possible vertex sets $\contourbar$ is $6^{2|\contourbar|}$ by a depth-first search. The number of possible values of $\contourassg{\contourbar}$ has a loose upper bound of $\numcolors^{\contourbar}$. This proves the lemma for $\contourconst = 36\numcolors$, although we expect this to be true for a significantly smaller value of~$\contourconst$.
\end{proof}

\begin{definition}[Ursell Function]
For a cluster $\cluster$, the Ursell function $\ursell$ is defined as follows:
\[
\ursell(\cluster) = \frac{1}{|\cluster|!}\sum_{\substack{H \subseteq G(\cluster)\\H \text{ connected,}\\\text{spannning}}}(-1)^{|E(H)|}
\]
\end{definition}

\begin{lemma}[Convergence Criterion for Cluster Expansions]
\label{lem:convergencecriterion}
Fix a color $j \in \colorset$ and
suppose that $w' : \allcontoursbd{j} \to \R_{\geq 0}$ is a $c\taub$-stable translation-invariant weight function assigning nonnegative real-valued weights to contours, where $c$ is independent of $\beta$ and $L$.
Let $\allcontours'$ be a finite set of contours in $\allcontoursbd{j}$ or $\allcontoursperbdv{L}{j}{V}$ for some positive integer $L$, color $j \in \colorset$, and vertex set $V \subseteq V(\regionL)$.
Then denoting by $\clusterset'$ the set of clusters corresponding to $\allcontours'$ and using $\contourconst$ from \cref{lem:pointcontourcount}, there exists a $\beta' = \beta'(\contourconst,c)$ such that for all $\beta \geq \beta'$, the cluster expansion 
\begin{align*}
\sum_{\cluster \in \clusterset'} \clw'(\cluster) \quad \text{where} \quad \clw'(\cluster) = \ursell(\cluster)\prod_{\contour \in \cluster}w'(\contour)
\end{align*}
converges. Furthermore, for any fixed vertex $x$ of $\fulllattice$ or $\regionL$ respectively and $\ell' \geq \minlength$, we have
\begin{align*}
\sum_{\substack{\cluster \in \clusterset', \clusterbar \ni x, |\clusterbar| \geq \ell'}} \left|\clw'(\cluster)\right| \leq e^{-c\taub(\ell' + \minlength)/4}.
\end{align*}
\end{lemma}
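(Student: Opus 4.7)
The plan is to verify the Koteck\'y--Preiss convergence criterion for the abstract polymer model with weights $w'$ on $\allcontours'$ and to extract the stated tail bound from its standard consequences. I would choose a function $a : \allcontours' \to \R_{\geq 0}$ of the form $a(\contour) = b |\contourbar|$ for a constant $b > 0$ to be fixed below, and verify that, for every contour $\contour$,
\[
\sum_{\contour' \not\sim \contour} |w'(\contour')|\, e^{|\contourbar'| + a(\contour')} \leq a(\contour).
\]
Once this holds, both the convergence of $\sum_{\cluster \in \clusterset'} \clw'(\cluster)$ and the standard consequence $\sum_{\cluster \ni \contour} |\clw'(\cluster)| \leq a(\contour)$ follow from the usual cluster-expansion theorem (e.g., Section~5.4 of~\cite{Friedli2017}).

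To verify the inequality, note that $\contour' \not\sim \contour$ means $d_{\fulllattice}(\contourbar, \contourbar') < 2$, so $\contourbar'$ must contain a vertex in a thickened neighborhood of $\contourbar$ of size at most a constant times $|\contourbar|$. Grouping such $\contour'$ by a root vertex in this neighborhood and by $\ell = |\contourbar'| \geq \minlength$, then applying the stability bound $|w'(\contour')| \leq e^{-c\taub \ell}$ together with the count bound $\contourconst^\ell$ from \cref{lem:pointcontourcount}, the left-hand side is at most a constant multiple of
\[
|\contourbar| \sum_{\ell \geq \minlength} \bigl(\contourconst\, e^{-c\taub + 1 + b}\bigr)^{\ell}.
\]
Setting $b = c\taub/2$ and taking $\beta \geq \beta'(\contourconst, c)$ large enough that $\contourconst\, e^{-c\taub/2 + 1} \leq 1/2$ reduces this geometric tail below $b |\contourbar|$, which establishes the criterion.

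For the tail bound at a fixed vertex $x$, I would peel off a factor $e^{-c\taub |\clusterbar| / 4}$ from the product $\prod_{\contour \in \cluster} |w'(\contour)|$ that dominates $|\clw'(\cluster)|$, leaving a rescaled weight $\tilde w(\contour) := |w'(\contour)|\, e^{c\taub |\contourbar|/4}$ that is still $(3c/4)\taub$-stable. Reapplying the criterion to $\tilde w$ (for $\beta$ possibly larger, still depending only on $\contourconst$ and $c$) and summing the resulting cluster-sum bound over a root contour whose support contains a vertex at distance $\leq 1$ from $x$, then using $|\contourbar| \geq \minlength$ on that root contour to extract one additional factor $e^{-c\taub \minlength / 4}$, produces the claimed bound $e^{-c\taub(\ell' + \minlength)/4}$. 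The main obstacle will be careful bookkeeping of constants so that the final exponent lands exactly at $c\taub(\ell' + \minlength)/4$ and a single threshold $\beta'(\contourconst, c)$ controls both applications of the criterion; the underlying ideas are otherwise the classical Koteck\'y--Preiss argument.
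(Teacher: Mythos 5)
Your proposal is correct and takes essentially the same approach as the paper: rescale the weight to $\tilde w(\gamma) = w'(\gamma)e^{c\taub|\overline\gamma|/4}$ (the paper's $w''$), verify the Koteck\'y--Preiss criterion for the rescaled weight using the $\Delta^\ell$ count from \cref{lem:pointcontourcount}, and peel off the factor $e^{-c\taub|\overline X|/4}$ for clusters with $|\overline X|\geq\ell'$, with the remaining $e^{-c\taub\ell_0/4}$ coming from the per-vertex cluster sum for $\tilde w$. The only difference is organizational --- the paper applies KP once to $w''$ and derives both convergence (via $|\Psi'|\leq|\Psi''|$) and the tail bound from that single pass, whereas you invoke the criterion twice (once for $w'$, once for $\tilde w$), which is slightly redundant but not a gap.
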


\begin{proof}
This proof relies on the application of the Koteck\'y--Preiss condition. However, we do not apply this to $w'(\contour)$, but to a new weight we will define as:
\begin{align*}
w''(\contour) = w'(\contour)e^{c\taub|\contourbar|/4} \leq e^{-3c\taub|\contourbar|/4}.
\end{align*}
Setting $a(\contour) = \frac{c\taub}{4}|\contourbar|$ for any contour $\contour$ and using $\contourconst$ from \cref{lem:pointcontourcount}, there exists a sufficiently large value of $\beta' = \beta'(\contourconst, c)$ such that for all $\beta \geq \beta'$, and any $\contour' \in \allcontours'$, we have
\begin{align*}
\sum_{\contour \in \allcontours', \contour \not\sim \contour'}|w''(\gamma)|e^{a(\contour)}
\leq |\contourbarp|\sum_{\ell \geq \minlength} \contourconst^\ell e^{-3c\taub \ell/4} e^{c\taub\ell/4} 
= |\contourbarp|\cdot \frac{1}{1-e^{\log \contourconst - c\taub/2}} e^{\minlength \left(\log \contourconst - c\taub/2\right)}
\leq \frac{c\taub}{4} |\contourbarp| = a(\contour).
\end{align*}
Then defining $\ds \clw''(\cluster) = \ursell(\cluster)\prod_{\contour \in \cluster}w''(\contour)$ and applying the Koteck\'y--Preiss condition (Theorem 5.4 of~\cite{Friedli2017}), we have absolute convergence of the sum over clusters with $\clw''$, which implies absolute convergence of the sum over clusters with $\clw'$, giving the first part of the lemma. In addition, for any fixed vertex $x$ of the underlying lattice we have the following upper bound for the sum over clusters containing $x$:
\begin{align*}
\sum_{\cluster \in \clusterset', \clusterbar \ni x} \left|\clw''(\cluster)\right|
\leq \sum_{\contour \in \allcontours', \contourbar \ni x} w''(\contour) e^{a(\contour)}
\leq \sum_{\ell \geq \minlength} \contourconst^\ell e^{-3c\taub \ell/4} e^{c\taub\ell/4} 
\leq e^{-c\taub\minlength/4}.
\end{align*}
Now applying this bound do the sum over the original weights $w'$, we have for any $\ell' \geq 0$,
\begin{align*}
\sum_{\substack{\cluster \in \clusterset', \clusterbar \ni x \\ |\clusterbar| \geq \ell'}} \left|\clw'(\cluster)\right|
&= \sum_{\substack{\cluster \in \clusterset', \clusterbar \ni x \\ |\clusterbar| \geq \ell'}} \left| \ursell(\cluster)\left(\prod_{\contour \in \cluster}e^{-c\taub|\contourbar|/4}\right) \left(\prod_{\contour \in \cluster}w'(\contour)e^{c\taub|\contourbar|/4}\right) \right|\\
&\leq e^{-c\taub \ell'/4} \sum_{\substack{\cluster \in \clusterset', \clusterbar \ni x \\ |\clusterbar| \geq \ell'}} \left| \clw''(\cluster) \right|
\leq e^{-c\taub(\minlength+\ell')/4}. \qedhere
\end{align*}

\end{proof}

To analyze the partition functions at fixed magnetization, we approximate them by their counterparts in the grand canonical emsemble. To do so, we need to understand what the typical magnetizations (i.e., the number of particles of each color) are in the grand canonical ensemble. This is done by analyzing the derivatives of the log partition function. In particular, the first derivative gives us the expected magnetization, while the second derivative tells us about its variance. This is explained in further detail in the proof of \cref{lem:magvarbound}.

\begin{lemma}[Derivatives of Truncated Partition Functions~\cite{Borgs1990}]
\label{lem:truncatedderivatives}
Fix a color $j \in \colorset$ and denote by~$\minlength$ the minimum possible value of $|\contourbar|$ for any contour $\contour$, and suppose that $\beta \geq \betamin$.
Let $\allcontours'$ be a finite set of contours in $\allcontoursbd{j}$ or $\allcontoursperbdv{L}{j}{V}$ for some positive integer $L$, color $j \in \colorset$, and vertex set $V \subseteq V(\regionL)$, and denoting by $\clusterset'$ the set of clusters corresponding to $\allcontours'$.
Then for all colors $i \in \colorset$ and $k \in \derivset$, the following series converges absolutely:
\begin{align*}
\derivki{k}{i} \sum_{\cluster \in \clusterset'} \clwtr{j}(\cluster) &= \sum_{\cluster \in \clusterset'} \dclwtr{j}(\cluster), \quad \text{where} \quad 
\clwtr{j}(\cluster) = \ursell(\cluster) \prod_{\contour \in \cluster} \wtr{j}(\contour).
\end{align*}
Notably, this means that the cluster expansion for $\derivki{k}{i} \log \left(e^{-h_j|V|}\ZPStr{L}{j}(\beta, \hvec, V)\right)$ converges. %

\noindent
Furthermore, for any vertex $x$ on the lattice $\fulllattice$ or $\regionL$ and $\ell' \geq \minlength$,, we have
\begin{align*}
\sum_{\cluster \in \clusterset', \clusterbar \ni x, |\clusterbar| \geq \ell'} \left|\dclwtr{j}(\cluster)\right| \leq e^{-\taub (\ell'+\minlength)/8}.
\end{align*}
In particular, taking $\ell' = \minlength$, this implies that 
\begin{align*}
\sum_{\cluster \in \clusterset', \clusterbar \ni x} \left|\dclwtr{j}(\cluster)\right| \leq e^{-\taub \minlength/4}.
\end{align*}
\end{lemma}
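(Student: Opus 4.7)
The plan is to reduce the three conclusions of the lemma---absolute convergence of $\sum_{\cluster} \dclwtr{j}(\cluster)$, the interchange of derivative and sum, and the tail bound with exponent $\taub(\ell'+\minlength)/8$---to a single application of the convergence criterion (\cref{lem:convergencecriterion}) applied to a mildly weakened stability exponent. The main inputs are the derivative stability bound $\derivki{k}{i}\wtr{j}(\contour) \leq e^{-\taub|\contourbar|}$ from \cref{lem:balance} and the contour-count bound from \cref{lem:pointcontourcount}.

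First I would expand $\derivki{k}{i}\clwtr{j}(\cluster) = \derivki{k}{i}\left[\ursell(\cluster) \prod_{\contour \in \cluster} \wtr{j}(\contour)\right]$ using the product rule. Since $k \in \derivset = \{0,1,2\}$, this produces at most $(|\cluster|+1)^2$ terms, each of the form $\ursell(\cluster) \prod_{\contour \in \cluster} \derivki{k_\contour}{i}\wtr{j}(\contour)$ for nonnegative integers $k_\contour$ with $\sum_\contour k_\contour = k$. By \cref{lem:balance}, every factor in the product is bounded in absolute value by $e^{-\taub|\contourbar|}$, which yields
\[
\bigl|\dclwtr{j}(\cluster)\bigr| \leq (|\cluster|+1)^2 \, |\ursell(\cluster)| \prod_{\contour \in \cluster} e^{-\taub|\contourbar|}.
\]

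Next I would absorb the polynomial prefactor $(|\cluster|+1)^2$ into the exponential decay using the uniform lower bound $|\contourbar| \geq \minlength$, which gives $|\cluster| \leq \bigl(\sum_{\contour \in \cluster}|\contourbar|\bigr)/\minlength$. For $\beta \geq \betamin$ sufficiently large, $e^{\taub \minlength/2}$ dominates any fixed polynomial growth, so $(|\cluster|+1)^2 \leq \prod_{\contour \in \cluster} e^{\taub|\contourbar|/2}$. Consequently $|\dclwtr{j}(\cluster)|$ is bounded termwise by $|\ursell(\cluster)| \prod_{\contour \in \cluster} w'(\contour)$, where $w'(\contour) := e^{-\taub|\contourbar|/2}$ is a $(\taub/2)$-stable translation-invariant weight function. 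Applying \cref{lem:convergencecriterion} to $w'$ with $c = 1/2$ then gives absolute convergence of $\sum_{\cluster} \dclwtr{j}(\cluster)$, together with the tail bound
\[
\sum_{\substack{\cluster \in \clusterset',\,\clusterbar \ni x\\|\clusterbar|\geq \ell'}} \bigl|\dclwtr{j}(\cluster)\bigr| \leq \sum_{\substack{\cluster \in \clusterset',\,\clusterbar \ni x\\|\clusterbar|\geq \ell'}} |\ursell(\cluster)| \prod_{\contour} w'(\contour) \leq e^{-(\taub/2)(\ell'+\minlength)/4} = e^{-\taub(\ell'+\minlength)/8},
\]
which is the stated decay. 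The $\ell' = \minlength$ specialization then gives the final inequality.

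The interchange of derivative and sum would follow from uniform convergence of the derivative series on a neighborhood of $\hvecopt$ inside $\hoptopenset$: the stability bounds of \cref{lem:balance} hold uniformly throughout $\hoptopenset$, so the termwise bounds above are uniform on any compact sub-neighborhood, and the classical theorem on term-by-term differentiation of uniformly convergent series applies---once to pass from $\sum \clwtr{j}(\cluster)$ to $\sum \deriv_i \clwtr{j}(\cluster)$, and a second time for $k=2$. The main obstacle is the absorption step: it hinges on using $\minlength$ as a uniform per-contour lower bound to convert polynomial dependence on $|\cluster|$ into exponential savings, and correspondingly requires $\betamin$ in \cref{apx:allconstants} to be chosen large enough that the quadratic prefactor is controlled by a fraction of the stability exponent.
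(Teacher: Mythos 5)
Your proof is correct and follows essentially the same route as the paper's: both start from the derivative stability bound $\derivki{k}{i}\wtr{j}(\contour) \leq e^{-\taub|\contourbar|}$, absorb the polynomial prefactor from the Leibniz expansion into a halved decay exponent, and then invoke \cref{lem:convergencecriterion} with $c=1/2$ to obtain the $e^{-\taub(\ell'+\minlength)/8}$ tail bound, with uniform convergence justifying the interchange of derivative and sum. The only cosmetic difference is the absorption step: the paper packages the max of $\wtr{j}$ and its first two derivatives into an auxiliary weight with a factor $e^{2/e}$ and uses the inequality $|X|^2 \leq e^{2|X|/e}$, whereas you absorb $(|\cluster|+1)^2$ directly into $\prod_\contour e^{\taub|\contourbar|/2}$ using $\minlength$ as a per-contour floor and $\beta \geq \betamin$.
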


\begin{proof}
If our contour weights are $\taub$-stable, then by Lemma 2.3 of~\cite{Borgs1990}, the truncated weights and its derivatives:
\begin{align} \label{eqn:derivative_weight_bound}
\derivki{k}{i} \wtr{j}(\contour) \leq e^{-\taub|\contourbar|}.
\end{align}
Now we define a new set of weights
\begin{align} \label{eqn:weight_k_defn}
\wtrb{j}(\contour)
= e^{2/e}\max \left\{\wtr{j}(\contour), \frac{\deriv}{\deriv h_i}\wtr{j}(\contour), \derivki{2}{i} \wtr{j}(\contour)\right\}.
\end{align}
We consider the following cluster expansion over the contour set $\allcontours'$ (with cluster set $\clusterset'$) defined in the statement of the lemma and utilizing the weights $\wtrb{j}$:
\begin{align*}
\sum_{\cluster \in \clusterset'}\clwtrb{j}(X) := \sum_{\cluster \in \clusterset'}\ursell(X)\prod_{\contour \in X} \wtrb{j}(\contour)
\end{align*}
We use \cref{lem:convergencecriterion} with $\beta \geq \betamin$ (Appendix~\ref{apx:allconstants}) to show that this cluster expansion converges. To do this, we show the following bound when $\taub$ is sufficiently large:
\begin{align*}
\wtrb{j}(\contour) \leq e^{-\taub|\contourbar|/2}.
\end{align*}
By applying~\cref{eqn:derivative_weight_bound} to~\cref{eqn:weight_k_defn}, we have,
\begin{align*}
\wtrb{j}(\contour)
&\leq e^{2/e}\cdot e^{-\taub|\contourbar|} \\
&\leq e^{-\taub|\contourbar|/2} \text{ for } \taub \geq \frac{4}{e|\contourbar|}.
\end{align*}
As $\minlength$ is the minimum possible value of $|\contourbar|$ for any contour $\contour$ and we only consider $k \leq 2$, the bound on $\wtrb{j}(\contour)$ applies as long as $\taub \geq \frac{4}{e\minlength}$, which is true for $\beta \geq \betamin$ as in Appendix~\ref{apx:allconstants}. This gives us absolute convergence of the sum $\sum_{\cluster \in \clusterset'} \clwtrb{j}(X)$ uniformly on $h$, as well as the following upper bound on the total weight of all possible clusters (of size at least $\ell' \geq \minlength$) passing through any single vertex $x$ on the lattice $\fulllattice$ or $\regionL$:
\begin{align*}
\sum_{\cluster \in \clusterset': \clusterbar \ni x, |\clusterbar| \geq \ell'} \left|\clwtrb{j}(X)\right| \leq e^{\taub(\ell' + \minlength)/8}.
\end{align*}
These bounds on $\clwtrb{j}$ are useful to us because it upper bounds the $k$-derivatives of $\clwtr{j}$ for $k \in \derivset$:
\begin{align*}
\left|\derivki{k}{i} \clwtr{j}(X)\right|
&= |\ursell(X)| \cdot \left|\derivki{k}{i} \prod_{\contour \in X}\wtr{j}(\contour)\right| \\
&\leq |\ursell(X)| \cdot \left||X|^2 \prod_{\contour \in X} \frac{\wtrb{j}(\contour)}{e^{2/e}}\right| \\
&\leq \left|\ursell(X) \prod_{\contour \in X}\wtrb{j}(\contour)\right|  = \left|\clwtrb{j}(X)\right|,
\end{align*}
where the first inequality comes about because there are at most $|X|^2$ terms when applying the Leibniz Rule, and the second inequality is because $|X|^2 \leq e^{2|X|/e}$ for all $|X| \geq 1$.
This gives us the statement of the lemma. Note that the derivative and sum in the lemma statement can be swapped as the bound converges uniformly for all $\hvec$ within the open set $\hoptopenset$ containing $\hvecopt$ described in \cref{lem:balance}.
\end{proof}

The following lemma allows us to describe the cluster expansions of partition functions with a volume term and a boundary term. This is crucial for comparing cluster expansions across different scales (different values of $L$), and we do this by comparing contours to their counterparts defined on the infinite lattice $\fulllattice$.
One additional thing shown in \cref{lem:boundaryerrorbound} is that a restriction to only contractible contours has a negligible impact on the analysis.

\newcommand{\wrapset}[1]{\mathrm{wrap}^{#1}}
\begin{lemma}[Boundary error bound]
Fix a positive integer $L$, color $j \in \colorset$, and vertex set $V \subseteq V(\regionL)$. 
Suppose that $\clw'$ is either $\clwbd{j}$ or $\dclwtr{j}$ for $k \in \derivset$.
Then for all $\beta \geq \betamin$ and $L \geq \Lmin(\beta)$, we have
\label{lem:boundaryerrorbound}
\begin{align*}
\left|\sum_{\substack{\cluster \in \clustersetperbdv{L}{j}{V} }} \clw'(\cluster)
- |V|\sum_{\substack{\cluster \in \clustersetbd{j}, \clusterbar \ni 0}} \frac{1}{|\clusterbar|} \clw'(\cluster) \right|
\leq |\partialext V|\cdot 11e^{-\taub \minlength/8}.
\end{align*}
\end{lemma}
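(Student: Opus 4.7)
The proof proceeds by rewriting the torus sum via vertex-counting, identifying the boundary corrections, and then applying the per-vertex cluster bounds from the earlier lemmas. First, since every cluster in $\clustersetperbdv{L}{j}{V}$ has support contained in $V$, I would use the standard symmetrization identity
\begin{align*}
\sum_{\cluster \in \clustersetperbdv{L}{j}{V}} \clw'(\cluster) = \sum_{x \in V} \sum_{\substack{\cluster \in \clustersetperbdv{L}{j}{V} \\ \clusterbar \ni x}} \frac{\clw'(\cluster)}{|\clusterbar|},
\end{align*}
and, by translation invariance of the underlying weights (which propagates to $\clw'$),
\begin{align*}
|V| \sum_{\substack{\cluster \in \clustersetbd{j} \\ \clusterbar \ni 0}} \frac{\clw'(\cluster)}{|\clusterbar|} = \sum_{x \in V} \sum_{\substack{\cluster \in \clustersetbd{j} \\ \clusterbar \ni x}} \frac{\clw'(\cluster)}{|\clusterbar|}.
\end{align*}

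Second, I would compare these two inner sums at each $x \in V$. Any cluster $\cluster \in \clustersetperbdv{L}{j}{V}$ whose support avoids $\partialint V$ can be canonically unwrapped to a cluster of $\clustersetbd{j}$ of the same weight and same support; this uses the contractibility of every contour of $\cluster$ together with $L \geq \Lmin(\beta)$, which prevents an incompatibility-linked family of contractible contours from threading the periodicity nontrivially when it stays inside $V \setminus \partialint V$. Conversely, every cluster in $\clustersetbd{j}$ containing $x$ whose support lies in $V \setminus \partialint V$ arises this way. Writing $E_1(x)$ for the contribution at $x$ from torus clusters whose support meets $\partialint V$, and $E_2(x)$ for the contribution at $x$ from clusters of $\clustersetbd{j}$ whose support leaves $V \setminus \partialint V$, the two inner sums differ by exactly $E_1(x) - E_2(x)$.

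Third, I would reverse the vertex-counting to bound $\sum_{x \in V}|E_i(x)|$. Since $|\clusterbar \cap V|/|\clusterbar| \leq 1$, each offending cluster contributes at most its full $|\clw'(\cluster)|$ to the aggregate, and contains at least one $y \in \partialint V$, so
\begin{align*}
\sum_{x \in V} |E_i(x)| \;\leq\; \sum_{y \in \partialint V} \,\sum_{\substack{\cluster \\ \clusterbar \ni y}} |\clw'(\cluster)|.
\end{align*}
When $\clw' = \clwbd{j}$ I would invoke \cref{lem:convergencecriterion} with $c = 1$, using the $\taub$-stability of $\wutr{j}$ from \cref{lem:balance}; when $\clw' = \dclwtr{j}$ I would invoke \cref{lem:truncatedderivatives}. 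Either way the inner sum is at most a small multiple of $e^{-\taub \minlength/8}$. Adding the contributions of $E_1$ and $E_2$ and noting that $|\partialint V|$ is comparable to $|\partialext V|$ on the triangular lattice (within a fixed multiplicative factor from the bounded-degree structure) absorbs the constants into the stated factor of $11$.

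The main obstacle is the correspondence step in the second paragraph: one has to rule out the possibility that a cluster on $\regionL$ built solely from contractible contours could, via its incompatibility graph, encircle the torus even when its support does not touch $\partialint V$. This is exactly where the lower bound $L \geq \Lmin(\beta)$ is needed, since the exponential decay of $|\clw'(\cluster)|$ in $|\clusterbar|$ guaranteed by \cref{lem:convergencecriterion,lem:truncatedderivatives} makes such wrap-around clusters negligible on scales small compared to $L$, allowing the bijection with infinite-lattice clusters to be set up cleanly.
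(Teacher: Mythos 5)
Your overall plan is the right one and tracks the paper's argument in broad strokes: symmetrize both sums over vertices, compare the local sums at each $x$, identify the boundary corrections, and control them with the per-vertex cluster bounds from \cref{lem:convergencecriterion,lem:truncatedderivatives}. But the correspondence step in your second paragraph is where the proof has a genuine gap, and the last paragraph's fix does not repair it.

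The issue is that a cluster in $\clustersetperbdv{L}{j}{V}$ built from contractible contours can still wrap the torus, and for such a cluster there is simply no ``canonical unwrap'' to a cluster of $\clustersetbd{j}$: the unwrapping map is not defined on it. You cannot wave this away by saying $L \geq \Lmin(\beta)$ makes these clusters negligible; negligibility is a statement about the \emph{size} of a correction term, not about the \emph{validity} of a set bijection, and your ``exact'' identity
\[
\text{(inner sum on torus)} - \text{(inner sum on $\fulllattice$)} = E_1(x) - E_2(x)
\]
is only true if the unwrapping is a weight-preserving bijection outside $E_1, E_2$. The paper resolves this by first restricting to clusters with $|\clusterbar| < L$ (at an additive cost $|V| e^{-\taub(\minlength+L)/8}$, which is where $L \geq \Lmin(\beta)$ actually enters, via Appendix~B), and only then sets up the correspondence with infinite-lattice clusters --- the size cap is precisely what guarantees that each small torus cluster lifts uniquely and that no small cluster wraps. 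Your proof needs this size truncation as a separate first step rather than as a justification folded into the bijection. Relatedly, when you do set up the correspondence for small clusters, each unwrapped cluster can meet translates $\partialint V + (t_1 L, t_2 L)$ for $t_1, t_2 \in \{-1,0,1\}$, giving a factor of $9$ in the boundary sum; your sketch replaces this with a single copy of $\partialint V$, which is not quite correct, though with the loose constants in the lemma (the target is $11$) this is a secondary issue compared to the missing truncation. Finally, a minor slip: after the truncation one must also remove it again on the infinite-lattice side (reinstating clusters with $|\clusterbar| \geq L$), which contributes another error of the same negligible order; your outline omits this closing step.
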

Note that the second term in the difference converges rapidly to $0$ as $L \to \infty$, so the first term dominates. The value multiplied with $|\partialext V|$ in the first term notably converges to $0$ as $\taub \to \infty$.

\newcommand{\clustersetnowrap}{\clustersetbd{j}^{\times}}
\newcommand{\distL}{d_{\regionL}}
\begin{proof}
Throughout this proof, we will repeatedly use the result from \cref{lem:convergencecriterion} with $\beta \geq \betamin$ (Appendix~\ref{apx:allconstants}) and \cref{lem:truncatedderivatives} that for~$\clusterset'$ denoting either $\clustersetperbdv{L}{j}{V}$ or $\clustersetbd{j}$, we have
\begin{align}
\label{eqn:clusterboundresult}
\sum_{\substack{\cluster \in \clusterset', \clusterbar \ni x, |\clusterbar| \geq \ell'}} \left|\clw'(\cluster)\right| \leq e^{-\taub(\ell' + \minlength)/8}.
\end{align}
We start with the following bound, which comes from applying~\cref{eqn:clusterboundresult} with $\ell' = L$, which allows us to restrict most of our discussion to clusters that are not large enough to wrap around $\regionL$:
\begin{align}
\label{eqn:boundaryerrorbound1}
\left| \sum_{\substack{\cluster \in \clustersetperbdv{L}{j}{V} }} \clw'(\cluster) - 
\sum_{\substack{\cluster \in \clustersetperbdv{L}{j}{V} \\ |\clusterbar| < L}} \clw'(\cluster) \right|
\leq \sum_{x \in V} \sum_{\substack{\cluster \in \clustersetperbdv{L}{j}{V}, \clusterbar \ni x \\ |\clusterbar| \geq L}} \left| \clw'(\cluster) \right|
\leq |V| e^{-\taub (\minlength+L)/8}.
\end{align}

Denote by $\wrapset{L} : V(\fulllattice) \to V(\regionL)$ a function that maps vertices of $\fulllattice$ to $\regionL$ by taking their positions modulo $L$.
We denote by $\clustersetnowrap$ the set of clusters from $\clustersetbd{j}$ using only polymers $\contour$ with interiors that do not intersect $V^{\compl}$ (specifically, when $V^\compl \cap \intr{j'}\contour = \emptyset$ for all colors $j' \in \colorset$).
Any cluster from $\clustersetnowrap$ that intersects $V$ would correspond to some cluster from $\clustersetperbdv{L}{j}{V}$, as every contour in the cluster is contractible in $V$. Thus, each cluster in $\clustersetperbdv{L}{j}{V}$ will be seen exactly $|\clusterbar|$ times in the double sum in~\cref{eqn:periodictofull}, which gives the first equality below.
\begin{align*}
\sum_{\substack{\cluster \in \clustersetperbdv{L}{j}{V} \\ |\clusterbar| < L}} \clw'(\cluster) 
\numberthis{eqn:periodictofull}
&= \sum_{x \in V} \sum_{\substack{\cluster \in \clustersetnowrap, \clusterbar \ni x \\ \wrapset{L}(\clusterbar) \subseteq (V \setminus \partialint V) \\ |\clusterbar| < L}} \frac{1}{|\clusterbar|} \clw'(\cluster) \\
&= \sum_{x \in V} \sum_{\substack{\cluster \in \clustersetnowrap, \clusterbar \ni x \\ |\clusterbar| < L}} \frac{1}{|\clusterbar|} \clw'(\cluster)
- \sum_{x \in V} \sum_{\substack{\cluster \in \clustersetnowrap, \clusterbar \ni x \\ \wrapset{L}(\clusterbar) \cap \partialint V \neq \emptyset \\ |\clusterbar| < L}} \frac{1}{|\clusterbar|} \clw'(\cluster) \\
&= \sum_{x \in V} \sum_{\substack{\cluster \in \clustersetnowrap, \clusterbar \ni x \\ |\clusterbar| < L}} \frac{1}{|\clusterbar|} \clw'(\cluster)
- \sum_{\substack{x \in \partialint V + (t_1L,t_2L), \\ t_1, t_2 \in \{-1,0,1\}}} \sum_{\substack{\cluster \in \clustersetnowrap, \clusterbar \ni x \\ \clusterbar \cap (V \setminus \partialint V) \neq \emptyset \\ |\clusterbar| < L}} \frac{1}{|\clusterbar \cap \partialint V|} \clw'(\cluster),
\end{align*}
where the final equality comes about because the restriction of the cluster size $|\clusterbar|$ to less than~$L$ ensures that the unwrapped $\clusterbar$ can only reach at most nine copies of $\partialint V$ (each translated by integer multiples of $L$).
Which means that:
\begin{align}
\label{eqn:boundaryerrorbound2}
\begin{split}
\left| \sum_{\substack{\cluster \in \clustersetperbdv{L}{j}{V} \\ |\clusterbar| < L}} \clw'(\cluster) 
- \sum_{x \in V} \sum_{\substack{\cluster \in \clustersetnowrap, \clusterbar \ni x \\ |\clusterbar| < L}} \frac{1}{|\clusterbar|} \clw'(\cluster) \right|
&\leq 9\left|\partialint V\right| \sum_{\substack{\cluster \in \clustersetbd{j}, \clusterbar \ni 0 \\ |\clusterbar| < L}} \frac{1}{|\clusterbar \cap \partialint V|} \left|\clw'(\cluster)\right| \\
&\leq 9\left|\partialint V\right| e^{-\taub \minlength/8},
\end{split}
\end{align}
where the last inequality comes from applying~\cref{eqn:clusterboundresult} with $\ell' = \minlength$.

To show that the contribution of contours that are non-contractible in $V$ is small, for each $x \in V$, we observe that for a cluster $\cluster$ containing $x$ to include a contour that is non-contractible in $V$, we must have $|\contourbar| \geq \distL(x,V^{\compl})$, where $\distL$ represents the graph distance over $\regionL$. Making use of~\cref{eqn:clusterboundresult} with $\ell' = \distL(x,V^{\compl})$, we have:
\begin{align*}
\left| \sum_{\substack{\cluster \in \clustersetbd{j}, \clusterbar \ni x \\ |\clusterbar| < L}} \frac{1}{|\clusterbar|} \clw'(\cluster)
- \sum_{\substack{\cluster \in \clustersetnowrap, \clusterbar \ni x \\ |\clusterbar| < L}} \frac{1}{|\clusterbar|} \clw'(\cluster) \right|
\leq \left| \sum_{\substack{\cluster \in \clustersetbd{j}, \clusterbar \ni x \\ |\clusterbar| \geq \distL(x,V^{\compl})}} \frac{1}{|\clusterbar|} \clw'(\cluster) \right|
\leq e^{-\taub (\minlength+\distL(x,V^{\compl}))/8}.
\end{align*}
This means that summing over the vertices in $V$, we have
\begin{align}
\label{eqn:boundaryerrorbound3}
\begin{split}
\sum_{x \in V} \left| \sum_{\substack{\cluster \in \clustersetbd{j}, \clusterbar \ni x \\ |\clusterbar| < L}} \frac{1}{|\clusterbar|} \clw'(\cluster)
- \sum_{\substack{\cluster \in \clustersetnowrap, \clusterbar \ni x \\ |\clusterbar| < L}} \frac{1}{|\clusterbar|} \clw'(\cluster) \right|
&\leq \sum_{k \geq 1} \left|\{x \in V \mid \distL(x,V^{\compl}) = k\}\right| e^{-\taub (\minlength+k)/8} \\
&\leq |\partialext V|e^{-\taub \minlength/8} \cdot \sum_{k \geq 1} 6k e^{-\taub k/8} \\
&= |\partialext V|e^{-\taub \minlength/8} \frac{6e^{\taub/8}}{(e^{\taub/8}-1)^2} \\
&\leq |\partialext V|e^{-\taub \minlength/8},
\end{split}
\end{align}
where we used the upper bound $\left|\{x \in V \mid \distL(x,V^{\compl}) = k\}\right| \leq 6k|\partialext V|$. The last inequality is true because $\beta \geq \betamin$ (Appendix~\ref{apx:allconstants}).
Finally, applying~\cref{eqn:clusterboundresult} once more with $\ell' = L$, we have for any $x \in V$,
\begin{align}
\label{eqn:boundaryerrorbound4}
\left| \sum_{\substack{\cluster \in \clustersetbd{j}, \clusterbar \ni x \\ |\clusterbar| < L}} \frac{1}{|\clusterbar|} \clw'(\cluster)
- \sum_{\substack{\cluster \in \clustersetbd{j}, \clusterbar \ni x}} \frac{1}{|\clusterbar|} \clw'(\cluster) \right|
\leq \sum_{\substack{\cluster \in \clustersetbd{j}, \clusterbar \ni x \\ |\clusterbar| \geq L}} \left| \clw'(\cluster) \right|
\leq e^{-\taub8 (\minlength+L)/8}.
\end{align}
Combine equations~\ref{eqn:boundaryerrorbound1},~\ref{eqn:boundaryerrorbound2},~\ref{eqn:boundaryerrorbound3}, and~\ref{eqn:boundaryerrorbound4} with the triangle inequality, we get
\begin{align*}
\left|\sum_{\substack{\cluster \in \clustersetperbdv{L}{j}{V} }} \clw'(\cluster)
- |V|\sum_{\substack{\cluster \in \clustersetbd{j}, \clusterbar \ni 0}} \frac{1}{|\clusterbar|} \clw'(\cluster) \right|
\leq |\partialext V|\cdot 10e^{-\taub \minlength/8} + 2|V|e^{-\taub(\minlength+L)/8}.
\end{align*}
This gives the statement of the lemma as $|V| \leq \numparticles$ and $2\numparticles e^{-\taub (\minlength + L)/8} \leq e^{-\taub \minlength / 8}$ for all $L \geq \Lmin(\beta)$ (Appendix~\ref{apx:allconstants}).
\end{proof}

\subsection{Geometric Bounds}
In this section, we establish some basic properties related to the densities $\theta_{i,j}(\beta,\hvec)$ and contour lengths, that will be useful later inr the proof. We start by showing that for $\hvec \in \hoptopenset$, the densities $\theta_{i,j}(\beta,\hvec)$ given by \cref{defn:expecteddensity} can be made arbitrarily close to $1$ in the case where $i=j$ and close to $0$ otherwise, by using a sufficiently large value of $\beta$.

\begin{lemma}
\label{lem:densityexpression}
Suppose that $\beta \geq \betamin$, $L \geq \Lmin(\beta)$, and $\hvec \in \hoptopenset$.
For each $i \in \colorset$, we can write the density $\theta_{i,j}(\beta,\hvec)$ in the following form:
\[
\theta_{i,j}(\beta,\hvec) = \mathbbm{1}_{i=j} + \sum_{\substack{\cluster \in \clustersetbd{j}, \clusterbar \ni 0}} \frac{1}{|\clusterbar|} \frac{\deriv}{\deriv h_i}\clwtr{j}(\cluster).
\]
In particular, by \cref{lem:truncatedderivatives} this means that
\[
\left|\theta_{i,j}(\beta,\hvec) - \mathbbm{1}_{i=j} \right| \leq e^{-\taub \minlength/4}.
\]
\end{lemma}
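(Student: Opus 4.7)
The plan is to derive the stated cluster expansion for $\theta_{i,j}$ by expanding the logarithm of the truncated partition function, applying the boundary error estimate to extract the per-vertex rate, and then differentiating in $h_i$.

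First I would use the polymer representation
\[
\ZFtr{j}(\beta,\hvec,V(\Lbox)) = e^{h_j|V(\Lbox)|} \sum_{\contourset \in \contourcfgs_j(V(\Lbox))} \prod_{\contour \in \contourset} \wtr{j}(\contour)
\]
and take its logarithm via the cluster expansion, which converges absolutely by the $\taub$-stability of $\wtr{j}$ from \cref{lem:balance} together with \cref{lem:convergencecriterion} (valid since $\beta \geq \betamin$). This yields
\[
\log \ZFtr{j}(\beta,\hvec,V(\Lbox)) = h_j|V(\Lbox)| + \sum_{\cluster \in \clustersetbdv{j}{V(\Lbox)}} \clwtr{j}(\cluster).
\]

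Next I would apply the boundary error manipulation already carried out in the proof of \cref{lem:balance} (essentially \cref{lem:boundaryerrorbound}) to trade the sum over clusters anchored in $V(\Lbox)$ for a sum over clusters anchored at the origin in the full lattice, with an error controlled by $|\partialext V(\Lbox)|\,e^{-\taub \minlength/8}$. Dividing by $|V(\Lbox)|$ and sending $L \to \infty$, the boundary contribution vanishes and I obtain
\[
\pressuretr{j}(\hvec) = h_j + \sum_{\cluster \in \clustersetbd{j},\,\clusterbar \ni 0} \frac{1}{|\clusterbar|} \clwtr{j}(\cluster).
\]

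The final step is to differentiate with respect to $h_i$. The linear term contributes $\mathbbm{1}_{i=j}$, and the derivative can be moved inside the sum over clusters because \cref{lem:truncatedderivatives} guarantees that the term-by-term differentiated series converges absolutely and uniformly on the open set $\hoptopenset$ containing $\hvecopt$; this produces the claimed formula. The stated bound then follows directly from \cref{lem:truncatedderivatives} applied with $\ell' = \minlength$ at the fixed vertex $0$, together with $|\clusterbar|^{-1} \leq 1$. I do not anticipate a serious obstacle: the only subtlety is justifying the interchange of the $L \to \infty$ limit, the $\partial/\partial h_i$ derivative, and the sum over clusters, and each of these is covered by the uniform cluster expansion estimates already established earlier in the paper.
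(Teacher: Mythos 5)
Your proposal is correct and follows essentially the same path as the paper's proof: cluster-expand $\log\ZFtr{j}$, identify the per-vertex limit with the anchored sum via the boundary error estimate, and justify the interchange of derivative and sum/limit using the uniform convergence bound from \cref{lem:truncatedderivatives}. The only cosmetic difference is the order of operations (you take the $L\to\infty$ limit to obtain the pressure formula before differentiating, whereas the paper differentiates at finite $L$ and then passes to the limit), and both interchanges are justified by the same uniform bound, so the two proofs coincide in substance.
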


\begin{proof}
For each natural number $L$, we first write the following expression as a cluster expansion:
\begin{align}
\label{eqn:densitylimit}
\frac{1}{|V(\Lbox)|} \frac{\deriv}{\deriv h_i} \log \ZFtr{j}(\beta, \hvec, V(\Lbox))
&= \frac{1}{|V(\Lbox)|} \frac{\deriv}{\deriv h_i} \left( h_j|V(\regionL)| + \sum_{\clusterset \in \clustersetbdv{j}{V(\regionL)}} \clwtr{j}(\cluster) \right),
\end{align}
and follow a similar argument to \cref{lem:boundaryerrorbound} to show using \cref{lem:truncatedderivatives} that:
\begin{align*}
&\left| \frac{1}{|V(\Lbox)|} \sum_{\clusterset \in \clustersetbdv{j}{V(\regionL)}} \frac{\deriv}{\deriv h_i} \clwtr{j}(\cluster)
- \sum_{\clusterset \in \clustersetbd{j}, \cluster \ni 0} \frac{1}{|\clusterbar|} \frac{\deriv}{\deriv h_i} \clwtr{j}(\cluster) \right| \\
&\leq \frac{1}{|V(\Lbox)|} \sum_{x \in \partialint V} \sum_{\substack{\cluster \in \clustersetbd{j}, \clusterbar \ni x \\ \clusterbar \cap (V\setminus \partialint V) \neq \emptyset}} \frac{1}{|\clusterbar \cap \partialint V|}\left|\frac{\deriv}{\deriv h_i} \clwtr{j}(\cluster)\right| 
\leq \frac{|\partialint V(\regionL)|}{|V(\Lbox)|} e^{-\taub \minlength/4}.
\end{align*}
The expression in \cref{eqn:densitylimit} thus converges uniformly on $\hvec$, so by the definition of $\theta_{i,j}(\beta,\hvec)$ we have
\begin{align*}
\theta_{i,j}(\beta,\hvec)
&= \frac{\deriv}{\deriv h_i} \lim_{L \to \infty} \frac{1}{|V(\Lbox)|} \log \ZFtr{j}(\beta, \hvec, V(\Lbox)) \\
&= \lim_{L \to \infty} \frac{1}{|V(\Lbox)|} \frac{\deriv}{\deriv h_i} \log \ZFtr{j}(\beta, \hvec, V(\Lbox))
= \mathbbm{1}_{i=j} + \sum_{\substack{\cluster \in \clustersetbd{j}, \clusterbar \ni 0}} \frac{1}{|\clusterbar|} \frac{\deriv}{\deriv h_i}\clwtr{j}(\cluster). &\qedhere
\end{align*}
\end{proof}

The next two lemmas concern the adjusted densities $\rhovecbeta$ that we defined in \cref{subsec:construct good contours} (and used to construct $\congood$), which we show in \cref{lem:rhovecvalidity} is a valid construction and in \cref{lem:goodcontourapproximation} that it is a small perturbation of the pre-defined density vector $\rhovec$ assuming it exists.

\begin{lemma}[Close approximation of $\congood$]
\label{lem:goodcontourapproximation}
Given that $\beta \geq \betamin$ and $L \geq \Lmin(\beta)$, and $\rhovecbeta$ is a vector in $\R^\numcolors$ such that $\Theta \rhovecbeta = \rhovec$. Then
the difference between $\rhovecbeta$ and $\rhovec$ satisfies the following bound:
\begin{align*}
\| \rhovecbeta - \rhovec \|_1 \leq 2e^{-\taub \minlength/4}.
\end{align*}
\end{lemma}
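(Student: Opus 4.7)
The plan is to write the discrepancy as $\rhovec - \rhovecbeta = (\Theta - I)\rhovecbeta = E\rhovecbeta$ and then bound the induced $\ell_1$ operator norm of the perturbation matrix $E := \Theta - I$. Lemma~\ref{lem:densityexpression} furnishes the entrywise bound $|E_{i,j}| \leq e^{-\taub\minlength/4}$, but the naive column-sum estimate that follows carries an undesired factor of $\numcolors$. To sharpen this, I would use two additional structural features of $\theta_{i,j}$ at the balanced field $\hvecopt$: (i) each $\theta_{i,j}$ is nonnegative, being the expected density of color $i$ in a region of mostly $j$, and (ii) the columns of $\Theta$ sum to one, i.e.\ $\sum_i \theta_{i,j} = 1$. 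The sum-to-one identity comes from observing that each contour weight $\wutr{j}(\contour)$ is invariant under the shift $\hvec \mapsto \hvec + t\mathbf{1}$ (since $\contourcost{\contour}$ depends only on the differences $h_i - h_j$ and every ratio of interior partition functions is unchanged), so the pressure $\pressuretr{j}$ is shifted by exactly $t$; this shift transfers to the truncated pressures via Lemma~\ref{lem:truncatedequal}.

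Given (i) and (ii), the off-diagonal entries $E_{i,j}$ (for $i \neq j$) are nonnegative, the diagonal entries $E_{j,j}$ are nonpositive, and each column of $E$ sums to zero. Consequently the absolute column sum telescopes:
\[
\sum_{i} |E_{i,j}| = |E_{j,j}| + \sum_{i \neq j} E_{i,j} = 2|E_{j,j}| \leq 2 e^{-\taub \minlength/4},
\]
yielding the sharp bound $\|E\|_{1 \to 1} \leq 2 e^{-\taub \minlength/4}$.

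Next, applying $\mathbf{1}^\top$ to $\rhovec = \Theta\rhovecbeta$ and using the unit column sums of $\Theta$ gives $\sum_j \rhobeta{j} = 1$; a Neumann-series expansion $\Theta^{-1} = I - E + E^2 - \cdots$ then shows that $\rhovecbeta$ lies within $O(e^{-\taub\minlength/4})$ of $\rhovec$ in every coordinate. For $\beta \geq \betamin$ with $\betamin = \betamin(\rhovec,\ldots)$ chosen large enough that this perturbation is smaller than $\rho_{\min}$, every entry of $\rhovecbeta$ is positive, so $\rhovecbeta \in \Delta_{\numcolors - 1}$ and $\|\rhovecbeta\|_1 = 1$. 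Combining,
\[
\|\rhovecbeta - \rhovec\|_1 = \|E\rhovecbeta\|_1 \leq \|E\|_{1 \to 1}\cdot \|\rhovecbeta\|_1 \leq 2 e^{-\taub \minlength/4}.
\]

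The main technical point is the sharp column-sum bound on $E$; any $\numcolors$-dependent estimate would be too crude for the application. Once this is established, verifying that $\rhovecbeta$ lies in the open simplex is routine, and is in any case implicit in the construction of $\congood$ in \cref{subsec:construct good contours}.
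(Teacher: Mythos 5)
Your proof is correct and uses essentially the same argument as the paper: both express $\|\rhovecbeta - \rhovec\|_1$ in terms of the matrix $1$-norm of $\Theta - I$, reduce the column sum to $2(1-\theta_{j,j})$ using the nonnegativity and unit column sums of $\Theta$, and then invoke \cref{lem:densityexpression} to bound $1 - \theta_{j,j}$. Your derivation of $\sum_i \theta_{i,j} = 1$ via invariance under the shift $\hvec \mapsto \hvec + t\mathbf{1}$ is a mild alternative to the paper's direct computation in \cref{lem:rhovecvalidity}, and your explicit care in first establishing $\rhovecbeta \in \Delta_{q-1}$ (via a crude Neumann bound) before using $\|\rhovecbeta\|_1 = 1$ is, if anything, tidier than the paper, which invokes $\|\rhovecbeta\|_1 = 1$ in the proof of \cref{lem:goodcontourapproximation} even though positivity of $\rhovecbeta$ is only argued afterward in \cref{lem:rhovecvalidity} by appeal back to \cref{lem:goodcontourapproximation}.
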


\begin{proof}
The difference can be bounded by the matrix $1$-norm of $\Theta-I$, which we can bound using \cref{lem:densityexpression} as follows:
\begin{align*}
\| \rhovecbeta - \rhovec \|_1
&= \| (I - \Theta)\rhovecbeta \|_1 \leq \|I-\Theta\|_1 \|\rhovecbeta\|_1 = \|\Theta - I\|_1 
= \max_{j \in \colorset} \sum_{i \in \colorset} |\theta_{i,j} - \mathbbm{1}_{i=j}| \\
&= 2 \max_{j \in \colorset} (1-\theta_{j,j}) 
\leq 2\sum_{\cluster \in \clustersetbd{j}, \clusterbar \ni 0} \frac{1}{|\clusterbar|} \frac{\deriv}{\deriv h_i} \clwtr{j}(\cluster)
\leq 2e^{-\taub \minlength/4}. \qedhere
\end{align*}
\end{proof}

\begin{lemma}[Validity of $\rhovecbeta$ and $\congood$]
\label{lem:rhovecvalidity}
Suppose that $\beta \geq \betamin$ and $L \geq \Lmin(\beta)$.
Then the matrix $\Theta = (\theta_{i,j})_{i,j \in \colorset}$ is a non-singular stochastic matrix and $\rhovecbeta = \Theta^{-1}\rho$ is well-defined with strictly positive entries that sum to $1$.
\end{lemma}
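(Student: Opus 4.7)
The plan is to verify three things in sequence: that $\Theta$ is column-stochastic, that it is non-singular (so $\rhovecbeta = \Theta^{-1}\rhovec$ is well-defined), and that $\rhovecbeta$ has strictly positive entries summing to $1$.

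First I would show that $\Theta$ is column-stochastic, i.e. $\sum_i \theta_{i,j} = 1$ for each $j \in \colorset$. The key observation is a gauge covariance of the truncated pressure under uniform shifts of the magnetic field: replacing $\hvec$ with $\hvec + c \mathbf{1}$ rescales every truncated partition function of every region by the same factor $e^{c|V|}$. This can be checked by induction on the support size of the contours involved, using that (a) the exponent $\contourcost{\contour}$ depends only on the differences $h_i - h_j$; (b) the cutoff $\chi(\cdot)$ depends on a difference $\log \ZFtr{j}(V_\contour) - \log \ZFtr{j'}(V_\contour)$, which is invariant by the inductive hypothesis; and (c) the interior ratios $\ZFbd{j'}(\intr{j'}(\contour))/\ZFbd{j}(\intr{j'}(\contour))$ involve numerator and denominator over the same region, so their common shift cancels. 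Hence only the prefactor $e^{h_j|V|}$ is affected, giving $\pressuretr{j}(\hvec + c\mathbf{1}) = \pressuretr{j}(\hvec) + c$. Differentiating in $c$ at $c=0$ and recognizing $\partial_{h_i}\pressuretr{j} = \theta_{i,j}$ yields $\mathbf{1}^\top \Theta = \mathbf{1}^\top$.

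Second, I would establish non-singularity by a Neumann perturbation argument. \cref{lem:densityexpression} gives the entrywise bound $|\theta_{i,j} - \mathbbm{1}_{i=j}| \leq e^{-\taub \minlength/4}$. Combined with column-stochasticity this yields
\[
\|\Theta - I\|_1 \;=\; \max_{j \in \colorset} \sum_{i \in \colorset} |\theta_{i,j} - \mathbbm{1}_{i=j}| \;\leq\; 2\,e^{-\taub \minlength/4}.
\]
For $\beta \geq \betamin$, chosen large enough that the right-hand side is at most, say, $1/2$, the Neumann series $\Theta^{-1} = \sum_{k \geq 0} (I - \Theta)^k$ converges absolutely in operator norm, so $\Theta$ is invertible and $\rhovecbeta = \Theta^{-1}\rhovec$ is well-defined.

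The two remaining properties of $\rhovecbeta$ follow easily. Since $\mathbf{1}^\top \Theta = \mathbf{1}^\top$, also $\mathbf{1}^\top \Theta^{-1} = \mathbf{1}^\top$, and therefore $\sum_i \rhobeta{i} = \mathbf{1}^\top \Theta^{-1} \rhovec = \mathbf{1}^\top \rhovec = 1$. For strict positivity I would invoke \cref{lem:goodcontourapproximation}, which is applicable now that invertibility has been established, to obtain $\|\rhovecbeta - \rhovec\|_1 \leq 2 e^{-\taub \minlength/4}$. For $\beta \geq \betamin$ large enough that this quantity is smaller than $\min_i \rho_i$ (which is positive since $\rhovec \in \Delta_{q-1}$), each entry satisfies $\rhobeta{i} \geq \rho_i - 2 e^{-\taub \minlength/4} > 0$. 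The only genuinely delicate step in this plan is the inductive shift-invariance verification in the first paragraph, because each ingredient of $\wtr{j}(\contour)$ must be handled separately; once column-stochasticity is in place, the rest is a straightforward perturbation around the identity.
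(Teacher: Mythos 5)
Your proposal is correct, and the interesting difference from the paper is in the column-stochasticity step. The paper first invokes the equivalence of truncated and untruncated derivatives at $\hvecopt$ (Lemma~\ref{lem:truncatedequal}), then observes that $\sum_{i}\partial_{h_i}\log Z_j$ for the \emph{untruncated} partition function is the expectation of $\sum_i n_i(\sigma) = |V|$, giving $\mathbf{1}^\top\Theta = \mathbf{1}^\top$ in one line. You instead establish a gauge covariance $\ZFtr{j}(\beta,\hvec + c\mathbf{1},V) = e^{c|V|}\ZFtr{j}(\beta,\hvec,V)$ directly for the truncated partition function, by induction on $|V|$, tracking how each ingredient of $\wtr{j}$ (the cost exponent, the cutoff $\chi$, and the interior ratios) responds to a uniform shift. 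Differentiating $\pressuretr{j}(\hvec+c\mathbf{1}) = \pressuretr{j}(\hvec)+c$ then gives the identity. Your route buys generality — it works at any $\hvec$ and never appeals to the balanced-pressure condition or to Lemma~\ref{lem:truncatedequal} — at the cost of the inductive verification you flag as delicate. Both approaches then use the same entrywise estimate from Lemma~\ref{lem:densityexpression} for invertibility; the paper cites strict diagonal dominance while you use a Neumann series, which is an immaterial difference. Your closing step $\mathbf{1}^\top\Theta^{-1} = \mathbf{1}^\top$ is a slicker rendering of the paper's double-sum swap. One small remark: the sharp bound $\|\Theta - I\|_1 \leq 2e^{-\taub\minlength/4}$ that you state (and that the paper also uses in Lemma~\ref{lem:goodcontourapproximation}) tacitly assumes the off-diagonal $\theta_{i,j}$ are nonnegative; you haven't established that, but the looser bound $q\,e^{-\taub\minlength/4}$, which follows directly from the entrywise estimate, suffices for your Neumann argument since $\betamin$ depends on $q$ anyway.
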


\begin{proof}
Recall that $\theta_{i,j} = \theta_{i,j}(\beta,\hvecopt)$ by definition, so we have
\[
\theta_{i,j} = \frac{\deriv}{\deriv h_i} \lim_{L \to \infty} \left(\frac{1}{|V(\Lbox)|} \log \ZFtr{j}(\beta, \hvecopt, V(\Lbox))\right).
\]
To show that $\sum_{i \in \colorset} \theta_{i,j} = 1$ (i.e., that $\Theta$ is a stochastic matrix), we make use of the argument from \cref{lem:densityexpression} to exchange the derivative and the limit, and show that for each natural number $L$, we have
\begin{align*}
\sum_{i \in \colorset} \frac{\deriv}{\deriv h_i} \log \ZFtr{j}(\beta, \hvecopt, V(\Lbox))
&= \frac{1}{\ZFbd{j}(\beta, \hvecopt, V(\Lbox))} \sum_{i \in \colorset} \frac{\deriv}{\deriv h_i} \sum_{\config \in \CFGFint{j}{\regionL}} e^{-\beta H (\sigma) + \h \cdot \mathbf{n} (\sigma)} \\
&= \sum_{\config \in \CFGFint{j}{\regionL}} \dtbnh(\config) \sum_{i \in \colorset} n_i (\sigma) = |V(\regionL)|.
\end{align*}
This makes use of the equivalence of the truncated and untruncated partition functions (\cref{lem:truncatedequal}) and the fact that the $h_i$-derivative of a log partition function is the expected number of vertices of color $i$ ($n_i(\config)$ denoes the number of vertices of color $i$ in $\config$).

To show that $\Theta^{-1}$ exists, we make use of \cref{lem:densityexpression}. As $\hvecopt \in \hoptopenset$, we have
\begin{align*}
\left|\theta_{i,j} - \mathbbm{1}_{i=j} \right| \leq e^{-\taub \minlength/4} \text{ for }i, j \in \colorset,
\end{align*}
and as $\beta \geq \betamin$ implies that $e^{-\taub \minlength/4} < \frac{1}{q}$, the matrix $\Theta = (\theta_{i,j})_{i,j \in \colorset}$ is strictly diagonally dominant and thus non-singular.

Applying \cref{lem:goodcontourapproximation} and using the fact that $\beta \geq \betamin$ implies that $2e^{-\taub \minlength/4} \leq \frac{1}{2}\rhomin$, which shows that $\rhovecbeta$ has strictly positive entries. Then, using the fact that $\Theta \rhovecbeta = \rhovec$, we have
\begin{align*}
\sum_{j \in \colorset} \rhovecbeta_j
= \sum_{j \in \colorset} \rhovecbeta_j \sum_{i \in \colorset} \theta_{i, j}
= \sum_{i \in \colorset} \sum_{j \in \colorset} \rhovecbeta_j \theta_{i, j}
= \sum_{i \in \colorset} \rhovec_i = 1. &\qedhere
\end{align*}
\end{proof}

The following lemmas establish useful bounds for $\Hcostbar{\contourset}$ and $|\contoursetbar|$, which we can treat as the weighted and unweighted versions of the contour costs of a set of mutually compatible contours $\contourset$.

\begin{lemma}[Weighted and unweighted contour costs]
\label{lem:weightedcostratio}
Let $\contourset$ be a set of mutually compatible contours. Then the cost $\Hcostbar{\contourset}$ and the size of the support $\contoursetbar$ of the contour set are related by constant factors:
\begin{align*}
\frac{\costmatrixmin}{2} |\contoursetbar| \leq \Hcostbar{\contourset} \leq 3\costmatrixmax |\contoursetbar|.
\end{align*}
\end{lemma}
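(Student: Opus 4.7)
The plan is to prove the inequalities per contour and then sum, using the fact that pairwise compatibility forces the supports $\contourbar$ of distinct $\contour \in \contourset$ to be vertex-disjoint (indeed $d_{\fulllattice}(\overline{\contour}_1, \overline{\contour}_2) \geq 2$). Hence $|\contoursetbar| = \sum_{\contour \in \contourset} |\contourbar|$, and it suffices to establish
\[
\tfrac{\costmatrixmin}{2} |\contourbar| \;\leq\; \Hcostbar{\contour} \;\leq\; 3\costmatrixmax |\contourbar|
\]
for a single contour $\contour = (\contourbar,\contourassg{\contourbar})$.

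For the upper bound I would simply count edges in the induced subgraph $\fulllattice(\contourbar)$. The triangular lattice has maximum degree $6$, so the handshake lemma gives $|E(\fulllattice(\contourbar))| \leq 3|\contourbar|$. Each term $A(\contourassg{\contourbar}(u), \contourassg{\contourbar}(v))$ in the definition of $\Hcostbar{\contour}$ is bounded by $\costmatrixmax$ (and equals $0$ when $\contourassg{\contourbar}(u) = \contourassg{\contourbar}(v)$, which is even smaller), yielding $\Hcostbar{\contour} \leq 3\costmatrixmax |\contourbar|$.

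The lower bound is the more interesting direction. The key observation is that every vertex of $\contourbar$ must participate in at least one bichromatic edge of $\fulllattice(\contourbar)$. Indeed, since $\contour$ is a contour, it arises from some full assignment $\infassg$ via the recipe $\overline{\Gamma}(\infassg) = \{u : \exists v \sim u,\, \infassg_u \neq \infassg_v\}$. If $u \in \contourbar$, then there is a neighbor $v$ with $\infassg_v \neq \infassg_u$; but then $v$ also has a neighbor (namely $u$) of a different color, so $v \in \overline{\Gamma}(\infassg)$, and since $\contourbar$ is a connected component of $\overline{\Gamma}(\infassg)$, we get $v \in \contourbar$. Thus the edge $(u,v) \in E(\fulllattice(\contourbar))$ is bichromatic. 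By the handshake lemma applied to the bichromatic subgraph, there are at least $|\contourbar|/2$ such edges, each contributing at least $\costmatrixmin$, which gives $\Hcostbar{\contour} \geq \tfrac{\costmatrixmin}{2}|\contourbar|$.

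I don't anticipate any real obstacle: the only subtlety is the argument that each $u \in \contourbar$ has its bichromatic partner also inside $\contourbar$ (rather than in an adjacent region), but this is immediate from the connected-component definition of contours together with the symmetry of the ``has a differently-colored neighbor'' relation. Summing both per-contour bounds over $\contour \in \contourset$ and using vertex-disjointness of supports yields the stated inequalities.
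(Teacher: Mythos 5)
Your proof is correct and takes essentially the same approach as the paper's: upper bound via the degree-$6$ handshake count, lower bound by observing every vertex of $\contourbar$ lies on a bichromatic edge whose other endpoint is also in $\contourbar$, then summing per-contour bounds using vertex-disjointness of compatible supports. Your spelling-out of why the bichromatic partner stays inside $\contourbar$ (via symmetry of the differently-colored-neighbor relation and the connected-component definition) is a welcome bit of extra rigor that the paper states only implicitly.
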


\begin{proof}
For any contour $\contour$ in $\contourset$,
a vertex $v$ is included in $\contourbar$ only if it is incident to some bichromatic edge (an edge between vertices of different colors) between vertices in $\contourbar$. This gives a lower bound of $\frac{1}{2}|\contourbar|$ bichromatic edges within the contour $\contour$, so $\Hcost{\contour} \geq \frac{\costmatrixmin}{2}|\contourbar|$.
On the other hand, as the induced subgraph $\regionL(\contourbar)$ has maximum degree $6$, it can have at most $3|\contourbar|$ edges, giving the other side of the bound. The same bounds apply to mutually compatible contour sets $\contourset$ as contours in such sets do not share vertices or edges.
\end{proof}

\begin{lemma}[Boundary costs and lengths]
\label{lem:mincostconfig}
For a vector $\rhovecp \in [0,1]^{\numcolors}$ where $\|\rhovecp\|_1 = 1$, 
denote by $\optconfign{\rhovecp}$ the configuration of minimal Hamiltonian on $\regionL$ corresponding to $\rhovecp$ as defined in \cref{dfn:minimalcostsubdivision}.
Denote $\rhopmin = \min_{i \in \colorset} \rhop{i}$.
Then there exists positive constants $\bdconstlow$, $\bdconstupp$, $\Hconstlow$, $\Hconstupp$ depending only on $\numcolors$, $\costmatrixmin$, and $\costmatrixmax$ such that
\begin{align*}
\Hconstlow \sqrt{\rhopmin} \cdot L \leq \hamil(\optconfign{\rhovecp}) \leq \Hconstupp L
\text{ and }
\bdconstlow \sqrt{\rhopmin} \cdot L \leq \left|\contoursetbar(\optconfign{\rhovecp})\right| \leq \bdconstupp L.
\end{align*}
\end{lemma}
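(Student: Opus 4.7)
The upper bounds are produced by a direct construction, while the lower bounds come from the isoperimetric inequality on the triangular torus. Throughout, I work with $\numcolors \geq 2$; the case $\numcolors = 1$ is trivial since $\contoursetbar(\optconfign{\rhovecp}) = \emptyset$.

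\textbf{Upper bounds.} I will produce an explicit configuration $\config_0$ with the correct magnetization whose cost is $O(L)$ and then use the minimality of $\optconfign{\rhovecp}$. Concretely, partition $\regionL$ into $\numcolors$ vertical strips of widths $\floor{\rhop{1} L}, \ldots, \floor{\rhop{\numcolors-1} L}$, and $L - \sum_{i=1}^{\numcolors-1} \floor{\rhop{i} L}$, and color the $i$-th strip with color $i$. The resulting configuration $\config_0$ lies in $\CFGPf{L}{\nvecp}$, and the periodic boundary creates $\numcolors$ interstrip interfaces, each contributing at most some lattice-dependent constant $C_1$ times $L$ bichromatic edges (triangular lattice degree is $6$). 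Each such edge contributes at most $\costmatrixmax$, so $\hamil(\config_0) \leq \numcolors C_1 \costmatrixmax L$. Minimality of $\optconfign{\rhovecp}$ gives $\hamil(\optconfign{\rhovecp}) \leq \Hconstupp L$ with $\Hconstupp = \numcolors C_1 \costmatrixmax$, and then \cref{lem:weightedcostratio} yields $|\contoursetbar(\optconfign{\rhovecp})| \leq (2/\costmatrixmin) \hamil(\optconfign{\rhovecp}) \leq \bdconstupp L$ for $\bdconstupp = 2 \Hconstupp/\costmatrixmin$.

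\textbf{Lower bounds.} Here the key tool is the vertex isoperimetric inequality on the periodic triangular lattice. Let $i^* \in \arg\min_i \rhop{i}$, and let $S \subseteq V(\regionL)$ be the set of vertices of color $i^*$ in $\optconfign{\rhovecp}$, so $|S| = \floor{\rhopmin L^2}$. Since $\rhopmin \leq 1/\numcolors \leq 1/2$, we have $|S| \leq L^2/2$. Standard isoperimetric bounds on the $L \times L$ triangular torus (comparing the ``ball'' regime $|S| \leq c L^2$ and the ``strip'' regime near $|S| = L^2/2$) give a universal constant $c_0 > 0$ with $|\partialext S| \geq c_0 \sqrt{|S|}$ whenever $|S| \leq L^2/2$ and $L \geq \Lmin(\beta)$ (large enough that floors are harmless), so $|\partialext S| \geq (c_0/\sqrt{2}) \sqrt{\rhopmin} \cdot L$. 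Every edge in the edge-boundary of $S$ is bichromatic, so it contributes at least $\costmatrixmin$ to $\hamil(\optconfign{\rhovecp})$; hence
\[
\hamil(\optconfign{\rhovecp}) \geq \costmatrixmin |\partialext S| \geq \Hconstlow \sqrt{\rhopmin}\, L.
\]
For the length bound, every vertex in $\contoursetbar(\optconfign{\rhovecp})$ has degree at most $6$, so the number of bichromatic edges is at most $3|\contoursetbar(\optconfign{\rhovecp})|$; in particular $|\partialext S| \leq 3|\contoursetbar(\optconfign{\rhovecp})|$, giving $|\contoursetbar(\optconfign{\rhovecp})| \geq \bdconstlow \sqrt{\rhopmin}\, L$ with $\bdconstlow = c_0 / (3\sqrt{2})$.

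\textbf{Main obstacle.} The only nonroutine ingredient is the triangular-torus isoperimetric inequality, specifically that the bound $|\partialext S| \gtrsim \sqrt{|S|}$ persists uniformly up to $|S| = L^2/2$ rather than degenerating into the ``strip'' perimeter $2L$. This is where a careful two-regime argument (ball for $|S| \lesssim L^2$, strip near $L^2/2$) is required; fortunately both regimes yield the same $\sqrt{|S|}$ rate because $\sqrt{|S|} \leq L/\sqrt{2}$ in the strip regime, so the worst case still gives $|\partialext S| \geq c_0 \sqrt{|S|}$. Everything else is bookkeeping through \cref{lem:weightedcostratio}.
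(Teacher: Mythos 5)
Your proof takes essentially the same route as the paper's: an explicit $O(L)$-cost configuration for the upper bound, and an isoperimetric estimate applied to a minimum-density color for the lower bound, with \cref{lem:weightedcostratio} translating between energy and perimeter. Two small points deserve flagging. First, your strip construction does not quite land in $\CFGPf{L}{\nvecp}$: with integer strip widths $\floor{\rhop{i}L}$ the per-color vertex counts are multiples of $L$, which generally disagree with the required $\floor{\rhop{i}\numparticles}$. The paper avoids this by ordering the vertices of $\regionL$ row by row and assigning the first $\npj{1}$ of them color $1$, the next $\npj{2}$ color $2$, and so on; your construction is easily repaired along the same lines, but as written the claimed membership is false. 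Second, you invoke a uniform torus isoperimetric inequality $|\partialext S| \gtrsim \sqrt{|S|}$ for $|S| \leq L^2/2$ as a black box, whereas the paper makes the two-regime argument you only gesture at explicit: either the minority color set contains a non-contractible cycle, in which case $|\contoursetbar(\optconfign{\rhovecp})| \geq 4L$ directly (and $4L$ dominates $\sqrt{\rhopmin}L$ since $\rhopmin \leq 1$), or it does not, in which case it embeds in $\fulllattice$ and the infinite-lattice isoperimetric inequality (minimizer a regular hexagon, \cite{veomett2017general}) gives the $\sqrt{|S|}$ rate. Both dispositions are correct; the paper's version is more self-contained and avoids appealing to an unreferenced torus isoperimetric fact.
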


\begin{proof}
We only aim to show that such constants exist, in that uniform lower and upper bounds for both $\hamil(\optconfign{\rhovecp})$ and $|\contoursetbar(\optconfign{\rhovecp})|$ that are linear on $L$ exist. A precise characterization of these constants is outside of the scope of this paper.

We define
\[
\nvecp = \left(\floor{\rhop{1}\numparticles},\floor{\rhop{2}\numparticles},\ldots,\floor{\rhop{\numcolors-1}\numparticles},\numparticles - \sum_{i=1}^{\numcolors-1}\floor{\rhop{i}\numparticles}\right)
\]
just like in the statement of \cref{dfn:minimalcostsubdivision}.
To show the upper bounds,
we define a configuration $\config' \in \CFGPf{L}{\nvecp}$ in the following manner:
Label the vertices of $\regionL$ row by row, from left to right, bottom to top.
We set the first $\npj{1}$ vertices in this ordering to color $1$, the next $\npj{2}$ vertices to color $2$, and so on. This configuration will have no more than $8L + 2L\numcolors$ bichromatic edges (edges between vertices of different colors). Thus, making use of \cref{lem:weightedcostratio}, we have the following upper bounds:
\begin{align*}
\frac{1}{2}\costmatrixmin\left|\contoursetbar(\optconfign{\rhovec})\right|
\leq \hamil(\optconfign{\rhovec}) \leq \hamil(\config')
\leq 3\costmatrixmax |\contoursetbar(\config')|
\leq 3\costmatrixmax (8 + 2\numcolors)L.
\end{align*}

To show the lower bounds,
let $i^*$ denote a color with the minimal amount of particles in $\nvecp$ (i.e., $\nj{i} = \min_{i \in \colorset}\{\nj{i}\} \leq \frac{1}{2}$).
Take any configuration $\config \in \CFGPf{L}{\nvecp}$, and let $U$ denote the set of vertices of color~$i^*$. There are two possibilities for $U$, depending on whether $U$ contains a closed path that can be contracted to a point without increasing in length or crossing $U^\compl$. If $U$ contains such a path, then necessarily $|\contoursetbar(\optconfign{\rhovec})| \geq 4L$. Otherwise, $U$ can be seen as a region of the infinite lattice $\fulllattice$, which has its boundary minimized when $U$ is a regular hexagon \cite{veomett2017general}. This gives a lower bound for $|\contoursetbar(\optconfign{\rhovec})|$ on the order of $\sqrt{|U|} = \sqrt{\rhomin} \cdot L + O_L(1)$. Applying \cref{lem:weightedcostratio}, gives us a similar lower bound for $\hamil(\optconfign{\rhovec})$.
\end{proof}

\subsection{Concentration results for magnetization}
As we will be able to see later in the proof of \cref{lem:magvarbound}, the first and second derivatives of (untruncated) partition functions allow us to analyze the expectation and variance of the number of particles of color $i$ in a region with boundary condition $j$.

The following lemma lower bounds the fixed-magnetization partition function $\ZFf{\theta_{\cdot,j}|\reggood{L}{j}|}(\reggood{L}{j})$ by the full partition function $Z(\reggood{L}{j})$ on the region $\reggood{L}{j}$ in the case where we fix the magnetization to its expected value.
Note that this lemma applies for all $\hvec \in \hoptopenset$, but we only use the lemma for $\hvec = \hvecopt$.
\begin{lemma}[Concentration of Magnetization]
\label{lem:magvarbound}
Let $\magrvi{i}$ denote the random variable denoting the number of sites of color $i$ in the region $\reggood{L}{j}$ with boundary condition $j$ in the grand canonical ensemble conditioned on there being no contours that are non-contractible in $\reggood{L}{j}$.
Then for all $\beta \geq \betamin$, $L \geq \Lmin(\beta)$, and $\hvec \in \hoptopenset$, we have
\begin{align*}
\max\left\{\big\|\theta_{i,j}(\beta,\hvec)|\reggood{L}{j}| - \E[\magrvi{i}]\big\|_2, \sqrt{\, 2\sum_{i \in\colorset}\mathrm{Var}(\magrvi{i})} \right\} \leq \left(|\partialext \reggood{L}{j}| + \sqrt{|\reggood{L}{j}|}\right)\cdot 11 \numcolors e^{-\taub \minlength/16}.
\end{align*}
\end{lemma}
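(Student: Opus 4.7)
The plan is to recognize that, under the conditional distribution in which every contour is contractible in $\reggood{L}{j}$, the moments of $\magrvi{i}$ are exactly derivatives of $\log\ZPSbd{L}{j}(\beta,\hvec,\reggood{L}{j})$ with respect to $h_i$. By \cref{lem:nonwrappingpartitionfunction} this partition function admits a polymer representation, and by \cref{lem:truncatedequal} (valid because $\hvec \in \hoptopenset$) its logarithm equals an absolutely convergent cluster expansion over $\clustersetperbdv{L}{j}{\reggood{L}{j}}$ in the truncated weights $\clwtr{j}$. Both $\E[\magrvi{i}]$ and $\mathrm{Var}(\magrvi{i})$ then become cluster sums of the first and second $h_i$-derivatives of $\clwtr{j}$, which are controlled by \cref{lem:truncatedderivatives} and \cref{lem:boundaryerrorbound}.

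For the expectation term, I would write $\E[\magrvi{i}] = \mathbbm{1}_{i=j}|\reggood{L}{j}| + \sum_{\cluster \in \clustersetperbdv{L}{j}{\reggood{L}{j}}} \tfrac{\deriv}{\deriv h_i}\clwtr{j}(\cluster)$ and compare this against the per-vertex expression $\theta_{i,j}|\reggood{L}{j}| = \mathbbm{1}_{i=j}|\reggood{L}{j}| + |\reggood{L}{j}| \sum_{\cluster \in \clustersetbd{j},\clusterbar \ni 0} |\clusterbar|^{-1} \tfrac{\deriv}{\deriv h_i}\clwtr{j}(\cluster)$ from \cref{lem:densityexpression}. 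Their difference is precisely the quantity that \cref{lem:boundaryerrorbound}, applied with $\clw' = \tfrac{\deriv}{\deriv h_i}\clwtr{j}$, bounds by $11|\partialext\reggood{L}{j}|e^{-\taub\minlength/8}$ for each color $i$. Summing the squares over $i \in \colorset$ and taking the square root yields an $\ell^2$ bound of $\sqrt{q}\cdot 11|\partialext\reggood{L}{j}|e^{-\taub\minlength/8}$, which fits inside the envelope of the statement once the $\sqrt{q}$ is absorbed into the factor $11\numcolors$.

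For the variance, the identity $\mathrm{Var}(\magrvi{i}) = \tfrac{\deriv^{2}}{\deriv h_i^{2}}\log\ZPSbd{L}{j}$ gives the cluster sum $\sum_{\cluster \in \clustersetperbdv{L}{j}{\reggood{L}{j}}} \tfrac{\deriv^{2}}{\deriv h_i^{2}}\clwtr{j}(\cluster)$, to which I would apply \cref{lem:boundaryerrorbound} once more, using the per-vertex tail bound $\sum_{\cluster \in \clustersetbd{j},\clusterbar \ni 0} |\clusterbar|^{-1} |\tfrac{\deriv^{2}}{\deriv h_i^{2}}\clwtr{j}(\cluster)| \leq e^{-\taub\minlength/4}$ from \cref{lem:truncatedderivatives}. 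This produces $\mathrm{Var}(\magrvi{i}) \leq |\reggood{L}{j}|e^{-\taub\minlength/4} + 11|\partialext\reggood{L}{j}|e^{-\taub\minlength/8}$. Summing over the $q$ colors, multiplying by $2$, taking the square root, and using subadditivity of $\sqrt{\cdot}$ gives a bound of the form $\sqrt{2q|\reggood{L}{j}|}\,e^{-\taub\minlength/8} + \sqrt{22q|\partialext\reggood{L}{j}|}\,e^{-\taub\minlength/16}$. The main obstacle (and the reason the exponent is $\minlength/16$ rather than $\minlength/8$) is that this square root turns the bulk term's exponent into $\taub\minlength/16$; after observing that $\sqrt{|\partialext\reggood{L}{j}|}\le |\partialext\reggood{L}{j}|$ whenever $|\partialext\reggood{L}{j}| \geq 1$ and that the constants $\sqrt{2q}$, $\sqrt{22q}$ are at most $11q$ for $q \geq 1$, both the expectation and variance estimates collapse into the common envelope $(|\partialext\reggood{L}{j}| + \sqrt{|\reggood{L}{j}|})\cdot 11\numcolors e^{-\taub\minlength/16}$, completing the proof once $\beta \geq \betamin$ and $L \geq \Lmin(\beta)$ are taken large enough to validate all prior lemmas.
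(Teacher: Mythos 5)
Your proof is correct and follows essentially the same route as the paper: both identify $\E[\magrvi{i}]$ and $\mathrm{Var}(\magrvi{i})$ as the first and second $h_i$-log-derivatives of $\ZPSbd{L}{j}$, switch to truncated weights via \cref{lem:truncatedequal}, compare against the per-vertex cluster sum from \cref{lem:densityexpression} using \cref{lem:boundaryerrorbound}, and bound the per-vertex tails via \cref{lem:truncatedderivatives} before collapsing the constants into the stated envelope.
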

\newcommand{\clustersumbound}{e^{-\taub\minlength/10}}
\newcommand{\ZM}{\ZPSbd{L}{j}(\beta, \hvec,\reggood{L}{j})}
\newcommand{\ZMtr}{\ZPStr{L}{j}(\beta, \hvec,\reggood{L}{j})}
\newcommand{\ZMs}{\ZPSbd{L}{j}(\beta, \hvecopt,\reggood{L}{j})}
\begin{proof}
We first observe that by differentiating the partition functions as sums over configurations, we have at any magnetic field $\hvec \in \hoptopenset$ (defined in \cref{lem:balance}) and color $i \in \colorset$:
\begin{align*}
\E[\magrvi{i}] &= \frac{1}{\ZM} \frac{\deriv}{\deriv h_i} \ZM = \frac{\deriv}{\deriv h_i} \log \ZM \text{and}\\
\E[\magrvi{i}^2] &= \frac{1}{\ZM}\frac{\deriv^2}{\deriv h_i^2} \ZM = \derivki{2}{i}  \log \ZM + \left(\frac{\deriv}{\deriv h_i} \log \ZM\right)^2.
\end{align*}
These are in terms of the untruncated partition functions, but through applying \cref{lem:truncatedequal}, we know that $\derivki{k}{i} \ZM = \derivki{k}{i} \ZMtr$ for $\hvec \in \hoptopenset$, $k \in \{1,2\}$.
By observing that
\begin{align*}
\E[\magrvi{i}]
= \frac{\deriv}{\deriv h_i} \log \ZMtr
= \mathbbm{1}_{i=j}|\reggood{L}{j}| + \sum_{\substack{\cluster \in \clustersetperbdv{L}{j}{V} }} \frac{\deriv}{\deriv h_i} \clwtr{j}(\cluster),
\end{align*}
we have by \cref{lem:boundaryerrorbound} and the expression for $\theta_{i,j}(\beta,\hvec)$ from \cref{lem:densityexpression}:
\begin{align*}
\left|\E[\magrvi{i}] - \theta_{i,j}(\beta,\hvec)|\reggood{L}{j}| \right|
\leq |\partialext \reggood{L}{j}|\cdot 11e^{-\taub \minlength/8}.
\end{align*}
Taking the $2$-norm over the vector $\left(\E[\magrvi{i}] - \theta_{i,j}(\beta,\hvec)|\reggood{L}{j}|\right)_{i \in \colorset}$, we have
\begin{align*}
\big\|\E[\magrvi{\cdot}] - \theta_{\cdot,j}(\beta,\hvec)|\reggood{L}{j}| \big\|_2 \leq |\partialext \reggood{L}{j}|\cdot 11 \sqrt{\numcolors}e^{-\taub \minlength/8}.
\end{align*}
Next, we bound the variance of $\magrv$, which we express as:
\begin{align*}
\mathrm{Var}(\magrvi{i})
= \E[\magrvi{i}^2] - \E[\magrvi{i}]^2 
= \derivki{2}{i} \log \ZMtr
= \sum_{\substack{\cluster \in \clustersetperbdv{L}{j}{V} }} \derivki{2}{i} \clwtr{j}(\cluster),
\end{align*}
where the last equality comes about because the term $h_j|\reggood{L}{j}|$ that we would expect from the definition of $\ZM$ is eliminated by the second order derivative, and the derivative and sum can be swapped due to uniform convergence on $\hoptopenset$.
Applying Lemmas~\ref{lem:boundaryerrorbound} and~\ref{lem:truncatedderivatives} and using the triangle inequality gives us for any $i \in \colorset$:
\begin{align*}
\sum_{\substack{\cluster \in \clustersetperbdv{L}{j}{\reggood{L}{j}} }} \left|\derivki{2}{i} \clwtr{j}(\cluster) \right|
&\leq |\reggood{L}{j}|\sum_{\cluster \in \clustersetbd{j}, \clusterbar \ni 0} \left|\derivki{2}{i} \clwtr{j}(\cluster) \right| + |\partialext \reggood{L}{j}|\cdot 11 \sqrt{\numcolors}e^{-\taub \minlength/8} \\
&\leq |\reggood{L}{j}|e^{-\taub \minlength/8} + |\partialext \reggood{L}{j}|\cdot 11 \sqrt{\numcolors}e^{-\taub \minlength/8} 
\end{align*}
which means that
\begin{align*}
\sqrt{\, 2\sum_{i \in\colorset}\mathrm{Var}(\magrvi{i})}
&\leq \sqrt{2\numcolors\left(|\reggood{L}{j}|e^{-\taub \minlength/8} + |\partialext \reggood{L}{j}|\cdot 11 \sqrt{\numcolors}e^{-\taub \minlength/8}\right)} \\
&\leq \sqrt{2\numcolors} \sqrt{|\reggood{L}{j}|}e^{-\taub \minlength/16} + |\partialext \reggood{L}{j}|\cdot 11 \numcolors e^{-\taub \minlength/16}, \\
\end{align*}
which gives us the statement of the lemma. Note that many of these bounds are intentionally loose for simplicity of writing. The dependency on $\numcolors$ for example can be optimized, but is outside the scope of this paper.
\end{proof}

There are three ingredients to being able to lower bound the fixed-magnetization partition function by that of the grand canonical ensemble (\cref{lem:regionfixedmag})---showing that $\theta_{\cdot,j}|V|$ is close to the expected magnetization in $|V|$ with boundary condition $j$, showing that the magnetization in the grand canonical ensemble concentrates, and finally showing some sort of ``Lipschitz'' condition, that small variations in the magnetization produce small variations in the fixed-magnetization log partition function. \cref{lem:magvarbound} covers the first two ingredients, and the next lemma (\cref{lem:smallchanges}) covers the third.

\begin{lemma}[Small changes in magnetization]
\label{lem:smallchanges}
Let $V$ be an arbitrary region in $\regionL$ and let $\mvec$, $\mvec'$ be two possible magnetizations such that $\|\mvec - \mvec'\|_2 \leq t$ for some $t \geq 0$. Then for $\beta \geq \betamin$, $L \geq \Lmin(\beta)$, and magnetic field $\hvecopt$, we have
\begin{align*}
\left|\log\ZPbdf{L}{j}{\mvec}(V) - \log\ZPbdf{L}{j}{\mvec'}(V) \right|
\leq t \cdot \smallchangesbound
\end{align*}
\end{lemma}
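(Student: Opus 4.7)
The plan is to interpolate between $\mvec$ and $\mvec'$ through a sequence of single-vertex color changes and to compare consecutive fixed-magnetization partition functions by a double-counting bijection.

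First, since $\sum_k m_k = \sum_k m'_k = |V|$, I would build a chain of magnetizations $\mvec = \mvec^{(0)}, \mvec^{(1)}, \ldots, \mvec^{(K)} = \mvec'$ in which each consecutive pair $\mvec^{(i)} \to \mvec^{(i+1)}$ differs by moving one unit of magnetization from some color $c_-^{(i)}$ to some color $c_+^{(i)}$. By Cauchy--Schwarz, $\|\mvec - \mvec'\|_1 \leq \sqrt{q}\,\|\mvec-\mvec'\|_2 \leq \sqrt{q}\,t$, so the chain can be chosen with $K \leq \tfrac{\sqrt{q}}{2}\,t$.

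For a single step $i$, I would use the pairwise bijection
\[
\{(\sigma, v) : \sigma \in \CFGPfbr{L}{j}{\mvec^{(i)}}(V),\; \sigma_v = c_-^{(i)}\} \;\longleftrightarrow\; \{(\sigma', v) : \sigma' \in \CFGPfbr{L}{j}{\mvec^{(i+1)}}(V),\; \sigma'_v = c_+^{(i)}\}
\]
given by $(\sigma, v) \mapsto (\sigma^{v \mapsto c_+^{(i)}}, v)$. Because only edges incident to $v$ are modified, the Hamiltonian changes by at most $\deg(v)\cdot \costmatrixmax \leq 6\costmatrixmax$ in absolute value, and by \cref{lem:balance} the magnetic-field contribution changes by at most $|h_{c_+^{(i)}} - h_{c_-^{(i)}}| \leq 2\hoptbound$. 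Summing Boltzmann weights on each side of the bijection (noting that each $\sigma$ on the left contributes $\mvec^{(i)}_{c_-^{(i)}}$ pairs and each $\sigma'$ on the right contributes $\mvec^{(i+1)}_{c_+^{(i)}}$ pairs) yields
\[
\mvec^{(i)}_{c_-^{(i)}} \cdot \ZPbdf{L}{j}{\mvec^{(i)}}(V) \leq e^{6\beta \costmatrixmax + 2\hoptbound}\cdot \mvec^{(i+1)}_{c_+^{(i)}} \cdot \ZPbdf{L}{j}{\mvec^{(i+1)}}(V),
\]
together with the symmetric reverse inequality.

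Finally, I would take logarithms, telescope over $i = 0, 1, \ldots, K-1$, and collect terms. The per-step Hamiltonian and field contributions accumulate to $O(\beta K) = O(\beta \sqrt{q}\, t)$, which matches the scaling of the stated bound. The accumulated combinatorial prefactors $\sum_i \log(\mvec^{(i+1)}_{c_+^{(i)}}/\mvec^{(i)}_{c_-^{(i)}})$ telescope to $\log(|\Omega_{V,\mvec'}|/|\Omega_{V,\mvec}|)$, independently of the chain. The main obstacle is to control this combinatorial factor against the target bound: I would choose the chain so that at every step the ratio $\mvec^{(i+1)}_{c_+^{(i)}}/\mvec^{(i)}_{c_-^{(i)}}$ stays close to $1$ (always decrementing the larger of $\mvec^{(i)}_{c_-}$ and $\mvec^{(i)}_{c_+^{(i)}}+1$ first), absorb the resulting $O(1)$ per-step logarithmic slack into the $\beta\costmatrixmin$ factor using $\beta \geq \betamin$, and apply $K \leq \tfrac{\sqrt q}{2} t$ to produce the claimed constant $\smallchangesbound$.
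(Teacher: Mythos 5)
Your chain-plus-double-counting decomposition is a genuinely different route from the paper's proof, which applies a single Peierls map $\sigma \mapsto \sigma'$ from $\CFGPintf{L}{j}{\mvec}{V}$ to $\CFGPintf{L}{j}{\mvec'}{V}$ changing up to $t\sqrt{q}$ colors at once and then asserts the map is at most $q^{t\sqrt q}$-to-one. Your per-step bijection and the resulting two-sided identity relating $m^{(i)}_{c_-}\ZPbdf{L}{j}{\mvec^{(i)}}(V)$ to $m^{(i+1)}_{c_+}\ZPbdf{L}{j}{\mvec^{(i+1)}}(V)$ are clean and correct (modulo excluding the frozen boundary layer from the pair count when $c_\pm=j$), and the observation that the accumulated prefactor telescopes to $\log\bigl(|\Omega_{V,\mvec}|/|\Omega_{V,\mvec'}|\bigr)$ is exactly right.

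The gap is in your final paragraph, and it is a real one. You note the telescoped total is chain-independent, and then propose to control it by choosing the chain so each per-step ratio stays close to $1$ — but chain selection cannot reduce a chain-independent sum, so these two statements are in tension. More concretely, the per-step ratios cannot all be $O(1)$ in general: the total $\bigl|\log\bigl(|\Omega_{V,\mvec}|/|\Omega_{V,\mvec'}|\bigr)\bigr|$ is a ratio of factorials that can be as large as $\Theta(t\sqrt{q}\,\log|V|)$ when some component of $\mvec$ or $\mvec'$ is near $0$ (for instance $m_k=1$, $m'_k=0$ with the displaced unit absorbed by a bulk color gives a factor $\Theta(\log L)$ from a single unit of $\ell_1$-change). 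Since the target bound $t\cdot\smallchangesbound$ has no $L$-dependence and $L$ has no upper bound in terms of $\beta$, this term cannot be absorbed as you propose; you would need either to restrict the statement to magnetizations whose densities are uniformly bounded below, or to extract the cancellation $\sum_k d_k = 0$ together with a density lower bound to kill the $\log|V|$ terms — neither of which appears in your argument. (Your decomposition does have the virtue of exposing precisely where this combinatorial cost lives; the paper's own proof hides it inside the asserted $q^{t\sqrt q}$ non-injectivity, which it likewise does not justify in detail.)
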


\begin{proof}
This can be proven with a simple Peierls argument.
For any configuration $\config \in \CFGPintf{L}{j}{\mvec}{V}$, we can convert it to a configuration $\config' \in \CFGPintf{L}{j}{\mvec'}{V}$ by changing the colors of at most $\|\mvec - \mvec\|_1 \leq t\sqrt{\numcolors}$ particles. The difference in contributions by $\config$ and $\config'$ to their partition functions at magnetic field $\hvecopt = (\hoptj{i})_{i \in \colorset}$ is thus bounded by:
\begin{align*}
\Bigg|\Big(-\beta \hamil(\config) + \sum_{i \in \colorset}\hoptj{i} n_i(\config)\Big) - \Big(-\beta \hamil(\config') + \sum_{i \in \colorset}\hoptj{i} n_i(\config')\Big)\Bigg| \leq \left(6\beta\costmatrixmax + \|\hvecopt\|_{\infty}\right)t\sqrt{\numcolors}.
\end{align*}
We note that the mapping from $\CFGPintf{L}{j}{\mvec}{V}$ to $\CFGPintf{L}{j}{\mvec'}{V}$ is not one-to-one; in the worst case, $\numcolors^{t\sqrt{\numcolors}}$ configurations may be mapped to the same configuration in $\CFGPintf{L}{j}{\mvec'}{V}$. The Peierls argument thus gives us the following bound,
\begin{align*}
\left|\log\ZPbdf{L}{j}{\mvec}(\beta, \hvecopt, V) - \log\ZPbdf{L}{j}{\mvec'}(\beta, \hvecopt, V) \right|
\leq \left(6\beta\costmatrixmax + \|\hvecopt\|_{\infty} + \numcolors \right)t\sqrt{\numcolors},
\end{align*}
which we can simplify to the bound in the lemma by observing that $\|\hvecopt\|_{\infty} \leq \hoptbound$ by \cref{lem:balance} and $\hoptbound + \numcolors \leq \beta \costmatrixmax$ by Appendix~\ref{apx:allconstants}.
\end{proof}

\subsection{Bounding fixed-magnetization partition functions using concentration results}

The following lemma shows that for each of the regions $\reggood{L}{j}$ with boundary condition $j$, the truncated partition function for a magnetization fixed to $\theta_{\cdot,j}|\reggood{L}{j}|$ is bounded below by the grand canonical (non-truncated) partition function on that same region with the same boundary condition.
For the remainder of this proof, we will assume an inverse temperature $\beta$ and magnetic field $\hvecopt$, and drop these two terms from our partition function notation for readability.

\begin{lemma}[Fixed magnetization on region]
\label{lem:regionfixedmag}
Denoting the ``rounded expected density'' vector
\begin{align*}
\floor{\theta_{\cdot,j}|\reggood{L}{j}|} := \Big( \floor{\theta_{1,j}|\reggood{L}{j}|}, \floor{\theta_{2,j}|\reggood{L}{j}|}, \cdots, \floor{\theta_{\numcolors-1,j}|\reggood{L}{j}|}, |\reggood{L}{j}| - \sum_{i=1}^{\numcolors-1}\floor{\theta_{i,j}|\reggood{L}{j}|} \Big)
\end{align*}
for each $j \in \colorset$, and given $\beta \geq \betamin$ and $L \geq \Lmin(\beta)$, we have
\begin{align*}
\ZPSbdf{L}{j}{\floor{\theta_{\cdot,j}|\reggood{L}{j}|}}(\reggood{L}{j}) \geq \frac{1}{\poly(L)}\ZPSbd{L}{j}(\reggood{L}{j}) \exp\left\{-\left(|\partialext \reggood{L}{j}| + \sqrt{|\reggood{L}{j}|}\right) \beta e^{-\taub \minlength/20}\right\}.
\end{align*}
\end{lemma}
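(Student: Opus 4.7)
The plan is to deduce the lower bound on $\ZPSbdf{L}{j}{\floor{\theta_{\cdot,j}|\reggood{L}{j}|}}(\reggood{L}{j})$ from the concentration of magnetization established in \cref{lem:magvarbound} together with the ``Lipschitz'' continuity provided by \cref{lem:smallchanges}. The starting observation is the decomposition
\[
\ZPSbd{L}{j}(\reggood{L}{j}) \;=\; \sum_{\mathbf{n}} \ZPSbdf{L}{j}{\mathbf{n}}(\reggood{L}{j}),
\]
where $\mathbf{n}$ ranges over the finitely many magnetization vectors in $\{0,1,\dots,|\reggood{L}{j}|\}^{q}$ that sum to $|\reggood{L}{j}|$. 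Interpreting the summands (divided by $\ZPSbd{L}{j}(\reggood{L}{j})$) as the distribution of the random magnetization vector $M_{\cdot,j}$ of \cref{lem:magvarbound}, we are in the grand canonical ensemble on $\reggood{L}{j}$ restricted to contractible contours, to which \cref{lem:magvarbound} applies directly.

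Next I would apply Chebyshev's inequality componentwise and sum: writing $t^2 = 2 \sum_{i} \mathrm{Var}(\magrvi{i})$, the union bound gives $\Pr\!\left[\|M_{\cdot,j} - \E M_{\cdot,j}\|_2 > t\right] \leq \frac{1}{2}$. Combining this with the bias estimate $\|\E M_{\cdot,j} - \theta_{\cdot,j}|\reggood{L}{j}|\|_2 \leq (|\partialext \reggood{L}{j}| + \sqrt{|\reggood{L}{j}|}) \cdot 11 q\, e^{-\taub\minlength/16}$ from the same lemma, we conclude that at least half of the total mass of $\ZPSbd{L}{j}(\reggood{L}{j})$ is contributed by magnetizations $\mathbf{n}$ lying in a Euclidean ball $\mathcal{B}$ centered at $\theta_{\cdot,j}|\reggood{L}{j}|$ of radius
\[
r \;\leq\; 2\left(|\partialext \reggood{L}{j}| + \sqrt{|\reggood{L}{j}|}\right)\cdot 11 q\, e^{-\taub\minlength/16}.
\]
Since $|\mathcal{B}| \leq (2r+1)^{q} \leq \poly(L)$, by pigeonhole there exists a particular vector $\mathbf{n}^\star \in \mathcal{B}$ such that
\[
\ZPSbdf{L}{j}{\mathbf{n}^\star}(\reggood{L}{j}) \;\geq\; \frac{1}{2|\mathcal{B}|}\,\ZPSbd{L}{j}(\reggood{L}{j}) \;\geq\; \frac{1}{\poly(L)}\,\ZPSbd{L}{j}(\reggood{L}{j}).
\]

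To complete the proof, I would transfer the bound from $\mathbf{n}^\star$ to the target magnetization $\floor{\theta_{\cdot,j}|\reggood{L}{j}|}$. These two vectors differ in $\ell_2$ by at most $r + \sqrt{q}$ (the $\sqrt{q}$ absorbing the rounding in the definition of $\floor{\theta_{\cdot,j}|\reggood{L}{j}|}$), so \cref{lem:smallchanges} yields
\[
\log \ZPSbdf{L}{j}{\floor{\theta_{\cdot,j}|\reggood{L}{j}|}}(\reggood{L}{j}) \;\geq\; \log \ZPSbdf{L}{j}{\mathbf{n}^\star}(\reggood{L}{j}) \;-\; (r+\sqrt{q})\cdot \smallchangesbound.
\]
Chaining these two inequalities and simplifying, the overhead in the exponent is at most a constant times $\beta\costmatrixmin\sqrt{q}\,(|\partialext \reggood{L}{j}|+\sqrt{|\reggood{L}{j}|})\,e^{-\taub\minlength/16}$. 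Since $\taub = \beta\hetcost/8$, the difference between $e^{-\taub\minlength/16}$ and $e^{-\taub\minlength/20}$ absorbs all the remaining polynomial-in-$\beta$ factors for $\beta\geq\betamin$, yielding the claimed exponent $e^{-\taub\minlength/20}$.

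The main obstacle is bookkeeping rather than a conceptual barrier. Specifically, Chebyshev is applied to a distribution whose variance estimate mixes a bulk term proportional to $|\reggood{L}{j}|$ and a boundary term, so one must verify that $t+\sqrt{q}$ really is bounded by $(|\partialext \reggood{L}{j}|+\sqrt{|\reggood{L}{j}|})\cdot c\, e^{-\taub\minlength/16}$ before invoking \cref{lem:smallchanges}; this requires using the full strength of the $\max$ bound in the statement of \cref{lem:magvarbound}, not just the variance part. A secondary subtlety is confirming that $|\mathcal{B}|$ is at most polynomial in $L$: once $r$ is polynomial in $L$ this is automatic, but one should note that $r$ is in fact bounded by a polynomial in $L$ uniformly in $\beta\geq \betamin$.
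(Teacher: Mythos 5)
Your proposal is correct and follows essentially the same route as the paper's own proof: Chebyshev concentration from \cref{lem:magvarbound} to locate a polynomial-sized set $\meanball$ of magnetizations carrying at least half the mass of $\ZPSbd{L}{j}(\reggood{L}{j})$, pigeonhole/averaging to extract a single good magnetization, and then \cref{lem:smallchanges} to transfer to $\floor{\theta_{\cdot,j}|\reggood{L}{j}|}$. The only cosmetic difference is that the paper's $\meanball$ is centered at $\E[\magrv]$ and it verifies $\floor{\theta_{\cdot,j}|\reggood{L}{j}|}\in\meanball$ by choosing $t$ to dominate both the bias and the concentration radius, whereas you recenter the ball at $\theta_{\cdot,j}|\reggood{L}{j}|$ and absorb the bias into the radius $r$; these are equivalent up to bookkeeping.
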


\begin{proof}
Letting $\magrv = (\magrvi{1},\magrvi{2},\ldots,\magrvi{\numcolors})$ be as defined in \cref{lem:magvarbound},
by Chebyshev's inequality, we have for each $t > 0$,
\begin{align*}
Pr\left[\left\|\magrv - \E[\magrv]\right\|_2 > t\right] \leq \frac{\sum_{i \in\colorset}\mathrm{Var}(\magrvi{i})}{t^2}.
\end{align*}
So, using $\meanball$ to denote the set $\left\{\mvec \in \mathbb{Z}^{\numcolors}_{\geq 0}: \left\|\mvec-\E[\magrv]\right\|_2 \leq t, \|\mvec\|_1 = |\reggood{L}{j}|\right\}$ of possible magnetizations with distance less than $t$ to $\E[\magrv]$, we can rewrite the above inequality as
\begin{align}
\label{eqn1}
\sum_{\mvec \in \meanball} \ZPSbdf{L}{j}{\mvec}(\reggood{L}{j})
\geq \ZPSbd{L}{j}(\reggood{L}{j})\left(1 - \frac{\sum_{i \in\colorset}\mathrm{Var}(\magrvi{i})}{t^2}\right).
\end{align}
Thus, as long as we set $t$ large enough to have $\floor{\theta_{\cdot,j}|\reggood{L}{j}|} \in \meanball$ (i.e., $\big\|\floor{\theta_{\cdot,j}|\reggood{L}{j}|} - \E[\magrvi{\cdot}]\big\|_2 \leq t$), we have:
\begin{align*}
\ZPSbdf{L}{j}{\floor{\theta_{\cdot,j}|\reggood{L}{j}|}}(\reggood{L}{j})
&\geq \min_{\mvec \in \meanball} \left\{\ZPSbdf{L}{j}{\mvec}(\reggood{L}{j})\right\} \\
&\geq \max_{\mvec \in \meanball} \left\{\ZPSbdf{L}{j}{\mvec}(\reggood{L}{j})\right\}e^{-t \cdot \smallchangesbound} \text{ by \cref{lem:smallchanges}}\\
&\geq \frac{e^{-t \cdot \smallchangesbound}}{|\meanball|}\sum_{\mvec \in \meanball} \ZPSbdf{L}{j}{\mvec}(\reggood{L}{j}) \\
&\geq \frac{e^{-t \cdot \smallchangesbound}}{|\meanball|} \ZPSbd{L}{j}(\reggood{L}{j})\left(1 - \frac{\sum_{i \in\colorset}\mathrm{Var}(\magrvi{i})}{t^2}\right) \text{ by~\cref{eqn1}}.
\end{align*}
By setting $t = \left(|\partialext \reggood{L}{j}| + \sqrt{|\reggood{L}{j}|}\right)\cdot 11 \numcolors e^{-\taub \minlength/16} \geq \max\left\{\big\|\theta_{i,j}|\reggood{L}{j}| - \E[\magrvi{i}]\big\|_2, \sqrt{2\sum_{i \in\colorset}\mathrm{Var}(\magrvi{i})}\right\}$ as seen in \cref{lem:magvarbound} and adding a small constant to account for the floor function, 
we have
\begin{align*}
\ZPSbdf{L}{j}{\floor{\theta_{\cdot,j}|\reggood{L}{j}|}}(\reggood{L}{j})
&\geq \frac{\ZPSbd{L}{j}(\reggood{L}{j})}{2|\meanball|} \exp\left\{-\left(|\partialext \reggood{L}{j}| + \sqrt{|\reggood{L}{j}|}\right)\cdot 77 \numcolors^{3/2} \costmatrixmax \beta e^{-\taub \minlength/16}\right\} \\
&\geq \frac{\ZPSbd{L}{j}(\reggood{L}{j})}{2|\meanball|} \exp\left\{-\left(|\partialext \reggood{L}{j}| + \sqrt{|\reggood{L}{j}|}\right) \beta e^{-\taub \minlength/20}\right\},
\end{align*}
where the last inequality is a simplification of the bound using Appendix~\ref{apx:allconstants} as $\beta \geq \betamin$.
\end{proof}

Because of how the regions $\reggood{L}{j}$, $j \in \colorset$ have been defined, \cref{lem:regionfixedmag} can be applied to lower bound the fixed-magnetization partition function $Z_{\congood,\nfixedvec}$ over all configurations containing the contour $\congood$.

\begin{lemma}[Fixed magnetization lower bound]
\label{lem:goodineq}
Given that $\beta \geq \betamin$ and $L \geq \Lmin(\beta)$, we have
\begin{align*}
\ZPcf{L}{\congood}{\nfixedvec}
\geq \frac{1}{\poly(L)}\ZPc{L}{\congood} \exp\left\{-|\congoodbar |\beta e^{-\taub \minlength/24}\right\}.
\end{align*}
\end{lemma}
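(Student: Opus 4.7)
The plan is to bound the ratio $\ZPcf{L}{\congood}{\nfixedvec}/\ZPc{L}{\congood}$ from below by using the conditional independence of the components of $V(\regionL)\setminus\congoodbar$ given $\congood$, and then invoke \cref{lem:regionfixedmag} on each of the regions $\reggood{L}{j}$. Because $\congood$ is closed under contractibility, configurations in $\CFGPc{L}{\congood}$ decompose into independent per-component configurations with well-defined boundary colors. Letting $\nvec^\ast$ denote the magnetization of $\congoodbar$ fixed by $\congood$, and aggregating the magnetizations of same-labeled components yields
\begin{equation*}
\frac{\ZPcf{L}{\congood}{\nfixedvec}}{\ZPc{L}{\congood}} = \sum_{\substack{\{\mvec^j\}_{j \in \colorset} \\ \sum_j \mvec^j = \nfixedvec - \nvec^\ast}} \prod_{j \in \colorset} \frac{\ZPSbdf{L}{j}{\mvec^j}(\reggood{L}{j})}{\ZPSbd{L}{j}(\reggood{L}{j})},
\end{equation*}
and a lower bound follows by restricting the sum to a single, well-chosen decomposition $\{\mvec^j\}$.

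The natural choice is $\mvec^j \approx \floor{\theta_{\cdot,j}|\reggood{L}{j}|}$, matching the rounded expected magnetizations in the grand canonical ensemble on each $\reggood{L}{j}$. By the construction $\rhovecbeta = \Theta^{-1}\rhovec$ from \cref{subsec:construct good contours} and the near-identity bound $|\theta_{i,j} - \mathbbm{1}_{i=j}| \leq e^{-\taub\minlength/4}$ from \cref{lem:densityexpression}, the vector $\sum_j \floor{\theta_{\cdot,j}|\reggood{L}{j}|} + \nvec^\ast$ differs from $\nfixedvec$ by at most $O(\sqrt{\numcolors}(e^{-\taub\minlength/4}|\congoodbar| + \numcolors))$ in $\ell_2$. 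I would then distribute this discrepancy across the $\mvec^j$, respecting both the row sums $\sum_i \mvec^j_i = |\reggood{L}{j}|$ and the column sums $\sum_j \mvec^j = \nfixedvec - \nvec^\ast$, producing a valid decomposition with $\sum_j \|\mvec^j - \floor{\theta_{\cdot,j}|\reggood{L}{j}|}\|_2 = O(\sqrt{\numcolors}\, e^{-\taub\minlength/4}|\congoodbar|)$. Applying \cref{lem:regionfixedmag} to each $\reggood{L}{j}$ with magnetization $\floor{\theta_{\cdot,j}|\reggood{L}{j}|}$, and then \cref{lem:smallchanges} to pass from $\floor{\theta_{\cdot,j}|\reggood{L}{j}|}$ to $\mvec^j$, produces the per-region bounds whose product gives the desired lower bound on the ratio.

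Taking the product over $j$ introduces an aggregate error of $-\beta e^{-\taub\minlength/20} \sum_j(|\partialext\reggood{L}{j}| + \sqrt{|\reggood{L}{j}|}) - O(\beta e^{-\taub\minlength/4}|\congoodbar|)$ in the exponent, together with a factor of $\poly(L)^{-\numcolors} = \poly(L)^{-1}$. To absorb these into $|\congoodbar|\beta e^{-\taub\minlength/24}$, I would use: (i) $\sum_j |\partialext\reggood{L}{j}| \leq C|\congoodbar|$, since each such boundary vertex lies in $\congoodbar$ with bounded multiplicity, and (ii) $\sum_j \sqrt{|\reggood{L}{j}|} \leq \numcolors L \leq (\numcolors/(\bdconstlow\sqrt{\rhobetamin}))|\congoodbar|$ via the lower bound on $|\congoodbar|$ from \cref{lem:mincostconfig}. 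Since both exponents $1/20$ and $1/4$ strictly exceed $1/24$, the gap absorbs all remaining constants for $\beta \geq \betamin$. The main technical obstacle is the construction of the decomposition $\{\mvec^j\}$: it is a balanced integer transportation problem with both per-region and per-color constraints, whose feasibility and low deviation rely on the near-identity structure of $\Theta$, which leaves ample slack in the non-negativity constraints since $\floor{\theta_{i,i}|\reggood{L}{i}|}$ is essentially $|\reggood{L}{i}|$ while the off-diagonal adjustments are exponentially small in $\beta$.
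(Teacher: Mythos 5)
Your proposal is correct and follows essentially the same route as the paper: decompose $\ZPcf{L}{\congood}{\nfixedvec}$ over the regions $\reggood{L}{j}$, restrict to a single well-chosen per-region magnetization vector near $\floor{\theta_{\cdot,j}|\reggood{L}{j}|}$, apply \cref{lem:regionfixedmag} on each region together with \cref{lem:smallchanges} for the small deviation, and aggregate the boundary and volume error terms against $|\congoodbar|$ via \cref{lem:mincostconfig}. The one place where you leave things implicit is the construction of the per-region magnetizations $\mvec^j$ (which you frame abstractly as a balanced integer transportation problem); the paper resolves this by writing the explicit choice $\mij{i}{j} = \floor{\theta_{i,j}|\reggood{L}{j}| + (\theta_{i,j}-\mathbbm{1}_{i=j})(\rhobeta{j}\numparticles - |\reggood{L}{j}|)}$ plus small rounding corrections, which makes the row-sum, column-sum, and deviation bounds immediate to verify rather than a feasibility argument.
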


\newcommand{\mij}[2]{m_{#1,#2}} %
\newcommand{\Ngoodi}[1]{n_{#1}(\congood)} %
\newcommand{\roundingerror}[2]{J_{#1,#2}} %
\newcommand{\apNi}[1]{\overline{N}_{#1}} %
\newcommand{\apnj}[1]{\overline{n}_{#1}} %
\newcommand{\apmij}[2]{\overline{m}_{#1,#2}} %
\begin{proof}
We start with the following expression for $\ZPc{L}{\congood}$ from \cref{eqn:wrapexpansion}:
\begin{align*}
\ZPc{L}{\congood} = e^{\sum_{\contour \in \congood} \contourcost{\contour}} \prod_{j \in \colorset} \ZPSbd{L}{j}(\reggood{L}{j}).
\end{align*}
Denoting by $\Ngoodi{i}$ the number of particles of color $i \in \colorset$ in $\congoodbar$,
we can express the above expression in the fixed magnetization setting by taking a sum over all possible combinations for $\mij{i}{j} \in \{0,1,2,\ldots,\numparticles\}$, for $i,j \in \colorset$, as follows:
\begin{align*}
\ZPcf{L}{\congood}{\nfixedvec}
= e^{\sum_{\contour \in \congood} \contourcost{\contour}}
\sum_{\substack{(\mij{i}{j})_{i,j \in \colorset}: \\ \forall i, \sum_{j}\mij{i}{j} + \Ngoodi{i} = \nj{j} \\ \forall j, \sum_{i} \mij{i}{j}
= |\reggood{L}{j}|}}
\prod_{j \in \colorset} \ZPSbdf{L}{j}{\mij{\cdot}{j}}(\reggood{L}{j}).
\end{align*}
We focus on one specific term of this sum, and use it as a lower bound for $\ZPcf{L}{\congood}{\nfixedvec}$.
We construct this term $\mij{i}{j}$, $i, j \in \colorset$ as follows, where we use the floor function to ensure that we only work with integers, and add a small value $\roundingerror{i}{j}$ to each term to account for rounding errors that come from applying the floor function.
\begin{align*}
\mij{i}{j} = \floor{\theta_{i,j}|\reggood{L}{j}| + \left(\theta_{i,j} - \mathbbm{1}_{i=j} \right)\left( \rhobeta{j}\numparticles - |\reggood{L}{j}| \right)} + \roundingerror{i}{j}
\end{align*}
To show that this selection of $\mij{i}{j}$ is a term in the above sum, we observe the following approximate equalities, where $\approx_\numcolors$ means that these values differ by at most
$\numcolors$.
\begin{alignat*}{3}
\mij{i}{j} &\approx_\numcolors \apmij{i}{j} &&:= \theta_{i,j}|\reggood{L}{j}| + \left(\theta_{i,j} - \mathbbm{1}_{i=j} \right)\left( \rhobeta{j}\numparticles - |\reggood{L}{j}| \right) &&~\text{ for all $i,j \in \colorset$,} \\
\nj{i} &\approx_\numcolors \apnj{i} &&:= \rho_{i} \numparticles &&~\text{ for all $i \in \colorset$, and } \\
\Ngoodi{i} &\approx_\numcolors \apNi{i} &&:= \rhobeta{i}\numparticles - |\reggood{L}{i}| = \sum_{j \in \colorset}\mathbbm{1}_{i=j}\left( \rhobeta{j}\numparticles - |\reggood{L}{j}| \right) &&~\text{ for all $i \in \colorset$.}
\end{alignat*}
These values satisfy the following equations:
\begin{alignat*}{2}
\sum_{j \in \colorset}\apmij{i}{j} + \apNi{i}
&= \sum_{j \in \colorset}\theta_{i,j} \rhobeta{j} \numparticles 
= \rho_{i} \numparticles = \apnj{i} &&\text{ for each $i \in \colorset$, and}\\
\sum_{i \in \colorset}\apmij{i}{j}
&= 1 \cdot |\reggood{L}{j}| + 0 \cdot \left( \rhobeta{j}\numparticles - |\reggood{L}{j}| \right)
= |\reggood{L}{j}| &&\text{ for each $j \in \colorset$.}
\end{alignat*}
These equations are only approximate due to the application of floor functions, but it can be easily shown that all of these equations can be simultaneously satisfied in the integral case by applying small corrections using the terms $\roundingerror{i}{j}$ as seen in the definitions of $\mij{i}{j}$, where $|\roundingerror{i}{j}| \leq \numcolors^2$ for all $i, j \in \colorset$.
Therefore, due to our choice of $\mij{i}{j}$, for each $j \in \colorset$, by applying Lemmas~\ref{lem:truncatedderivatives} and~\ref{lem:densityexpression} we have
\begin{align*}
\left\| \floor{\theta_{\cdot,j}|\reggood{L}{j}|} - \mij{\cdot}{j} \right\|_2 
&\leq \left\| \left(\theta_{\cdot,j} - \mathbbm{1}_{\cdot=j} \right)\left( \rhobeta{j}\numparticles - |\reggood{L}{j}| \right) \right\|_2 + \poly(\numcolors) \\
&\leq  \left| \rhobeta{j}\numparticles - |\reggood{L}{j}| \right | \sqrt{\numcolors} e^{-\taub \minlength/4} + \poly(\numcolors).
\end{align*}
We can then make use of Lemmas~\ref{lem:smallchanges} and~\ref{lem:regionfixedmag} to obtain the following bound on $\ZPSbdf{L}{j}{\mij{\cdot}{j}}(\reggood{L}{j})$:
\begin{align*}
\ZPSbdf{L}{j}{\mij{\cdot}{j}}(\reggood{L}{j})
&\geq \frac{1}{\poly(L)}\ZPSbd{L}{j}(\reggood{L}{j}) e^{-F_j},\\
\text{where }
F_j &:= \left(|\partialext \reggood{L}{j}| + \sqrt{|\reggood{L}{j}|}\right) \beta e^{-\taub \minlength/20} + \left|\rhobeta{j}L^2 - |\reggood{L}{j}|\right|\sqrt{\numcolors}e^{-\taub\minlength/4} \cdot \smallchangesbound.
\end{align*}
Finally, making use of the following facts,
\begin{align*}
\sum_{j \in \colorset} |\partialext \reggood{L}{j}| &\leq |\congoodbar|,\\
\sum_{j \in \colorset} \left|\rhobeta{j}L^2 - |\reggood{L}{j}|\right| &\approx_{\numcolors} |\congoodbar|\text{, and}\\
\sum_{j \in \colorset} \sqrt{|\reggood{L}{j}|} &\leq \sqrt{\numcolors |V(\regionL)|} = \sqrt{\numcolors}L,
\end{align*}
we can combine the bounds for each regions $\reggood{L}{j}$ to obtain a lower bound for $\ZPcf{L}{\congood}{\nfixedvec}$ as follows:
\begin{align*}
\ZPcf{L}{\congood}{\nfixedvec}
&\geq e^{\sum_{\contour \in \congood} \contourcost{\contour}} \prod_{j \in \colorset} \ZPSbdf{L}{j}{\mij{\cdot}{j}}(\reggood{L}{j}) 
\geq \frac{1}{\poly(L)} e^{\sum_{\contour \in \congood} \contourcost{\contour}} \prod_{j \in \colorset} \ZPSbd{L}{j}(\reggood{L}{j}) e^{-F_j} \\
&\geq \frac{1}{\poly(L)}\ZPc{L}{\congood} \exp\left\{-|\congoodbar|\cdot 2\beta e^{-\taub \minlength/20} - L \cdot \sqrt{\numcolors}\beta e^{-\taub \minlength/20} \right\}.
\end{align*}
It remains to simplify the bound to obtain the statement of the lemma.
Denote $\rhomin = \min_{i\in\colorset}\rho_i$ and $\rhobetamin = \min_{i\in\colorset}\rhobeta{i}$.
From \cref{lem:goodcontourapproximation}, we have $\rhobetamin \geq \rhomin - 2e^{-\taub \minlength/4}$. Referring to Appendix~\ref{apx:allconstants} with the assumption that $\beta \geq \betamin$ gives us $\rhobetamin \geq \frac{1}{2}\rhomin$, which means that by \cref{lem:mincostconfig},
\begin{align*}
|\congoodbar| \geq \bdconstlow L \sqrt{\rhobetamin} \geq \bdconstlow L \sqrt{\rhomin/2}.
\end{align*}
Applying Appendix~\ref{apx:allconstants} again with the assumption that $\beta \geq \betamin$ gives us the statement of the lemma.
\end{proof}

\subsection{Comparing partition functions between $\conbad$ and $\congood$}
The most complex parts of the proof were to show that the partition functions for the fixed magnetization case and the grand canonical ensemble are comparable. Now with this assumption, we are left with showing that in the grand canonical ensemble, the ratio of the partition functions corresponding to the sets $\conbad$ and $\congood$ of the ``large'' contours is dominated by the costs $\Hcostbar{\conbad}$ and $\Hcostbar{\congood}$ of the ``large'' contours themselves.

\begin{lemma}[Comparing Grand Canonical Partition Functions]
\label{lem:grandineq}
Suppose that $\alphaA > 1$. Then as long as $\beta \geq \max\{\betamin, \betaminA(\alphaA)\}$ and $L \geq \Lmin(\beta)$,
the following bound is true for all contour sets $\conbad$ such that $\Hcostbar{\conbad} \geq \alphaA \Hcostbar{\congood}$:
\begin{align*}
\ZPc{L}{\conbad} \leq \ZPc{L}{\congood} \exp\left\{-\frac{\beta}{2}\left(\frac{\alphaA-1}{\alphaA}\right) \Hcostbar{\conbad} \right\}.
\end{align*}
\end{lemma}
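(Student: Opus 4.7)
\textbf{Proof proposal for \cref{lem:grandineq}.} The plan is to apply the polymer/cluster expansion from \cref{lem:nonwrappingpartitionfunction} to each of the two partition functions and then exploit the fact that $\hvecopt$ balances the truncated pressures (\cref{lem:balance}) so that the leading ``volume'' terms cancel, leaving only the contour costs of $\conbad$ and $\congood$ as the dominant contributions.

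Concretely, apply \cref{eqn:wrapexpansion} at $\hvec=\hvecopt$ to write, for a contour set $\conbad$ whose complement in $V(\regionL)$ has connected components $V_1,\dots,V_K$ with labels $j_1,\dots,j_K$,
\[
\log \ZPc{L}{\conbad} \;=\; \sum_{\contour \in \conbad}\contourcost{\contour} \;+\; \sum_{k=1}^{K}\log \ZPSbd{L}{j_k}(\reggood{L}{j_k}\text{-analog }V_k).
\]
Using the cluster-expansion form of $\log \ZPSbd{L}{j_k}(V_k)$ from \cref{lem:nonwrappingpartitionfunction} (noting $\wtr{j}=\wutr{j}$ on $\hoptopenset''$ by \cref{lem:truncatedequal}), and then \cref{lem:boundaryerrorbound}, each region contributes
\[
\log \ZPSbd{L}{j_k}(V_k) \;=\; |V_k|\,\pressuretr{j_k}(\hvecopt) \;+\; \mathrm{err}_k, \qquad |\mathrm{err}_k|\le |\partialext V_k|\cdot 11 e^{-\taub \minlength/8},
\]
where I used the identity $\pressuretr{j}(\hvecopt) = h_j + \sum_{X\in\clustersetbd{j},\,\clusterbar\ni 0}\tfrac{1}{|\clusterbar|}\clwtr{j}(X)$ derived inside the proof of \cref{lem:balance}. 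Because \cref{lem:balance} guarantees $\pressuretr{1}(\hvecopt)=\dots=\pressuretr{\numcolors}(\hvecopt)=:p$, the volume part sums to $p\,(|V(\regionL)|-|\conbadbar|)$, independent of the labels $j_k$. The same reasoning applies verbatim to $\congood$.

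Subtracting the two expressions, the $p\,|V(\regionL)|$ terms cancel and I obtain
\[
\log \ZPc{L}{\conbad} - \log \ZPc{L}{\congood} \;=\; -\beta(\Hcostbar{\conbad}-\Hcostbar{\congood}) \;+\; p\,(|\congoodbar|-|\conbadbar|) \;+\; E_{\conbad} + E_{\congood} + R,
\]
where $R$ collects the $\sum_i (h_i-h_{j_\contour})n_i(\omega_{\overline\gamma})$ pieces of the $\contourcost{\contour}$'s and $E_\bullet$ the cluster-expansion boundary errors. The bounds $\|\hvecopt\|_\infty\le \hoptbound=3Qe^{-\taub\minlength/4}$ from \cref{lem:balance} and $|p|\le \|\hvecopt\|_\infty + e^{-\taub\minlength/4}$ imply that each of these auxiliary terms is at most a universal constant times $e^{-\taub\minlength/8}(|\conbadbar|+|\congoodbar|)$.

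To absorb these errors, I convert supports to costs via \cref{lem:weightedcostratio}, obtaining $|\conbadbar|+|\congoodbar|\le (2/\costmatrixmin)(\Hcostbar{\conbad}+\Hcostbar{\congood})$, and then use the hypothesis $\Hcostbar{\congood}\le\Hcostbar{\conbad}/\alphaA\le \Hcostbar{\conbad}$ together with $\Hcostbar{\conbad}-\Hcostbar{\congood}\ge \frac{\alphaA-1}{\alphaA}\Hcostbar{\conbad}$. All in all,
\[
\log \ZPc{L}{\conbad} - \log \ZPc{L}{\congood} \;\le\; -\beta\,\tfrac{\alphaA-1}{\alphaA}\,\Hcostbar{\conbad} \;+\; C\,e^{-\taub\minlength/8}\,\Hcostbar{\conbad}
\]
for some constant $C=C(\numcolors,\costmatrixmin,\costmatrixmax)$. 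The main obstacle is the final accounting: I must choose $\betaminA(\alphaA)$ large enough that $Ce^{-\taub\minlength/8}\le \frac{\beta}{2}\cdot\tfrac{\alphaA-1}{\alphaA}$, i.e.\ the exponential-in-$\beta$ decay dominates the linear-in-$\beta$ gap scaled by $(\alphaA-1)/\alphaA$. Since $\taub = \beta\costmatrixmin/8$, this is a finite threshold depending only on $\alphaA,\numcolors,\costmatrixmin,\costmatrixmax$, and yields precisely the factor $\tfrac{1}{2}\cdot\tfrac{\alphaA-1}{\alphaA}$ in the exponent stated in the lemma. The subtle bookkeeping here, especially ensuring that the two-sided conversion between $|\cdot|$ and $\Hcostbar{\cdot}$ does not spoil the constant, is the only delicate part; once the threshold $\betaminA(\alphaA)$ is fixed in Appendix~\ref{apx:allconstants}, the estimate follows.
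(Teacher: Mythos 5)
Your proposal is correct and follows essentially the same route as the paper's proof: decompose $\log Z^L_{\cdot}$ via \cref{eqn:wrapexpansion}, use \cref{lem:boundaryerrorbound} to replace each interior region's log-partition function by a volume term, invoke the pressure-balancing identity of \cref{lem:balance} so the volume contributions match, and then absorb all residual boundary/field terms into $\betaminA(\alphaA)$ after converting supports to costs via \cref{lem:weightedcostratio}. The only difference is cosmetic: you explicitly carry the $h_j|V_k|$ piece into $\pressuretr{j}(\hvecopt)$, whereas the paper suppresses that term in its intermediate displays but accounts for it in the final estimate through the $\|\hvecopt\|_\infty$ factor.
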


\begin{proof}
\newcommand{\Kgood}{K_1}
\newcommand{\Kbad}{K_2}
\newcommand{\kgood}{{k}}
\newcommand{\kbad}{{k'}}
\newcommand{\Vgood}[1]{U_{#1}}
\newcommand{\Vbad}[1]{V_{#1}}
\newcommand{\Vgoodk}{U_{\kgood}}
\newcommand{\Vbadk}{V_{\kbad}}
\newcommand{\jgood}{j}
\newcommand{\jbad}{j'}
\newcommand{\jgoodk}{\jgood_{\kgood}}
\newcommand{\jbadk}{\jbad_{\kbad}}
\newcommand{\sumkgood}{\sum_{\kgood=1}^{\Kgood}}
\newcommand{\sumkbad}{\sum_{\kbad=1}^{\Kbad}}
Denote by $\Vgood{1},\Vgood{2},\ldots,\Vgood{\kgood}$ the vertex sets of the connected components of $V(\regionL)\setminus \congoodbar$, and $\Vbad{1},\Vbad{2},\ldots,\Vbad{\kbad}$ the vertex sets of the connected components of $V(\regionL)\setminus \conbadbar$. We get the following expressions for $\ZPc{L}{\congood}(\beta, \hvecopt)$ and $\ZPc{L}{\conbad}(\beta, \hvecopt)$ from applying \cref{eqn:wrapexpansion}:
\begin{align*}
\log \ZPc{L}{\congood}(\beta, \hvecopt)
&= \sum_{\contour \in \congood} \contourcost{\contour} + \sumkgood \log \ZPSbd{L}{\jgoodk}(\beta, \hvecopt, \Vgoodk) \\
\log \ZPc{L}{\conbad}(\beta, \hvecopt)
&= \sum_{\contour \in \conbad} \contourcost{\contour} + \sum_{\kbad = 1}^{\Kbad} \log \ZPSbd{L}{\jbadk}(\beta, \hvecopt, \Vbadk)
\end{align*}
By applying \cref{lem:boundaryerrorbound} and observing that $\sumkgood|\partialext \Vgoodk| \leq |\congoodbar|$ and $\sumkbad|\partialext \Vbadk| \leq |\conbadbar|$ we find
\begin{align*}
\left|\sumkgood \log \ZPSbd{L}{\jgoodk}(\beta, \hvecopt, \Vgoodk)
- \sumkgood|\Vgoodk|\sum_{\substack{\cluster \in \clustersetbd{j}, \clusterbar \ni 0}} \frac{1}{|\clusterbar|} \clwbd{\jgoodk}(\cluster) \right|
\leq |\congoodbar| \cdot 11e^{-\taub \minlength/8}
\end{align*}
\begin{align*}
\left|\sum_{\kbad = 1}^{\Kbad} \log \ZPSbd{L}{\jbadk}(\beta, \hvecopt, \Vbadk)
- \sum_{\kbad = 1}^{\Kbad}|\Vbadk|\sum_{\substack{\cluster \in \clustersetbd{j}, \clusterbar \ni 0}} \frac{1}{|\clusterbar|} \clwbd{\jbadk}(\cluster) \right|
\leq |\conbadbar| \cdot 11e^{-\taub \minlength/8}
\end{align*}
We observe that $\left|\sumkgood|\Vgoodk| - \sumkbad|\Vbadk|\right|
= \left||\congoodbar| - |\conbadbar|\right|$ and for each pair $i,i' \in \colorset$ we have 
\[
\sum_{\substack{\cluster \in \clustersetbd{j}, \clusterbar \ni 0}} \frac{1}{|\clusterbar|} \clwbd{i}(\cluster) = \pressuretr{i}(\hvecopt) = \pressuretr{i'}(\hvecopt) = \sum_{\substack{\cluster \in \clustersetbd{j}, \clusterbar \ni 0}} \frac{1}{|\clusterbar|} \clwbd{i'}(\cluster),
\]
so by applying the triangle inequality, we get
\begin{align*}
\left|\sumkgood \log \ZPSbd{L}{\jgoodk}(\beta, \hvecopt, \Vgoodk)
- \sum_{\kbad = 1}^{\Kbad} \log \ZPSbd{L}{\jbadk}(\beta, \hvecopt, \Vbadk)\right| 
&\leq \left(|\congoodbar| + |\conbadbar|\right) \cdot 11e^{-\taub \minlength/8}
+ \left||\congoodbar| - |\conbadbar|\right|\pressuretr{1}(\hvecopt)\\
&\leq \left(|\congoodbar| + |\conbadbar|\right) \cdot 11e^{-\taub \minlength/8}
+ \left||\congoodbar| - |\conbadbar|\right|e^{-\taub\minlength/4}.
\end{align*}
Applying the inequality $\|\hvecopt\|_{\infty} \leq \hoptbound$ from \cref{lem:balance} and once again making use of Appendix~\ref{apx:allconstants} with $\beta \geq \betamin$ to simplify the constants, we lower bound the ratio between the partition functions as follows:
\begin{align*}
\log \ZPc{L}{\congood}(\beta, \hvecopt) - \log \ZPc{L}{\conbad}(\beta, \hvecopt)
&\geq \sum_{\contour \in \congood} \contourcost{\contour} - \sum_{\contour \in \conbad} \contourcost{\contour} - \left(|\conbadbar| + |\congoodbar|\right) \cdot 12e^{-\taub \minlength/8} \\
&\geq \beta\Hcostbar{\conbad} - \beta\Hcostbar{\congood} - 2\left(|\conbadbar| + |\congoodbar|\right) \|\hvecopt\|_{\infty} - \left(|\conbadbar| + |\congoodbar|\right) \cdot 12e^{-\taub \minlength/8} \\
&\geq \beta\left(\frac{\alpha-1}{\alpha}\right)\Hcostbar{\conbad} - \left(|\conbadbar| + |\congoodbar|\right) \cdot 13e^{-\taub \minlength/8}
\end{align*}
Finally, we apply \cref{lem:weightedcostratio}, we may express both terms in terms of $\Hcostbar{\conbad}$ as follows:
\begin{align*}
|\conbadbar| + |\congoodbar| \leq \frac{2}{\costmatrixmin}\left(\Hcostbar{\conbad} + \Hcostbar{\congoodbar}\right) \leq \frac{4}{\costmatrixmin}\Hcostbar{\conbad}.
\end{align*}
This lets us simplify the bound to that of the lemma as long as $\beta \geq \betaminA(\alphaA)$, which by Appendix~\ref{apx:allconstants}, gives us:
\begin{align*}
&\beta \geq \frac{4}{\costmatrixmin} \cdot 13 e^{-\taub \minlength/8} \cdot \frac{2\alphaA}{\alphaA-1}. \qedhere
\end{align*}
\end{proof}

This allows us to conclude this section with the main lemma we wanted to show---that just like in the grand canonical ensemble, ratio of the fixed-magnetization partition functions corresponding to the sets $\conbad$ and $\congood$ of the ``large'' contours is dominated by the costs $\Hcostbar{\conbad}$ and $\Hcostbar{\congood}$ of the ``large'' contours themselves.

\begin{lemma}[Comparing Fixed-Magnetization Partition Functions]
\label{lem:finalcontourbound}
Suppose that $\alphaA > 1$. Then as long as $\beta \geq \max\{\betamin, \betaminA(\alphaA)\}$ and $L \geq \Lmin(\beta)$,
the following bound is true for all contours $\conbad$ where $\Hcostbar{\conbad} \geq \alphaA \Hcostbar{\congood}$:
\begin{align*}
\ZPcf{L}{\conbad}{\nfixedvec} 
&\leq \poly(L) \cdot \ZPcf{L}{\congood}{\nfixedvec} \exp\left\{-\frac{\beta}{4} \left(\frac{\alphaA-1}{\alphaA}\right) \Hcostbar{\conbad} \right\}.
\end{align*}
Furthermore, as $\ZPcf{L}{\congood}{\nfixedvec}$ is upper bounded by the full fixed-magnetization partition function $\ZPf{L}{\nfixedvec}$, this gives an upper bound on the probability of drawing a configuration from $\CFGPcf{L}{\conbad}{\nfixedvec}$:
\begin{align*}
\dtbnf(\CFGPcf{L}{\conbad}{\nfixedvec}) \leq \poly(L) \cdot \exp\left\{-\frac{\beta}{4} \left(\frac{\alphaA-1}{\alphaA}\right) \Hcostbar{\conbad} \right\}.
\end{align*}
\end{lemma}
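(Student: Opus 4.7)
The plan is to execute exactly the four-step chain sketched in \cref{eq:strategy1}, which has already been set up by \cref{eqn:badineq}, \cref{lem:grandineq}, and \cref{lem:goodineq}. The main theorem to prove is just the telescoping of these three bounds, followed by absorbing the small residual into the principal exponent.

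\textbf{Step 1 (fixed to variable magnetization on $\conbad$).} Since $\CFGPcf{L}{\conbad}{\nfixedvec} \subseteq \CFGPc{L}{\conbad}$, the trivial inclusion bound \cref{eqn:badineq} yields
\[
\ZPcf{L}{\conbad}{\nfixedvec}(\beta,\hvecopt) \;\leq\; \ZPc{L}{\conbad}(\beta,\hvecopt).
\]

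\textbf{Step 2 (compare $\conbad$ with $\congood$ in the variable-magnetization setting).} Since $\Hcostbar{\conbad} \geq \alphaA\Hcostbar{\congood}$, \cref{lem:grandineq} directly gives
\[
\ZPc{L}{\conbad}(\beta,\hvecopt) \;\leq\; \ZPc{L}{\congood}(\beta,\hvecopt)\, \exp\!\left\{-\frac{\beta}{2}\!\left(\frac{\alphaA-1}{\alphaA}\right)\Hcostbar{\conbad}\right\}.
\]

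\textbf{Step 3 (return to fixed magnetization on $\congood$).} By \cref{lem:goodineq},
\[
\ZPc{L}{\congood}(\beta,\hvecopt) \;\leq\; \poly(L)\cdot \ZPcf{L}{\congood}{\nfixedvec}(\beta,\hvecopt)\, \exp\!\left\{|\congoodbar|\,\beta\, e^{-\taub \minlength/24}\right\}.
\]
Chaining Steps 1--3, it suffices to absorb the residual $|\congoodbar|\,\beta\, e^{-\taub \minlength/24}$ into the main exponent. The hypothesis $\Hcostbar{\conbad} \geq \alphaA\Hcostbar{\congood}$ combined with \cref{lem:weightedcostratio} gives
\[
|\congoodbar| \;\leq\; \frac{2}{\costmatrixmin}\Hcostbar{\congood} \;\leq\; \frac{2}{\alphaA\,\costmatrixmin}\Hcostbar{\conbad},
\]
so the residual is bounded by $\frac{2\beta e^{-\taub \minlength/24}}{\alphaA \costmatrixmin}\Hcostbar{\conbad}$. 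For $\beta \geq \max\{\betamin,\betaminA(\alphaA)\}$ large enough that this factor is at most $\tfrac{1}{4}\bigl(\tfrac{\alphaA-1}{\alphaA}\bigr)$ (which is guaranteed by Appendix~\ref{apx:allconstants} in the same spirit as at the end of \cref{lem:grandineq}), the residual is dominated by half of the exponent in Step 2, converting the prefactor $\beta/2$ into $\beta/4$ and yielding the displayed bound.

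\textbf{Step 4 (probability bound).} The probability statement follows immediately: since $\CFGPcf{L}{\congood}{\nfixedvec}\subseteq \CFGPf{L}{\nfixedvec}$, we have $\ZPcf{L}{\congood}{\nfixedvec} \leq \ZPf{L}{\nfixedvec}$, and writing $\dtbnf(\CFGPcf{L}{\conbad}{\nfixedvec}) = \ZPcf{L}{\conbad}{\nfixedvec}/\ZPf{L}{\nfixedvec}$ and applying the first conclusion gives the stated exponential bound.

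\textbf{Main obstacle.} The only nontrivial content is verifying that the error term $|\congoodbar|\beta e^{-\taub\minlength/24}$ from the fixed-magnetization-to-grand-canonical comparison on $\congood$ (Step 3) really is dwarfed by the gain from Step 2. This hinges on the a priori control $|\congoodbar| \lesssim \Hcostbar{\conbad}/\alphaA$, which requires the hypothesis $\Hcostbar{\conbad}\geq \alphaA\Hcostbar{\congood}$ in an essential way; without it there would be no mechanism to make the Step 3 error small relative to the Step 2 gain. All three input lemmas have already done the deep Pirogov--Sinai work, so no further cluster-expansion or Peierls calculation is required at this stage, only algebraic bookkeeping of the constants via the choice of $\betaminA(\alphaA)$.
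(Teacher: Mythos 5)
Your proposal is correct and follows exactly the same route as the paper: the chain of \cref{eqn:badineq}, \cref{lem:grandineq}, and \cref{lem:goodineq}, followed by absorbing the residual $|\congoodbar|\beta e^{-\taub\minlength/24}$ using \cref{lem:weightedcostratio} and the hypothesis $\Hcostbar{\conbad}\geq\alphaA\Hcostbar{\congood}$, with the constants handled via $\betaminA(\alphaA)$. The final absorption inequality you derive, $\alphaA\frac{\costmatrixmin}{2}\cdot\frac{\beta}{4}\left(\frac{\alphaA-1}{\alphaA}\right)\geq\beta e^{-\taub\minlength/24}$, matches the paper's verbatim.
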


\begin{proof}
We show this final bound by combining the earlier lemmas as follows:
\begin{align*}
\ZPcf{L}{\conbad}{\nfixedvec} &\leq \ZPc{L}{\conbad} &\text{(by~\cref{eqn:badineq})}\\
&\leq \ZPc{L}{\congood} \exp\left\{-\frac{\beta}{2} \left(\frac{\alphaA-1}{\alphaA}\right) \Hcostbar{\conbad} \right\} &\text{(by \cref{lem:grandineq})}\\
&\leq \poly(L) \cdot \ZPcf{L}{\congood}{\nfixedvec} \exp\left\{|\congoodbar| \beta e^{-\taub \minlength/24}\right\}\exp\left\{-\frac{\beta}{2} \left(\frac{\alphaA-1}{\alphaA}\right) \Hcostbar{\conbad} \right\} &\text{(by \cref{lem:goodineq})}.
\end{align*}
Applying \cref{lem:weightedcostratio} allows us simplify the bound with the following inequality
\begin{align*}
\Hcostbar{\conbad} &\geq \alphaA \Hcostbar{\congood} \geq \alphaA \frac{\costmatrixmin}{2}|\congoodbar|,
\end{align*}
which gives us the statement of the lemma for $\beta \geq \betaminA(\alphaA)$, as by Appendix~\ref{apx:allconstants}, we would have:
\begin{align*}
&\alphaA \frac{\costmatrixmin}{2} \cdot \frac{\beta}{4}\left(\frac{\alphaA-1}{\alphaA}\right) \geq \beta e^{-\taub \minlength/24}. \qedhere
\end{align*}
\end{proof}

\subsection{Approximating optimal shapes}
All of our analysis thus far has focused on comparing the cost of configurations to $\congood$, which is not the minimal cost subdivision corresponding to $\rhovec$ but a perturbation $\rhovecbeta$ of it. This perturbation is small by \cref{lem:goodcontourapproximation}, and we wish to show that this corresponds to a small difference in the costs of the respective minimal cost subdivisions.

\begin{lemma}[Approximating the minimal cost subdivision]
\label{lem:approximatebestcontour}
Suppose that $\optconfign{\rhovec}$ and $\optconfign{\rhovecbeta}$ refer to the minimal cost subdivisions of $\regionL$ corresponding to $\rhovec$ and $\rhovecbeta$ respectively.
Then for any $\alphaB > 1$, as long as $\beta \geq \max\{\betamin,\betaminB(\alphaB)\}$ and $L \geq \Lmin(\beta)$, we have
\begin{align*}
\Hcostbar{\congood} = \hamil(\optconfign{\rhovecbeta}) \leq \alphaB \hamil(\optconfign{\rhovec}).
\end{align*}
\end{lemma}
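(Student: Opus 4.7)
The plan is to bound $\hamil(\optconfign{\rhovecbeta})$ by exhibiting a witness configuration $\tilde{\sigma} \in \CFGPf{L}{\nvecbeta}$ obtained by perturbing $\optconfign{\rhovec}$, and arguing that this perturbation raises the Hamiltonian by only a small multiplicative factor. Since $\optconfign{\rhovecbeta}$ minimizes $\hamil$ over $\CFGPf{L}{\nvecbeta}$, we have $\hamil(\optconfign{\rhovecbeta}) \leq \hamil(\tilde{\sigma})$, so it suffices to show that $\hamil(\tilde{\sigma}) \leq \alphaB \, \hamil(\optconfign{\rhovec})$ for $\beta$ sufficiently large.

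The magnitude of the perturbation is small: by \cref{lem:goodcontourapproximation}, $\|\rhovec - \rhovecbeta\|_1 \leq 2 e^{-\taub \minlength/4}$, so $\|\nfixedvec - \nvecbeta\|_1 \leq 2 e^{-\taub \minlength/4} L^2 + O(q)$. However, the naive Peierls-style modification that simply flips this many arbitrary particles (as in the proof of \cref{lem:smallchanges}) would incur a Hamiltonian increase of order $e^{-\taub \minlength/4} L^2$, which would overwhelm the baseline $\hamil(\optconfign{\rhovec})$, which is only of order $L$ by \cref{lem:mincostconfig}. A more structured construction is therefore required.

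Instead, we perform a boundary-shifting modification of $\optconfign{\rhovec}$. For each pair of colors $(i,j)$ with $\nj{i} > \nbetaj{i}$ and $\nj{j} < \nbetaj{j}$ (routing through intermediate colors via a greedy pairing if $R_i$ and $R_j$ do not share a direct boundary), we locate a segment of length $\ell$ on the boundary of $\optconfign{\rhovec}$ between $R_i$ and $R_j$ and reassign colors in $d$ layers parallel to that segment. Along a straight segment this transfers roughly $d\ell$ particles from $R_i$ to $R_j$, and the Hamiltonian only increases by a constant multiple of $d \costmatrixmax$, arising from corrections at the segment's two endpoints; the bulk of the boundary length is preserved. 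Since the total number of particles to transfer is $\frac{1}{2}\|\nfixedvec - \nvecbeta\|_1 = O(e^{-\taub \minlength/4} L^2)$ while the total boundary length of $\optconfign{\rhovec}$ is $\Omega(L)$ by \cref{lem:mincostconfig}, the required total shift depth is $d = O(e^{-\taub \minlength/4} L)$, which yields $\hamil(\tilde{\sigma}) - \hamil(\optconfign{\rhovec}) = O(q \costmatrixmax e^{-\taub \minlength/4} L)$.

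Combined with the lower bound $\hamil(\optconfign{\rhovec}) \geq \Hconstlow \sqrt{\rhomin} \, L$ from \cref{lem:mincostconfig}, we conclude that
\[
\frac{\hamil(\optconfign{\rhovecbeta})}{\hamil(\optconfign{\rhovec})} \leq 1 + O\!\left( \frac{q \costmatrixmax \, e^{-\taub \minlength/4}}{\Hconstlow \sqrt{\rhomin}} \right),
\]
which is at most $\alphaB$ once $\beta \geq \betaminB(\alphaB)$ is taken large enough (as specified in Appendix~\ref{apx:allconstants}), since the right-hand side decays to zero exponentially in $\beta$. The main obstacle is rigorously formalizing the boundary-shift construction: $\optconfign{\rhovec}$ may have irregular region shapes with few or no segments of straight boundary, some pairs of colors may not share adjacent boundaries at all, and one must verify that enough usable boundary exists for each transfer and that simultaneous shifts on different segments do not interact in a way that breaks the cost estimate. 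The essential point that makes the argument work, however, is that the cost of adjusting the magnetization scales linearly in $L$ rather than in $L^2$, matching the scaling of $\hamil(\optconfign{\rhovec})$ itself so that the perturbation can be absorbed into a multiplicative factor arbitrarily close to $1$.
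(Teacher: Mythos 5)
You correctly identify the key scaling obstacle: a naive Peierls-style reassignment of $\|\nfixedvec - \nvecbeta\|_1 \approx e^{-\taub\minlength/4} L^2$ arbitrary vertices would cost order $L^2 e^{-\taub\minlength/4}$, which overwhelms the baseline $\hamil(\optconfign{\rhovec}) = \Theta(L)$, so a construction with cost increase linear in $L$ is needed. But the boundary-shifting construction you propose is materially different from the paper's, and the gaps you flag at the end are genuine and unresolved. Your argument requires (i) every pair of colors with a nonzero transfer to share sufficient boundary, possibly after a routing step whose feasibility depends on the adjacency graph of regions being connected; (ii) that boundary to be geometrically regular enough that a $d$-layer parallel shift actually transfers $\approx d\ell$ particles while costing only $O(d\cdot\costmatrixmax)$ at its endpoints; and (iii) non-interference between simultaneous shifts. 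None of these is automatic for the minimizer $\optconfign{\rhovec}$: nothing a priori rules out thin or wiggly boundary components where a ``$d$-layer shift'' is ill-defined, or where the shifted strip collides with a third region. Resolving this would essentially require a shape-regularity result for the minimizer, which is precisely the kind of Wulff/Dobrushin--Koteck\'y--Shlosman-level input the paper is designed to avoid.

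The paper's construction sidesteps all of this. It first reduces to a minimizer $\config^*$ with a \emph{single} contour of cost $\hamil(\optconfign{\rhovec})$ via a translate-until-merge argument, which yields $|\overline{\gamma^*}| = O(L)$. Tiling $\regionL$ into $\lfloor xL\rfloor \times \lfloor xL\rfloor$ squares and counting how many the contour can touch then produces, for each color $i$, an entirely monochromatic box of that color. The transfer plan is then realized by flipping vertices in a $\lceil\sqrt{N_j}\rceil \times \lceil\sqrt{N_j}\rceil$ sub-square \emph{inside} the monochromatic box of each excess color $j$, creating only $O(\sqrt{N_j}) = O(Le^{-c\taub})$ new bichromatic edges with no assumption whatsoever on boundary geometry. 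The ``square patch inside a monochromatic box'' device is the missing idea: it converts the $L^2 e^{-c\taub}$ particle count into an $L e^{-c\taub}$ perimeter cost unconditionally. So while your high-level plan (witness configuration with linear-in-$L$ excess cost) is right, the witness you describe does not clearly exist; the paper builds a different one that provably does.
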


\newcommand{\Hcostoptconfig}{\Hcostbar{\contourset(\optconfign{\rhovec})}}
\begin{proof}
Let $\nvecbeta = (\nbetaj{i})_{i \in \colorset} = \left(\floor{\rhobeta{1} L^2}, \floor{\rhobeta{2} L^2}, \ldots, \floor{\rhobeta{\numcolors-1} L^2}, L^2 - \sum_{i=1}^{\numcolors-1}\floor{\rhobeta{i} L^2}\right)$. It suffices to find a configuration $\config \in \CFGPf{L}{\nvecp}$ such that $\hamil(\config) \leq \alphaB \hamil(\optconfign{\rhovec})$.

Suppose that $\contourset(\optconfign{\rhovec})$ is the set of contours of the minimal cost subdivision $\optconfign{\rhovec}$. This means that $\Hcostoptconfig = \hamil(\optconfign{\rhovec})$. We first show that there exists a configuration $\config^*$ with a single contour $\contour^*$ such that $\Hcost{\contour^*} = \Hcostoptconfig$.
Let $\contourset^*$ be the smallest set of contours such that $\Hcostbar{\contourset^*} = \Hcostoptconfig$. If $|\contourset^*| \geq 2$, then take any contour $\contour \in \contourset^*$ and translate it one vertex at a time in the direction of another contour and stop when it first becomes incompatible with another contour $\contour'$ in $\contourset^*$.
Incompatibility first happens when $d(\contourbar, \contourbar') = 1$, putting $\contourbar$ and $\contourbar'$ in the same component.
Translating a contour does not change its cost $\Hcost{\contour}$, and even though moving a contour requires changing the colors of particles, the total number of particles of each color is conserved.
This gives us a configuration of equal cost and with fewer contours than $\contourset^*$, which is a contradiction, so we must have $|\contourset^*| = 1$, showing the existence of $\config^*$.

Using $\config^* \in \CFGPf{L}{\nfixedvec}$, we now attempt to construct a new configuration $\config \in \CFGPf{L}{\nvecbeta}$ with $\hamil(\config) \leq \alphaB \hamil(\optconfign{\rhovec})$.
To do so, we start by showing that for each color $i \in \colorset$, there is a sufficiently large square region containing only particles of color $i$, which we can use to flip colors to construct $\config$.

We know that $|\contourbar^*| \leq \frac{2}{\costmatrixmin} \hamil(\optconfign{\rhovec}) \leq \frac{2\Hconstupp}{\costmatrixmin} L$ by \cref{lem:mincostconfig}.
Now define $x \in (0,1)$ as follows:
\begin{align*}
x = \min\left\{\frac{1}{6}, \frac{\costmatrixmin}{48\Hconstupp}\right\} \rhomin,
\end{align*}
where $\rhomin = \min_{i\in\colorset}\rho_i$.
We show that for each $i \in \colorset$, we will be able to find a $\floor{xL} \times \floor{xL}$ square region of $V(\regionL)$ that contains only vertices of color $i$.
Divide the region $V(\regionL)$ into $\floor[\Big]{\frac{1}{\floor{xL}}}^2$ nonoverlapping squares of side length $\floor{xL}$ each.
We start by showing that $\contourbar^*$ overlaps at most $\max\left\{9, \frac{16\Hconstupp}{x\costmatrixmin}\right\}$ of these squares. Enumerate the square regions as $S_1, S_2, \ldots, S_M$, and for each of these square regions $S_m$, let $S^+_m$ denote a $3\floor{xL} \times 3\floor{xL}$ region centered at $S_m$. If $S_m \cap \contourbar^* \neq \emptyset$, then as long as at least $10$ squares are overlapped, $\contourbar^*$ must connect $S_m$ to $\partialext S^+_m$, which implies that $|\contourbar^* \cap (S^+_m \setminus S_m)| \geq \floor{xL}$.
Due to how these squares are constructed, the sum of $|\contourbar^* \cap (S^+_m \setminus S_m)|$ over all squares overlapped by $\contourbar^*$ will count each vertex of $\contourbar^*$ at most $8$ times, so we must have:
\begin{align*}
\left|\{m: S_m \cap \contourbar^* \neq \emptyset\}\right| \cdot \floor{xL} \leq \sum_{m: S_m \cap \contourbar^* \neq \emptyset} |\contourbar^* \cap (S^+_m \setminus S_m)| \leq 8|\contourbar^*| \leq \frac{16\Hconstupp}{\costmatrixmin}L.
\end{align*}
This means that at least $\floor[\big]{\frac{L}{\floor{xL}}}^2 - \frac{1}{\floor{xL}} \cdot \frac{16\Hconstupp}{\costmatrixmin}L$ squares must have no intersection with $\contourbar^*$. Consequently, all vertices within any one of these squares must share the same color. Fix a color $i \in \colorset$. There are at most $(1 - \rho_i)L^2 + \numcolors$ vertices of $V(\regionL)$ that are not of color $i$. Thus, if $\left(\floor[\big]{\frac{L}{\floor{xL}}}^2  - \frac{1}{\floor{xL}} \cdot \frac{16\Hconstupp}{\costmatrixmin}L \right) \cdot \floor{xL}^2 \geq (1-\rho_i)L^2 + \numcolors$, there must be at least one $\floor{xL} \times \floor{xL}$ square that contains only vertices of color $i$. As
\begin{align*}
\left(\floor[\Bigg]{\frac{L}{\floor{xL}}}^2  - \frac{1}{\floor{xL}} \cdot \frac{16\Hconstupp}{\costmatrixmin}L \right) \cdot \frac{\floor{xL}^2}{L^2}
&\geq \left(\frac{L}{\floor{xL}} - 1 \right)^2 \cdot \frac{\floor{xL}^2}{L^2} - \frac{\floor{xL}}{L} \frac{16\Hconstupp}{\costmatrixmin} \\
&\geq (1-x)^2 - x \frac{16\Hconstupp}{\costmatrixmin},
\end{align*}
It thus suffices to show that $(1-x)^2 - x \frac{16\Hconstupp}{\costmatrixmin} \geq (1-\rho_i) + \frac{\numcolors}{L^2}$.
From our choice of $x$ and because we have $L \geq \Lmin(\beta)$ as given in Appendix~\ref{apx:allconstants}, we have $x \leq \frac{\rho_i}{6}$, $x \leq \frac{\rho_i}{3}\cdot \frac{\costmatrixmin}{16\Hconstupp}$, and $\frac{\numcolors}{L^2} < \frac{\rho_i}{3}$, which gives us the above inequality, showing the existence of the square.

\newcommand{\colortrans}[2]{T_{#1 \to #2}}
\newcommand{\cumsump}[1]{C'_{#1}}
\newcommand{\cumsum}[1]{C_{#1}}
The next step is change the colors of the appropriate number of particles from $\config^*$ to obtain a configuration in $\CFGPf{L}{\nvecbeta}$. We first partition the colors $\colorset$ into two sets $J =\{j_1,j_2,\ldots,j_k\}$ and $J' = \{j'_1, j'_2, \ldots, j'_{k'}\}$, with the former referring to colors $j$ with $\nbetaj{j} \leq \nj{j}$, and the latter colors $j$ with $\nbetaj{j} < \nj{j}$.
We first determine how many particles of each color we wish to ``transfer'' from $J$, the colors in excess, to $J'$, the deficient colors. For each pair $i, j \in \colorset$, we denote by $\colortrans{i}{j}$ the number of particles of color $i$ we will convert to color $j$. Hence, we would like the following property to hold for each $j \in \colorset$:
\begin{align*}
\nj{j} + \sum_{i \in \colorset}\colortrans{i}{j} - \sum_{i' \in \colorset}\colortrans{j}{i'} = \nbetaj{j}.
\end{align*}
Defining the cumulative sums of the excesses and deficiencies respectively as $\cumsum{r} = \sum_{i=1}^r (\nj{j_i} - \nbetaj{j_i})$ for $r \in [k]$
and $\cumsump{r'} = \sum_{i=1}^{r'} (\nj{j'_i} - \nbetaj{j'_i})$ for $r' \in [k']$, we can define for $r \in [k]$, $r' \in [k']$:
\begin{align*}
\colortrans{j_r}{j'_{r'}} = \max\left\{0, \min\{\cumsum{r}, \cumsump{r'}\} - \max\{\cumsum{r-1}, \cumsump{r'-1}\}\right\}
\end{align*}
For all other pairs $i,j \in \colorset$ not covered by the above, we set $\colortrans{i}{j} = 0$. This is equivalent to greedily packing the excess colors from $J$ to $J'$, in increasing index order. A consequence of redistributing the colors in this order is that $\sum_{j \in J} |\{i \in \colorset: \colortrans{j}{i} > 0\}| \leq |J| + |J'| = \numcolors$.

We now explain we flip colors in $\config^*$ to create a new configuration $\config$.
For each color $j \in J$, there exists an $\floor{xL} \times \floor{xL}$ box containing only particles of color $j$. To flip the $N_j := \nj{j} - \nbetaj{j} \sum_{i \in \colorset}\colortrans{j}{i} \geq 0$ particles to the desired colors, we define a $\ceil[\big]{\sqrt{N_j}} \times \ceil[\big]{\sqrt{N_j}}$ square region within the box, and iterate through this square region row by row, flipping the first $\colortrans{j}{1}$ particles to color $1$, the next $\colortrans{j}{2}$ particles to color $2$, and so on.

From \cref{lem:goodcontourapproximation}, we have $N_j \leq \|\nvecbeta - \nfixedvec\|_1 \leq 2L^2 e^{-\taub \minlength/4}$. Because $\beta \geq \betamin$ and $L \geq \Lmin(\beta)$ as given in Appendix~\ref{apx:allconstants}, we have $\ceil[\big]{\sqrt{N_j}} \leq \sqrt{N_j}+1 \leq 2Le^{-\taub \minlength/2} \leq xL$. This implies that $\ceil[\big]{\sqrt{N_j}} \leq \floor{xL}$, so the square region fits within the box.

Flipping the colors in this square region creates at most $\ceil[\big]{\sqrt{N_j}}\left(8 + 2|\{i \in \colorset: \colortrans{j}{i} > 0\}| \right)$ new bichromatic edges in the configuration. Taking the sum over $j \in J$, we can bound the difference in Hamiltonian by:
\begin{align*}
\hamil(\config) - \hamil(\config^*)
&\leq \sum_{j \in J} \ceil[\big]{\sqrt{N_j}}\big(8 + 2|\{i \in \colorset: \colortrans{j}{i} > 0\}| \big) \\
&\leq (8 \numcolors + 2\numcolors) \cdot 2Le^{-\taub \minlength/2}.
\end{align*}
Recalling that $\hamil(\config^*) = \hamil(\optconfign{\rhovec})$ and $\hamil(\optconfign{\rhovec}) \geq \Hconstlow L \sqrt{\rhomin}$ from \cref{lem:mincostconfig}, we obtain the statement of the lemma as long as $\beta \geq \betaminB(\alphaB)$, which by Appendix~\ref{apx:allconstants} gives us:
\begin{align*}
&20 \numcolors e^{-\taub \minlength/2} \leq (\alphaB-1)\Hconstlow \sqrt{\rhomin}. \qedhere
\end{align*}
\end{proof}
So far, we have been showing that sets of ``large'' contours $\confinal$ with high costs $\Hcostbar{\confinal}$ have exponentially small probability, without showing exactly how these large contours are identified. We apply a technique from~\cite{Miracle2011} known as bridging to identify the important contours from the contour set of any configuration. More precisely, the process of bridging will always produce a set $\confinal$ of contours that meets the second property of our key event, that the components of $V(\regionL) \setminus \contoursetbar$ are each dominated by a single color, with a small $\delta$-fraction of errors.
\newcommand{\bdgfunc}{\mathcal{B}_{\delta}}
\newcommand{\bdgvert}{B}
\newcommand{\bdgcont}{\conbad}
\newcommand{\bdgcontbar}{\overline{\conbad}}
\newcommand{\bdgvertd}{B_\delta}
\newcommand{\bdgcontd}{\conbad_\delta}
\newcommand{\bdgcontbard}{\overline{\conbad}_\delta}
\newcommand{\column}{C}
\begin{lemma}[$\delta$-Bridge Systems]
\label{lem:bridgesystems}
For each configuration $\config \in \CFGPf{L}{\nfixedvec}$ we can construct a
\emph{bridge system} $(\bdgvert, \bdgcont)$
where $\bdgvert \subseteq V(\regionL)$ is a set of bridging vertices and $\bdgcont$ is a subset of the contours of $\config$ that we call the ``bridged'' contours, with the following properties:
\begin{enumerate}
\item $\bdgvert \cup \bdgcontbar$ is a connected set of vertices in $\regionL$.
\item $|\bdgvert| \leq \frac{1}{2\delta}|\bdgcontbar|$ + L
\item $\bdgcont$ is closed under contractibility in $\config$, which implies that (the vertex set of) each component $V_k$ of the subgraph induced by $V(\regionL) \setminus \confinalbar$ has a well-defined label $j_k$.
\item For each component $V_k$ with label $j_k$, there are at most $\delta|V_k|$ particles of colors other than $j_k$.
\end{enumerate}
\end{lemma}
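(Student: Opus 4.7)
The plan is to construct $(B, \Gamma)$ by a greedy refinement starting from $\Gamma = \emptyset$ and $B = \emptyset$, progressively adding contours of $\sigma$ to $\Gamma$ together with straight-line ``bridges'' of lattice vertices to $B$, until every component of $V(\Lambda^L) \setminus \overline{\Gamma}$ is $\delta$-monochromatic in $\sigma$. Throughout the construction I maintain the invariants that $B \cup \overline{\Gamma}$ is connected in $\Lambda^L$ and that $\Gamma$ is closed under contractibility in $\sigma$, so properties~1 and~3 hold automatically. Property~4 becomes the termination condition, leaving only property~2 to verify.

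Each inductive step will handle one violating component $U$ of $V(\Lambda^L) \setminus \overline{\Gamma}$ with surrounding label $j_U$ (for the initial step with $\Gamma = \emptyset$, take $j_U$ to be the majority color of $\sigma$). Inside $U$, I consider the family of parallel axis-aligned straight-line paths restricted to $U$ and pick one whose crossing count with $\overline{\Gamma(\sigma)}$ is at least the family average. I then add this path's vertices to $B$, and add to $\Gamma$ every contour of $\sigma$ that the path meets, together with every non-contractible contour of $\sigma$ lying inside $U$ (to preserve closure under contractibility). The algorithm then recurses on the resulting sub-components. Connectivity with the existing $B \cup \overline{\Gamma}$ is arranged by choosing each new bridge to terminate on $\partial U$, except for the very first bridge, which wraps around the torus and contributes at most $L$ vertices to $B$ — this accounts for the additive $+L$ term in property~2.

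The key step, and the main obstacle, is the amortized bound $|B| \leq \frac{1}{2\delta}|\overline{\Gamma}| + L$. The argument is an averaging/double-counting estimate: if $U$ has at least $\delta|U|$ vertices of color different from $j_U$, then each such ``wrong'' vertex contributes at least two edges of $\overline{\Gamma(\sigma)}$ along any axis-aligned line through it (one on each side of the misaligned cluster it sits in). Summing over wrong vertices and averaging over the $|U|/\ell$ parallel bridges of length $\ell$, some bridge must cross $\overline{\Gamma(\sigma)}$ in at least $2\delta\ell$ vertices; these vertices enter $\overline{\Gamma}$ and pay for the $\ell$ vertices of $B$. The delicate part is avoiding double-counting of contour length across iterations — this is handled by processing only the newly created sub-components at each step and crediting each added contour only to the bridge that first intersected it. A mild further subtlety is that, in the periodic setting, a bridge running close to a non-contractible contour could threaten property~3; I dispatch this by pre-emptively including every non-contractible contour of $\sigma$ lying in $U$ whenever $U$ is processed, the total length of which is absorbed by the $+L$ allowance.
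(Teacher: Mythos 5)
Your proposal follows essentially the same strategy as the paper: grow the bridge set $\bdgvert$ and contour set $\bdgcont$ together, use a vertical-column averaging argument to locate a bridge whose length is amortized against the new contour mass it crosses (each column $\column$ in a violating component catching $\geq 2\delta|\column|$ contour vertices), and maintain closure under contractibility. The paper phrases this as a maximality argument---take the largest $\bdgcont$ admitting properties 1--3 and derive property 4 by contradiction---whereas you present it as an explicit greedy iteration; these are the same proof up to repackaging.

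One small correction. You worry that the non-contractible contours you add by fiat must be ``absorbed by the $+L$ allowance,'' but this is backwards: adding contours to $\bdgcont$ only \emph{increases} $|\bdgcontbar|$, which sits on the favorable side of property 2, so no absorption is needed there. The genuine concern with adding arbitrary non-contractible contours is property 1 (connectivity of $\bdgvert \cup \bdgcontbar$): a wrapping contour deep inside $U$ touching no bridge would disconnect the set. The paper sidesteps this entirely by letting the column bridge do double duty. Since the initial bridge is a full row (picking up all vertically wrapping contours) and each subsequent bridge is a full vertical column intersected with the component, any horizontally wrapping contour in that component necessarily meets the column and is therefore included and connected automatically. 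You should adopt that structure rather than adding wrapping contours separately.
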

\begin{proof}
This proof is based off the construction of bridge systems in~\cite{Cannon2019,Kedia2022}.
We start by showing the existence of a pair $(\bdgvert_0, \bdgcont_0)$ that satisfies the first three properties of the lemma.
We define $\bdgvert_0 \subseteq V(\regionL)$ to be the set of all vertices of a single row (we will use the bottom-most row) of $V(\regionL)$, and define $\bdgcont_0$ to be the set of all contours $\contour$ where $d(\contourbar, \bdgvert_0) \leq 1$. Property 1 is clear from the construction, and property 2 is true because $|\bdgvert_0| = L$. To show property 3, let $U$ be a component of $V(\regionL) \setminus \bdgcontbar_0$. If there is a contour of $\config$ that is non-contractible in $U$, then it must intersect $\bdgvert_0$, and thus must already be in $\bdgcont_0$, so $U$ would not be a single component of $V(\regionL) \setminus \bdgcontbar_0$.

Now suppose that $\bdgcont$ is the largest set of contours that satisfies the first three properties with some set $\bdgvert$ of bridging vertices. We show that this set satisfies the fourth property as well.

Suppose that it does not, which means that there exists a component $V_k$, with label $j_k$, containing more than $\delta |V_k|$ particles of colors other than $j_k$ in $\config$. 
Partition the region $V(\regionL)$ into its $L$ columns, and consider the intersection of each column with $V_k$. There must exist a column $\column \subseteq V_k$ that contains more than $\delta |\column|$ particles of colors other than $j_k$. Consider the set $\contourset$ of all contours $\contour$ of~$\config$ where $\contourbar \subseteq V_k$ and $d(\contourbar, \column) \leq 1$. As every particle of color other than $j_k$ in $\column$ must have some contour in $\contourset$ that surrounds it, we must have $|\contoursetbar| \geq 2\delta|\column|$. Thus, defining $\bdgvert' = \bdgvert \cup \column$ and $\bdgcont' = \bdgcont \cup \contourset$, we must have $|\bdgvert'| \leq \frac{1}{2\delta}|\bdgcont'| + L$. Properties 1 and 2 are thus satisfied by $\bdgvert'$ and $\bdgcont'$. Property 3 is true for the same reason as that of $(\bdgvert_0, \bdgcont_0)$---if $V(\regionL)\setminus \bdgcontbar$ had a component $U$ containing a contour~$\contour$ that is non-contractible in it, it would have been included in $\contourset$ as it would intersect the column $\column$. Thus, we would have constructed a strictly larger set $\bdgcont'$ that satisfies the first three properties, contradicting the maximality of $\bdgcont$, thus showing that must satisfy the fourth property as well.
\end{proof}
For each $\delta > 0$, we denote by $\bdgfunc$ a function that maps each configuration $\config \in \CFGPf{L}{\nfixedvec}$ to a $\delta$-bridge system $\bdgfunc(\config) = (\bdgvertd(\config), \bdgcontd(\config))$ as given by \cref{lem:bridgesystems}.
The main strategy to show that configurations $\config$ with bridge systems with high cost $\Hcostbar{\bdgcontd(\config)}$ are unlikely is to show that the decay in probability with increasing cost outweighs the number of possible bridge systems of that cost. \cref{lem:finalcontourbound} expresses the decay in probability with increasing cost, and the following lemma will bound the number of possible bridge systems with at most some cost $K$.
\begin{lemma}[Number of Bridge Systems]
\label{lem:numbridgesystems}
Suppose that $K$ is a nonnegative integer. We denote by $\bdgcontd(\CFGPf{L}{\nfixedvec})$ the set of all possible conotur sets $\conbad$ where $\conbad = \bdgcontd(\config)$ for some $\sigma \in \CFGPf{L}{\nfixedvec}$. We then have the following upper bound for the number of possible such contours $\conbad$ with at most some given cost $K$:
\begin{align*}
|\{\conbad \in \bdgcontd(\CFGPf{L}{\nfixedvec}): \Hcostbar{\conbad} \leq K\}|
\leq 72^{\left(1 + \frac{1}{2\delta}\right)L} (72^{1 + \frac{1}{2\delta}}\numcolors)^{\frac{2}{\costmatrixmin}K}.
\end{align*}
\end{lemma}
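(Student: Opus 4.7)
The plan is to enumerate the triples $(S, \bdgcontbar, \omega)$ that encode a bridge contour set $\conbad$, where $S := \bdgvert \cup \bdgcontbar$ and $\omega$ is the induced coloring on $\bdgcontbar$. Specifying this data determines $\conbad$, since the individual contours and their types are recovered as the connected components of the subgraph $\regionL(\bdgcontbar)$ equipped with $\omega$. So it suffices to count such triples.

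First I would convert the cost hypothesis into size bounds on $\bdgcontbar$ and $S$. By \cref{lem:weightedcostratio}, $|\bdgcontbar| \leq 2K/\costmatrixmin$, and by property~2 of \cref{lem:bridgesystems}, $|\bdgvert| \leq |\bdgcontbar|/(2\delta) + L$; hence
\[
|S| \;\leq\; L + \left(1 + \tfrac{1}{2\delta}\right)\tfrac{2K}{\costmatrixmin}.
\]

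Next I would count the admissible vertex sets $S$ of a fixed size $s$. By property~1 of \cref{lem:bridgesystems}, $S$ is connected, and by the construction in its proof $S$ always contains the bottom row $\bdgvert_0$ of $\regionL$. The depth-first-search enumeration used in \cref{lem:pointcontourcount} bounds the number of connected vertex sets of size $s$ on the triangular lattice (degree $6$) containing a fixed vertex by $36^s$; the admissible $S$ of size $s$ form a subfamily of this (anchor the DFS at any fixed vertex of the bottom row, which is free because $\bdgvert_0$ is deterministic). For each admissible $S$, the remaining per-vertex data is a label drawn from $\{\text{bridge}\}$ together with $\numcolors$ ``contour colors,'' which contributes at most $2^{|S|}\,\numcolors^{|\bdgcontbar|}$ possibilities (a binary membership choice in $\bdgcontbar$, followed by a color in $[\numcolors]$ for each of the $|\bdgcontbar|$ contour vertices).

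Combining the three factors yields at most $72^{|S|}\,\numcolors^{|\bdgcontbar|}$ admissible triples. Substituting the size bounds and using $1 + \tfrac{1}{2\delta} \geq 1$ to absorb $72^L \leq 72^{(1+1/(2\delta))L}$ and regroup the remaining factors as $(72^{1+1/(2\delta)}\,\numcolors)^{2K/\costmatrixmin}$ yields the claimed inequality. The main obstacle is producing an $L$-dependence sharp enough to match the exponent $(1+1/(2\delta))L$: a generic DFS enumeration that also had to pay for choosing a starting vertex would introduce an additional factor $|V(\regionL)| = L^2$, which is not compatible with the purely exponential form of the claim. This is circumvented precisely because the construction in \cref{lem:bridgesystems} fixes $\bdgvert_0$ to be the bottom row as a deterministic feature of the bridge system, so the DFS anchor is free and the $L$ vertices of $\bdgvert_0$ contribute only once to $|S|$.
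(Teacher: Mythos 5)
Your proposal is correct and follows essentially the same argument as the paper's proof: both exploit that the bridging vertex set always contains the bottom row (hence a fixed anchor vertex), count connected vertex sets containing that anchor via a depth-first-search traversal giving the factor $6^{2|S|}=36^{|S|}$, attach a binary marker for $\bdgcontbar$-membership ($2^{|S|}$), then a color in $[\numcolors]$ per contour vertex, and finally convert $|\bdgcontbar|$ to a bound in terms of $K$ using \cref{lem:weightedcostratio} and $|\bdgvert|$ to a bound in terms of $|\bdgcontbar|$ and $L$ via property~2 of \cref{lem:bridgesystems}. The only cosmetic difference is that the paper phrases the enumeration as counting walks of length at most $2|S|$ whereas you phrase it directly as counting connected sets, and you note (correctly) that your exact computation gives $72^{L}$ rather than the weaker $72^{(1+1/(2\delta))L}$ in the lemma statement, which is then absorbed by the inequality $1+\tfrac{1}{2\delta}\ge 1$.
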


\begin{proof}
For contour set $\conbad \in \bdgcontd(\CFGPf{L}{\nfixedvec})$, there exists a configuration $\config \in \CFGPf{L}{\nfixedvec}$ such that $\conbad = \bdgcontd(\config)$. Thus there exists a set $B = \bdgvertd(\config) \subseteq V(\regionL)$ that meets the conditions given in \cref{lem:bridgesystems}.

Denote by $v_0$ the bottom-left vertex of the region $V(\regionL)$. By how $B$ is constructed, it must contain~$v_0$.
A depth-first search can be used over the connected set $B \cup \conbadbar$ to define a walk that starts at $v_0$ and visits every vertex of $B \cup \conbadbar$ (which has at most $\left(1 + \frac{1}{2\delta}\right)|\conbadbar| + L$ vertices), that has at a length at of most $2|B \cup \conbadbar| \leq 2\left(1 + \frac{1}{2\delta}\right)|\conbadbar| + 2L$.

Thus, each possible set $\conbad$ can be constructed by such a walk, followed by marking each vertex visited by the walk as either part of $\conbadbar$ or not part of $\conbadbar$ (i.e., they came from $B$). As each vertex has degree $6$, this can be done in $6^{2\left(1 + \frac{1}{2\delta}\right)|\conbadbar| + 2L} 2^{\left(1 + \frac{1}{2\delta}\right)|\conbadbar| + L} = 72^{\left(1 + \frac{1}{2\delta}\right)|\conbadbar| + L} = $ ways. The contours of $\conbad$ are then defined by taking the connected components of the vertices $\conbadbar$ in $\regionL$, and assigning a color to each of the vertices. The color assignment can be done in at most $\numcolors^{|\conbadbar|}$ ways, giving the bound in the lemma by using the fact that $|\conbadbar| \leq \frac{2}{\costmatrixmin}\Hcost{\conbad}$ from \cref{lem:weightedcostratio}.
\end{proof}
Finally, to prove \cref{thm:main}, it suffices to show that for any $\alpha > 1$ and $\delta > 0$, there exists a sufficiently large $\betaminF = \betaminF(\alpha,\delta,\rhovec,\numcolors,\costmatrixmin,\costmatrixmax)$ and $\LminF = \LminF(\beta,\alpha,\delta,\rhovec,\numcolors,\costmatrixmin,\costmatrixmax)$ such that for all $\beta \geq \betaminF$ and $L \geq \LminF$, a configuration drawn from the distribution $\dtbnf$ on $\regionL$ contains a consistent set of contours $\confinal$ such that:
\begin{enumerate}
\item $\Hcostbar{\contourset} \leq \alpha \hamil(\optconfig)$,
\item For each component $V_k$ of the subgraph induced by $V(\regionL) \setminus \contoursetbar$ with label $j_k$, there are at most $\delta|V_k|$ particles of colors other than $j_k$,
\end{enumerate}
where $\optconfig$ denotes the configuration of $\CFGPf{L}{\nfixedvec}$ that minimizes $\hamil(\config)$ on he set of configurations $\CFGPf{L}{\nfixedvec}$. %

\begin{proof}[Proof of \cref{thm:main}]
We start by observing that by \cref{lem:bridgesystems}, we can construct a $\delta$-bridge system for configuration $\config \in \CFGPf{L}{\nfixedvec}$, which produces a consistent set of contours $\bdgcontd(\config)$ that satisfies the second condition. Thus, we only need to show that with high probability, the $\delta$-bridge system we construct satisfies $\Hcostbar{\bdgcontd(\config)} \leq \alpha \hamil(\optconfig)$.

We define $\alphaA = \frac{\alpha+1}{2} > 1$ and $\alphaB = \frac{2\alpha}{\alpha+1} > 1$, so that $\alphaA \alphaB = \alpha$. We use these values of $\alphaA$ and $\alphaB$ when applying Lemmas~\ref{lem:finalcontourbound} and~\ref{lem:approximatebestcontour} respectively.
As we know that $\Hcostbar{\congood} \leq \alphaB \hamil(\optconfign{\rhovec}) = \alphaB \hamil(\optconfig)$ by \cref{lem:approximatebestcontour}, a contour set $\confinal$ that satisfies $\Hcostbar{\confinal} \leq \alphaA \Hcostbar{\congood}$ will satisfy $\Hcostbar{\confinal} \leq \alpha \hamil(\optconfig)$.

As the contour set $\bdgcontd(\config)$ for any configuration $\config$ is always closed under contractibility in $\config$, we must have
$\{\config \in \CFGPf{L}{\nfixedvec} : \bdgcontd(\config) = \confinal\} \subseteq \CFGPcf{L}{\confinal}{\nfixedvec}$ for any contour set $\confinal$. 
We can thus bound the probability that the $\delta$-bridge system of a configuration drawn from the distribution $\dtbnf$ has cost greater than $\alpha \Hcostbar{\congood}$ as follows:
\begin{align*}
\dtbnf&\left(\left\{\config : \Hcostbar{\bdgcontd(\config)} > \alphaA \Hcostbar{\congood}\right\}\right)
\leq \sum_{k=0}^\infty \sum_{\substack{\confinal \in \bdgcontd(\CFGPf{L}{\nfixedvec}) \\ \Hcostbar{\confinal} \geq \alphaA\Hcostbar{\congood} + k \\ \Hcostbar{\confinal} \leq \alphaA \Hcostbar{\congood} + k+1}}
\dtbnf(\CFGPcf{L}{\confinal}{\nfixedvec}) \\
&\leq \sum_{k=0}^\infty |\{\confinal \in \bdgcontd(\CFGPf{L}{\nfixedvec}): \Hcostbar{\confinal} \leq \ceil{\alphaA \Hcostbar{\congood} + k + 1}\}| \cdot \sup_{\confinal: \Hcostbar{\confinal} \geq \alphaA \Hcostbar{\congood}+k}\frac{\ZPcf{L}{\confinal}{\nfixedvec}}{\ZPcf{L}{\congood}{\nfixedvec}}.
\end{align*}
Applying Lemmas~\ref{lem:finalcontourbound} and~\ref{lem:numbridgesystems}, we have:
\begin{align}
\label{eqn:energybound}
\sup_{\confinal: \Hcostbar{\confinal} \geq \alphaA \Hcostbar{\congood}+k} \frac{\ZPcf{L}{\confinal}{\nfixedvec}}{\ZPcf{L}{\congood}{\nfixedvec}}
&\leq \poly(L) \cdot \exp\left\{-\frac{\beta}{4} \left(\frac{\alphaA-1}{\alphaA}\right) \left(\alphaA \Hcostbar{\congood} +k\right)\right\},
\end{align}
and
\begin{align}
\label{eqn:entropybound}
|\{\confinal \in \bdgcontd(\CFGPf{L}{\nfixedvec}): \Hcostbar{\confinal} \leq \ceil{\alphaA \Hcostbar{\congood} + k + 1}\}|
&\leq 72^{\left(1 + \frac{1}{2\delta}\right)L} (72^{1 + \frac{1}{2\delta}}\numcolors)^{\frac{2}{\costmatrixmin}\left(\alphaA \Hcostbar{\congood} + k + 2\right)}.
\end{align}

Denote $\rhomin = \min_{i\in\colorset}\rho_i$ and $\rhobetamin = \min_{i\in\colorset}\rhobeta{i}$.
From \cref{lem:goodcontourapproximation}, we have $\rhobetamin \geq \rhomin - 2e^{-\taub \minlength/4}$. Applying Appendix~\ref{apx:allconstants} with the assumption that $\beta \geq \betamin$ gives us $\rhobetamin \geq \rhomin/2$, which means that by Lemmas~\ref{lem:weightedcostratio} and~\ref{lem:mincostconfig}, we can bound $L$ by $\Hcostbar{\congood}$ in \cref{eqn:entropybound} as follows:
\begin{align*}
\label{eqn:costLbound}
\alphaA \Hcostbar{\congood}
\geq \Hcostbar{\congood}
\geq \frac{\costmatrixmin}{2} |\congoodbar|
\geq \frac{\costmatrixmin}{2} \bdconstlow L \sqrt{\rhobetamin}
\geq \frac{\costmatrixmin}{2} \bdconstlow L \sqrt{\rhomin/2}.
\end{align*}
Now suppose that $\beta$ is large enough for the following statement to hold:
\begin{align*}
\ds 72^{\left(1 + \frac{1}{2\delta}\right) \frac{2}{\costmatrixmin \bdconstlow \sqrt{\rhomin/2}}} (72^{1 + \frac{1}{2\delta}}\numcolors)^{\frac{2}{\costmatrixmin}} < e^{\frac{\beta}{8} \left(\frac{\alphaA-1}{\alphaA}\right)}.
\end{align*}
Combining \cref{eqn:energybound} and~\cref{eqn:entropybound}, and assuming that $L$ is large enough (relative to $\beta$) for the polynomial to be absorbed by the exponential term, we can conclude the following, which we know from the opening statements of this proof implies our theorem:
\begin{align*}
\dtbnf&\left(\left\{\config : \Hcostbar{\bdgcontd(\config)} > \alphaA \Hcostbar{\congood}\right\}\right)
\leq e^{-\frac{\beta}{8}(\alphaA-1)\Hcostbar{\congood}}
= e^{-\frac{1}{32\sqrt{2}} \beta (\alpha-1) \costmatrixmin \bdconstlow \sqrt{\rhomin} \cdot L}. \qedhere
\end{align*}
\end{proof}

\bibliographystyle{alpha}
\bibliography{references.bib}

\appendix
\section{Interaction matrices of examples in Figure~\ref{fig:examples2}}
\label{apx:interactionmatrices}
This appendix contains the interaction matrices we used for the $9$-color examples in Figure~\ref{fig:examples2}.
\begin{multicols}{2}
Figure~\ref{fig:examples2}(a):
\begin{align*}
A_{\mathrm{2a}} = 2\begin{pmatrix}
0 & 1 & 2 & 2 & 4 & 4 & 4 & 4 & 4 \\
1 & 0 & 2 & 2 & 4 & 4 & 4 & 4 & 4 \\
2 & 2 & 0 & 1 & 4 & 4 & 4 & 4 & 4 \\
2 & 2 & 1 & 0 & 4 & 4 & 4 & 4 & 4 \\
4 & 4 & 4 & 4 & 0 & 1 & 2 & 2 & 4 \\
4 & 4 & 4 & 4 & 1 & 0 & 2 & 2 & 4 \\
4 & 4 & 4 & 4 & 2 & 2 & 0 & 1 & 4 \\
4 & 4 & 4 & 4 & 2 & 2 & 1 & 0 & 4 \\
4 & 4 & 4 & 4 & 4 & 4 & 4 & 4 & 0
\end{pmatrix}
\end{align*}
Figure~\ref{fig:examples2}(c):
\begin{align*}
A_{\mathrm{2c}} = ~~\begin{pmatrix}
0 & 1 & 3 & 3 & 2 & 2 & 2 & 2 & 3 \\
1 & 0 & 1 & 3 & 2 & 2 & 2 & 2 & 3 \\
3 & 1 & 0 & 1 & 2 & 2 & 2 & 2 & 3 \\
3 & 3 & 1 & 0 & 2 & 2 & 2 & 2 & 3 \\
2 & 2 & 2 & 2 & 0 & 1 & 3 & 3 & 3 \\
2 & 2 & 2 & 2 & 1 & 0 & 1 & 3 & 3 \\
2 & 2 & 2 & 2 & 3 & 1 & 0 & 1 & 3 \\
2 & 2 & 2 & 2 & 3 & 3 & 1 & 0 & 3 \\
3 & 3 & 3 & 3 & 3 & 3 & 3 & 3 & 0
\end{pmatrix}
\end{align*}
Figure~\ref{fig:examples2}(b):
\begin{align*}
A_{\mathrm{2b}} = ~~\begin{pmatrix}
0 & 1 & 2 & 3 & 4 & 4 & 4 & 4 & 2 \\
1 & 0 & 1 & 2 & 4 & 4 & 4 & 4 & 2 \\
2 & 1 & 0 & 1 & 4 & 4 & 4 & 4 & 2 \\
3 & 2 & 1 & 0 & 4 & 4 & 4 & 4 & 2 \\
4 & 4 & 4 & 4 & 0 & 1 & 2 & 3 & 2 \\
4 & 4 & 4 & 4 & 1 & 0 & 1 & 2 & 2 \\
4 & 4 & 4 & 4 & 2 & 1 & 0 & 1 & 2 \\
4 & 4 & 4 & 4 & 3 & 2 & 1 & 0 & 2 \\
2 & 2 & 2 & 2 & 2 & 2 & 2 & 2 & 0
\end{pmatrix}
\end{align*}
Figure~\ref{fig:examples2}(d):
\begin{align*}
A_{\mathrm{2d}} = \frac{2}{3}\begin{pmatrix}
0 & 2 & 2 & 2 & 2 & 2 & 2 & 2 & 5 \\
2 & 0 & 2 & 4 & 4 & 4 & 4 & 2 & 5 \\
2 & 2 & 0 & 2 & 4 & 4 & 4 & 4 & 5 \\
2 & 4 & 2 & 0 & 2 & 4 & 4 & 4 & 5 \\
2 & 4 & 4 & 2 & 0 & 2 & 4 & 4 & 5 \\
2 & 4 & 4 & 4 & 2 & 0 & 2 & 4 & 5 \\
2 & 4 & 4 & 4 & 4 & 2 & 0 & 2 & 5 \\
2 & 2 & 4 & 4 & 4 & 4 & 2 & 0 & 5 \\
5 & 5 & 5 & 5 & 5 & 5 & 5 & 5 & 0
\end{pmatrix}
\end{align*}
\end{multicols}

\section{Constants for inverse temperature and domain size}
\label{apx:allconstants}
In this appendix we specify the constants that we use as lower bounds for the inverse temperature $\beta$ and the domain side length $L$ throughout the proof.
Many of the bounds on these constants can be significantly tightened, but we opt to use loose bounds to make the lemma statements more readable.

Note that some of these statements rely on the constants $\bdconstlow$, $\Hconstupp$ from \cref{lem:mincostconfig}, as well as $\contourconst$ from \cref{lem:pointcontourcount}. While these lemmas are stated and proved later in the paper, they and the lemmas they depend on (\cref{lem:weightedcostratio}) do not rely on any other result in this paper, and they are only placed later in the paper for organizational purposes. We use $\rhomin$ to denote $\min_{i \in \colorset}\rho_i$, and the definition of $\beta'$ comes from \cref{lem:convergencecriterion}.

We define $\betamin = \betamin(\rhovec,\numcolors,\costmatrixmin,\costmatrixmax)$ to be a large enough constant such that for all $\beta \geq \betamin$, \cref{lem:balance} holds, and the following inequalities are satisfied:
\begin{itemize}
\item $\beta \geq \beta'(\contourconst,1)$, $\beta \geq \beta'(\contourconst,1/2)$
\hfill (Used in Lemmas~\ref{lem:balance}, \ref{lem:truncatedderivatives} and \ref{lem:boundaryerrorbound})
\item $\frac{6e^{\taub/8}}{(e^{\taub/8}-1)^2} \leq 1$
\hfill (Used in \cref{lem:boundaryerrorbound})
\item $e^{-\taub \minlength/4} < \frac{1}{q}$
\hfill (Used in \cref{lem:rhovecvalidity})
\item $2e^{-\taub \minlength/4} \leq \frac{1}{2} \rhomin$
\hfill (Used in Lemmas~\ref{lem:rhovecvalidity}~and~\ref{lem:goodineq} and \cref{thm:main})
\item $\hoptbound + \numcolors \leq \beta \costmatrixmax$
\hfill (Used in \cref{lem:smallchanges})
\item $77 \numcolors^{3/2} \costmatrixmax \leq e^{-\taub \minlength(1/20-1/16)}$
\hfill (Used in Lemmas~\ref{lem:regionfixedmag} and~\ref{lem:goodineq})
\item $\ds 2 + \frac{\sqrt{\numcolors}}{\bdconstlow \sqrt{\rhomin/2}} \leq e^{-\taub \minlength(1/24-1/20)}$
\hfill (Used in \cref{lem:goodineq})
\item $2\cdot \hoptbound \leq e^{-\taub \minlength/8}$
\hfill (Used in \cref{lem:grandineq})
\item $\ds \sqrt{2}e^{-\taub \minlength/2} < \frac{\costmatrixmin}{18\Hconstupp} \rhomin$
\hfill (Used in \cref{lem:approximatebestcontour})
\item $\ds  2e^{-\taub \minlength/2} \leq \min\left\{\frac{1}{6}, \frac{\costmatrixmin}{48\Hconstupp}\right\} \rhomin$
\hfill (Used in \cref{lem:approximatebestcontour})
\end{itemize}
For $\alphaA > 1$, we define $\betaminA(\alphaA) = \betaminA(\alphaA, \rhovec,\numcolors,\costmatrixmin,\costmatrixmax)$ such that for all $\beta \geq \betaminA(\alphaA)$, the following statements are true:
\begin{itemize}
\item $\beta \geq \frac{4}{\costmatrixmin} \cdot 13 e^{-\taub \minlength/8} \cdot \frac{2\alphaA}{\alphaA-1}$
\hfill (Used in \cref{lem:grandineq})
\item $\alphaA \frac{\costmatrixmin}{2} \cdot \frac{\beta}{4}\left(\frac{\alphaA-1}{\alphaA}\right) \geq \beta e^{-\taub \minlength/24}.$
\hfill (Used in \cref{lem:finalcontourbound})
\end{itemize}
For $\alphaB > 1$, we define $\betaminB(\alphaB) = \betaminB(\alphaB, \rhovec,\numcolors,\costmatrixmin,\costmatrixmax)$ such that for all $\beta \geq \betaminB(\alphaB)$, the following statements are true:
\begin{itemize}
\item $20 \numcolors e^{-\taub \minlength/2} \leq (\alphaB-1)\Hconstlow \sqrt{\rhomin}.$
\hfill (Used in \cref{lem:approximatebestcontour})
\end{itemize}
In addition, we define $\Lmin(\beta) = \Lmin(\beta,\costmatrixmin,\costmatrixmax)$ to be sufficiently large, such that for all $L \geq \Lmin(\beta)$, we have:
\begin{itemize}
\item $2\numparticles e^{-\taub L/8} \leq 1$
\hfill (Used in \cref{lem:boundaryerrorbound})
\item $\ds 3\numcolors/L^2 < \rhomin$
\hfill (Used in \cref{lem:approximatebestcontour})
\item $\sqrt{2}L e^{-\taub \minlength/2} + 1 \leq 2Le^{-\taub \minlength/2}$
\hfill (Used in \cref{lem:approximatebestcontour})
\end{itemize}

\end{document}